\def\plus#1#2{\vrule height#1pt width0pt depth#2pt}
\def\@{\hskip.8pt}
\def\?{\hskip.3pt}
\newcommand{\abs}[1] {|#1|}
\newcommand{\internal}{\rule{0.5em}{0.1mm}\rule{0.1mm}{0.7em}\ }
\newcommand{\rarw}[1]{\overset{#1}{\longrightarrow}}
\renewcommand{\geq}{\geqslant}
\def\a{\alpha}
\def\A{\mathcal{A}}
\def\b{\beta}
\def\B{\mathcal{B}}
\def\C{\mathcal{C}}
\def\Chi{\text{\lower-2pt\hbox{$\chi$}}}
\def\eps{\varepsilon}
\def\E{\mathcal E}
\def\F{\mathscr{F}}
\def\g{\gamma}
\def\G{\Gamma}
\def\H{\mathcal{H}}
\def\Ham{\mathscr{H}}
\def\I{\mathcal{I}}
\def\k{\kappa}
\def\l{\lambda}
\def\L{\mathcal{L}}
\def\Lagr{\mathscr{L}}
\def\LL{\h{\L\;}\hskip-.3em}
\def\HH{\h{\H\;}\hskip-.3em}
\def\narc#1{\text{\@ \scriptsize $(#1)$}}
\def\Ns{\text{\@ \tiny $(s)$}}
\def\NS{\text{\@ \tiny $(s+1)$}}
\def\ns{\narc{s}}
\def\nS{\narc{s+1}}
\def\P_#1{\mathcal P_{#1}\/}
\def\r{\varrho}
\def\R{\mathds{R}}
\def\s{\sigma}
\def\SS{\mathcal{S}}
\def\th{\vartheta}
\def\Th{\Theta}
\def\thL{\th_{\hskip-1.2pt \Lagr}}
\def\u{\upsilon}
\def\U{\Upsilon}
\def\vphi{\varphi}
\def\vs{\varsigma}
\def\V{\mathcal{V}_{n+1}}
\def\Vg{V\?(\g)}
\def\VV{\mathfrak V}
\def\X{\mathfrak X}
\def\W{\mathfrak W}
\def\w{\omega}
\def\z{\zeta}
\def\Z{\mathcal Z\/}
\def\h#1{\hat{#1}}
\def\j#1{j_1\/(#1)}
\def\p{\wp\/(\g)}
\def\Ag{A\@(\h\g)}
\def\Hg{\H\@(\h\g)}
\def\Vg{V\?(\g)}
\def\Wg{V\?(\h\g)}
\def\gxi{\g^{\phantom{|}}_{\?\xi}}
\def\hgxi{\h\g^{\phantom{|}}_{\?\xi}}
\def\atilde{\raise-.8ex\hbox to0pt{\tiny$\sim\hss$} \a{}}
\def\d#1/d#2{\frac{d\/#1}{d\/#2}}
\def\de#1/de#2{\frac{\partial\/#1}{\partial\/#2}}
\def\SD#1/de#2/de#3{\ifx#2 \frac{\plus02\partial^{\@\@2}#1}
    {\plus90\partial\@#3^{\@2}} \else\frac{\plus02\partial^{\@\@2}#1}
    {\partial\?#2\partial\?#3}\fi}
\def\D#1/D#2{\frac{D\/#1}{D\/#2}}
\def\De#1/de#2{\textstyle{\text{\Large$\de{#1}/de{#2}$}}}
\def\sD#1/de#2/de#3{\textstyle{\text{\Large$\SD{#1}/de{#2}/de{#3}$}}}
\def\sd#1/de#2/de#3{\ifx#2 \frac{\plus02\partial^{\@\@2}#1}{\plus70\partial\@#3^{\@2}} \else\frac{\plus02\partial^{\@\@2}#1}{\partial\?#2\partial\?#3}\fi}
\def\DE{\tilde\partial\?}
\def\rank{\operatorname{rank}}
\def\const{\text{const.}}
\theoremstyle{plain}
\newtheorem {remark} {Remark} [section]
\newtheorem {theorem}{Theorem}[section]
\newtheorem {proposition}{Proposition}[section]
\newtheorem {definition} {Definition} [section]
\newtheorem {corollary}{Corollary}[section]
\def\theequation{\thesection.\arabic{equation}}
\renewenvironment{subequations}{%
  \refstepcounter{equation}%
  \protected@edef\theparentequation{\theequation}%
  \setcounter{parentequation}{\value{equation}}%
  \setcounter{equation}{0}%
  \def\theequation{\theparentequation\hspace{1pt}\alph{equation}}%
  \ignorespaces
}{%
  \setcounter{equation}{\value{parentequation}}%
  \ignorespacesafterend
} \catcode`\@=12
\def\ALPH{\renewcommand{\labelenumi}{\alph{enumi})}}
\def\nn{\nonumber}
\def\Tondo{\par\vskip1pt\noindent$\bullet\;$ }
\def\Ref#1#2{\if#2)\ref{#1}#2\else\ref{#1}\@#2\fi}
\def\wg#1{\@\delta q^{#1}{}_{|\g}}
\def\wgs#1{\@\delta q^{#1}{\!\@}_{|\g^{(s)}}}
\def\BIG#1_#2{\Big#1_{\!\@\lower-1pt\hbox{$\scriptstyle#2$}}}
\def\BIGG#1_#2{\bigg#1_{\!\lower-1.4pt\hbox{$\scriptstyle#2$}}}
\def\BIGGR#1_#2{\biggr#1_{\!\lower-1.4pt\hbox{$\scriptstyle#2$}}}
\def\RIGHT#1_#2{\right#1_{\!\lower-1.4pt\hbox{$\scriptstyle#2$}}}
\def\Eps#1{\eps_{\lower1pt\hbox{$\scriptstyle #1$}}}
\def\e#1#2#3#4{\ifx#1_e_{(#2)}^{\;\,#4}\else e^{(#2)}_{\;#4}\fi}
\def\arc#1#2{\big(\@#1,[#2]\@\big)}
\begin{document}

\title[Geometric constrained variational calculus. I.]{Geometric Constrained Variational Calculus.\\ I. - Piecewise smooth extremals.}

\author{E. Massa}
\address{Dipartimento di Matematica -- Universit\`a di Genova\\
Via Dodecaneso, $35$ -- $16146$ Genova (Italia)}
\email{massa\@@\@dima.unige.it}

\author{D. Bruno}
\address{Department of Advanced Robotics -- Istituto Italiano di Tecnologia \\
Via Morego, $30$ - $16163$ Genova (Italia)}
\email{danilo.bruno\@@\@iit.it}

\author{G.Luria}
\address{DIPTEM Sez. Metodi e Modelli Matematici -- Universit\`a di Genova\\
         Piazzale Kennedy, Pad. D -- $16129$ Genova (Italia)}
\email{luria\@@\@dime.unige.it}

\author{E. Pagani}
\address{Dipartimento di Matematica -- Universit\`a di Trento \\
       Via Sommarive, $14$ - $38050$ Povo di Trento (Italia)}
\email{pagani\@@\@science.unitn.it}

\begin{abstract}
A geometric setup for constrained variational calculus is presented. The analysis deals with the study of the extremals of an action functional defined on
piecewise differentiable curves, subject to differentiable, non--holonomic constraints. Special attention is paid to the tensorial aspects of the theory. As
far as the kinematical foundations are concerned, a fully covariant scheme is developed through the introduction of the concept of {\em infinitesimal
control\/}. The standard classification of the extremals into \emph{normal\/} and \emph{abnormal\/} ones is discussed, pointing out the existence of an
algebraic algorithm assigning to each admissible curve a corresponding \emph{abnormality index\/}, related to the co--rank of a suitable linear map. Attention
is then shifted to the study of the first variation of the action functional. The analysis includes a revisitation of Pontryagin's equations and of the
Lagrange multipliers method, as well as a reformulation of Pontryagin's algorithm in hamiltonian terms. The analysis is completed by a general result,
concerning the existence of finite deformations with fixed endpoints.
\end{abstract}

\subjclass[2010]{49J,70F25,37J}
\keywords{Constrained calculus of variations, minimality.}

\maketitle

\section{Introduction}
Calculus of variations has a very old origin, dating back to the pioneering contributions of Euler, Lagrange and Weierstrass\@\?%
\footnote%
{Many good texts provide a modern exposition of the classical theory. Among them, we point out those by Bolza \cite{Bolza}, Tonelli \cite{Tonelli},
Caratheodory \cite{Caratheodory}, Bliss \cite{Bliss}, Morse \cite{Morse}, Lanczos \cite{Lanczos}, Gelfand-Fomin \cite{Gelfand}, Rund \cite{Rund},
Giaquinta-Hildebrandt \cite{Giaquinta} and Sagan \cite{Sagan}.}.

The 20th century witnessed the development of the classical theory towards two different main directions: on the one hand, the curves on which a given
functional is defined, and is required to be minimized, were imposed to satisfy a set of differential (or non--holonomic) constraints. This topic took the name
of \emph{constrained calculus of variations\/}. On the other hand, mathematicians accounted for the possibility, more directly related to \emph{optimization
theory\/}, of considering also unilateral constraints, described by \emph{inequalities\/}.

Both aspects were subsequently merged into \emph{control theory\/}. Here, an important role is played by the celebrated Pontryagin Maximum Principle, which
provides the necessary conditions that must be satisfied by a minimizing solution of the given functional. In this connection, besides the original
contributions of Pontryagin \cite{Pontryagin}, more recent developments have been given by Bellman \cite{Bellman}, Hestenes \cite{Hestenes}, Cesari
\cite{Cesari}, Alekscseev, Tikhomirov and Fomin \cite{Alekscseev-Tikhomirov-Fomin}, Schattler and Ledzewicz \cite{Schattler-Ledzewicz}.

Recently, Griffiths \cite{Griffiths}, Hsu \cite{Hsu}, Sussmann and his school \cite{Sussmann}, Sussmann and Liu \cite{SL}, Montgomery \cite{Montgomery},
Agrachev and his school \cite{Agrachev}, Gracia
\cite{Gracia}, Munoz--Lecanda and his school \cite{Barbero-Linan1,Barbero-Linan2}, Jimenez, Yoshimura, Ibort, de Leon and Martin de Diego
\cite{Ibort-Pena-Salmoni, deLeon-Jimenez-deDiego, Jimenez-Yoshimura} have significantly contributed to develop differential geometric approaches to
constrained calculus of variations and to control theory.

\medskip
This paper deals with constrained variational calculus in the sense defined above.
We present a fresh geometric approach to the subject, extending known results and suggesting new ones. Attention is focussed on the construction of a fully
covariant scheme, suited to the study of arbitrary non--linear constraints and embodying the possibility of \emph{piecewise differentiable}
extremals. The geometric setup and terminology are borrowed from non--holonomic mechanics \mbox{\cite{MP2,MVB}}.

Throughout the discussion we shall deal with \emph{parameterized curves\/}, namely curves whose parameterization is fixed once for all, up to an additive
constant; geometrically, this means dealing with sections of a fiber bundle, called the \emph{event space\/}.
In the resulting context, the constraints are described by an embedded submanifold of the first jet--bundle of the event space,
intuitively representing the totality of {\em admissible velocities\/}.

The study of the extremals is performed using the standard techniques of variational calculus, namely discussing the zeroes of the first variation of the
action functional with respect to deformations with fixed endpoints. Old though the whole argument may seem, some questions are still unsettled: among others,
the geometrical nature of the concept of {\em normality\/} of a section; the relation between normality and deformability;
the creation of fully covariant scheme embodying the kinematical and dynamical aspects of the theory.

The arguments are organized as follows: in Sec. \!2 the geometric tools needed in the subsequent discussion are presented. After a few preliminary remarks, the
infinitesimal deformations of an admissible section $\@\g\@$ of the event space are discussed via a revisitation of the well known \emph{variational
equation\/}. The novelty consists in the introduction of a suitable transport law for vertical vector fields along $\@\g\@$, yielding a covariant
characterization of the ``true'' degrees of freedom involved in the description of the most general admissible infinitesimal deformation.

The analysis is then extended to arbitrary \emph{piecewise differentiable} evolutions and to the associated class of infinitesimal deformations. The admissible
evolutions are classified into \emph{ordinary\/}, if every admissible infinitesimal deformation vanishing at the endpoints is tangent to some finite
deformations with fixed endpoints, and \emph{exceptional\/} in the opposite case.

Along the same guidelines, every piecewise differentiable evolution is assigned a corresponding \emph{abnormality index\/}, extending and expressing in
geometrical terms the traditional attributes of normality and abnormality commonly found in the literature for regular evolutions.

A noteworthy result, established in Appendix A, is the existence of a strict relationship between abnormality index and ordinariness, leading to the conclusion
that all \emph{normal\/} evolutions are automatically ordinary. This extends to arbitrary non--linear constraints and to piecewise differentiable sections a
result proved by Hsu \cite{Hsu} in a linear context.

After these preliminaries, in Sec. \!3, attention is focussed on the determination of the broken extremals --- or extremaloids --- of the action functional.
Once again, a fully covariant approach is proposed, yielding back Pontryagin's equations \cite{Giaquinta,Pontryagin} as well as the Erdmann--Weierstrass corner
conditions \cite{Giaquinta,Sagan}.

The resulting equations are shown to be sufficient for \emph{any\/} evolution, and  necessary and sufficient for an \emph{ordinary\/} evolution to be an
extremaloid. The same setup is seen to provide a concise approach to the Lagrange multipliers method.

A deeper insight into the nature of the problem is then gained lifting the algorithm to a more natural environment, here called the \emph{contact bundle\/},
gluing together the velocity space and the phase space. This allows recovering the ordinary extremaloids of the original variational problem as {\em
projections\/} of the extremaloids of a free variational problem on the contact bundle.

The same environment is also seen to provide an intrinsic characterization of the abnormal evolutions in terms of the geometric properties of the contact
bundle,  as well as a proof of the fact that, under suitable regularity assumptions, the original constrained variational problem is locally equivalent to a
free hamiltonian problem in phase space.

\section{Geometric setup}\label{Sec2}
In this Section we present a brief review of the geometric tools involved in the subsequent discussion. Throughout the paper, we shall freely use the language
and methods of  differential geometry \cite{Sternberg,Warner}. The terminology will be partly borrowed from classical non--holonomic Mechanics
\cite{MP2,Saunders, Pom,DeLeon}.

\subsection{Preliminaries}\label{Sec2.1}
Let $\@\V\rarw{t}\R\@$ denote a fibre bundle over the real line, henceforth called the \emph{event space\/}, and referred to local fibred coordinates
$\@t,q^1,\dots,q^n$\vspace{1pt}.

Every section $\@\g\colon\R\to\V\,$ is interpreted as the evolution, parameterized in terms of the independent variable $\/t\/$, of an abstract system $\@\B\/$
with a finite number of degrees of freedom. The first jet--bundle $\@\j\V\rarw{\pi}\V\@$, referred to local jet--coordinates $\@t,q^i,\dot q^i\/$, is called
the \emph{velocity space\/}. The first jet--extension of $\@\g\?$ is denoted by $\@\j\g\colon\R\to\j\V\@$\vspace{1pt}.

The presence of differentiable constraints is accounted for by a commutative diagram of the form
\begin{equation}\label{2.1}
\begin{CD}
\A          @>i>>       \j\V            \\
@V{\pi}VV                @VV{\pi}V      \\
\V          @=          \V
\end{CD}
\end{equation}
where the submanifold $\@\A\rarw{i\;}\j\V\,$, fibred over $\@\V\@$, represents the totality of \emph{admissible velocities\/}.
The manifold $\@\A\@$ is referred to local fibred coordinates $\@t, q^1,\ldots, q^n, z^1, \ldots, z^r\@$ with transformation laws
\begin{equation*}
\bar{t} = t + c\,,\quad\;\bar{q}\@^i=\bar{q}\@^i\/(t,q^1,\ldots,q^n)\,,\quad\;\bar{z}\@^A=\bar{z}\@^A\/(t,q^1,\ldots,q^n,z^1,\ldots,z^r)\,,
\end{equation*}
while the imbedding $\@i: \A \to j_1(\V)\@$ is locally expressed as
\begin{equation}\label{2.2}
\dot q\@^i = \psi\?^i\/(t,q^1,\ldots,q^n,z^1,\ldots,z^r) \qquad i=1,\ldots,n\@,
\end{equation}
with $\@\rank\Big\|\de\?(\psi^1\@\cdots\,\psi^n)/de{\?(z^1\@\cdots\,z^r)}\Big\|=r\@$.\vspace{1pt}
To make it easier, no distinction is made between $\@\A\@$ and its image\linebreak $\@i(\A)~\subset~j_1(\V)\@$.

\smallskip
An evolution $\@\g:\R\to\V\@$ is called \emph{admissible} if and only if its first jet extension $\@\j\g\@$ factors through $\@\A\@$, i.e.~if and only if there
exists a section
$\@\h\g:\R\to\A\,$ satisfying $\@\j\g=i\cdot\h\g\@$. In the stated circumstance, the section $\@\h\g\@$, locally described as $\@q^i =q^i\/(t),
z^A=z^A\/(t)\,$, is itself called an admissible section of $\@\A\@$, the admissibility requirement taking the explicit form
\begin{equation}\label{2.3}
\d q^i/d t \,=\, \psi^i\left(t,q^1\/(t),\ldots,q^n\/(t),z^1\/(t),\ldots,z^r\/(t)\right)\@.\vspace{2pt}
\end{equation}
With this terminology, the admissible sections of $\@\A\@$ are in bijective correspondence with the admissible evolutions of the system. We shall refer to
$\h\g\@$ as to the \emph{lift\/} of $\@\g\@$.

On account of Eq.~(\ref{2.3}), the admissible evolutions of the system are determined, up to initial data, by the knowledge of the functions $\?z^A\/(t)\@$.
In general, however, these functions have no invariant geometrical meaning\footnote%
{\@An obvious exception occurs whenever the manifold $\@\A\@$ splits into a fibred product $\@\V\times_\R\Z\,$, the second factor being some fiber bundle over
$\@\R\@$.}.

Perfectly significant tools for the description of possible evolutions of the system are instead the \emph{sections\/} $\@\s:\V\to\A\,$. Each of them is called
a \emph{control\/} for the system; the composite map $\@i\cdot\s:\V\to\j\V\/$ is called an \emph{admissible velocity field\/}.
In local coordinates we have the representations
\begin{equation*}
\begin{alignedat}{3}
&\,\s\;&&:\quad z^A\,&&=\,z^A\/(\?t\@,q^1,\ldots,q^n\?)\@, \\[1pt]
&i\cdot\s&&:\quad \dot q\@^i&&=\, \psi\@^i\/(\?t\@,q^1,\ldots,q^n,z^A\/(\?t\@,q^1,\ldots,q^n\?)\/)\@,
\end{alignedat}
\end{equation*}
pointing out that the knowledge of $\/\s\@$ does indeed determine the evolution $\@\g\/(t)\@$ from a given initial configuration $\@\g\/(t_0)\/$ through a
well--posed Cauchy problem.

A control $\/\s:\V\to\A\@$ is said to \emph{contain\/} a section $\@\g:\R\to\V\,$ if and only if the lift $\@\h\g:\R\to\A\@$\linebreak 
factors into $\@\h\g=\s\cdot\g\@$,
i.e.~if and only if the jet extension $\j\g\@$ coincides with the composite map \linebreak
$\,i\cdot\s\cdot\g:\R\to\j\V\@$.

\smallskip
\subsection{Geometry of the velocity space}\label{Sec2.2}
\textbf{(\/i\/)} \,Given the event space $\@\V\@$, we denote by $\@V\/(\V)\@$\vspace{.4pt} the vertical bundle associated with the fibration $\@\V\to\R\@$ and
by $\@V^*\/(\V)\@$ the corresponding dual bundle. The elements of $V^*\/(\V)\@$ are called the \emph{virtual $1$--forms\/} over $\V\@$.

By definition, $\@V^*\/(\V)\@$ is canonically isomorphic to the quotient of the cotangent bundle $T^*\/(\V)$ by the equivalence relation
\begin{equation}\label{2.4}
\s\sim\s'\;\Longleftrightarrow\;\left\{
\begin{aligned}
 & \pi\/(\s)=\pi\/(\s')\\
 & \s-\s'\;\propto\; d\/t_{\,|\pi\/(\s)}
\end{aligned}
\right.\@.
\end{equation}

\vskip1pt
The quotient map $T^*\/(\V)\to V^*\/(\V)$ is denoted by $\@\varpi\@$. For each function  $g\in\F\?(\V)$, the section $\delta\?g\colon\V\to V^*\/(\V)$ given by
$\delta\?g_{|x}:=\varpi\/(d\/g_{|x})\@$ is called the \emph{virtual differential\/} of $\@g\@$.

Every local coordinate system $t,q^i\/$ in $\V\@$ induces fibred coordinates $t,q^i,p_{\?i}\@$ in $\@V^*\/(\V)\,$, uniquely defined by the requirement
$\@\h\l=p_i\/(\h\l)\@\?\delta\?q^i_{\@|x}\;\,\forall\,\h\l\in V_x^*(\V)\@$.

Every element belonging to the tensor algebra generated by $\@V\/(\V)\@$ and $\@V^*\/(\V)\@$ is called a \emph{virtual tensor\/} over $\@\V\@$. We preserve the
notation $\@\langle\;\;,\;\,\rangle\@$ for the pairing between $\@V(\V)\@$ and $\@V^*(\V)\@$.

\medskip \noindent
\textbf{(\/ii\/)} \,The fibred product $\@\C\?(\A):=\A\times_{\V} V^*\/(\V)$ is called the {\it contact bundle\/}: it is at the same time a vector bundle over
$\/\A\@$, isomorphic to the subbundle of $\@T^*\/(\A)\@$ locally spanned by the $1$--forms $\@\w^i:=d\/q^i-\psi^i\/d\/t\@$, and a fibre bundle over
$\@V^*\/(\V)\@$.\vspace{-1pt} The situation is summarized into the commutative diagram
\begin{equation}\label{2.5}
\begin{CD}
\C\/(\A )         @>\k>>          V^*\/(\V)        \\
@V{\z}VV                          @VV\pi V         \\
\A                @>\pi>>           \V
\end{CD}
\end{equation}

We refer $\@\C\?(\A)\?$ to coordinates $\@t,q^i,z^A,p_{\?i}\@$ gluing in an obvious way the coordinates in the factor spaces. Every~$\@\s\in\C\/(\A)\@$ is
called a \emph{contact $1$--form\/} over $\@\A\@$.

As a subbundle of the cotangent space $\@T^*\/(\A)\@$, the manifold $\@\C\/(\A)\@$ is naturally endowed with a \emph{Liouville $1$--form\/} $\@\Th\@$, uniquely
defined by the relation
\begin{equation*}
\big<\@Z\@,\@\Th\@\big>\,=\,\big<(\z_{\vert \s})_*\/(Z)\@,\@\s\@\big>\qquad\;\forall\;Z\in T_\s\/(\C\/(\A))
\end{equation*}
and expressed in coordinates as
\footnote%
{For simplicity, we do not distinguish between covariant objects in $\@\A\@$ and their pull--back in $\@\C\/(\A)\@$; namely, we write $\@\psi^i\@$ for
$\@\z^*\/(\psi^i)\@,\,\@\w^i\@$ for $\@\z^*\/(\w^i)\@$ etc.}
\begin{equation}\label{2.6}
\Th\,=\,p_{\?i}\,\?\w^i\,=\,p_{\?i}\big(\/d\/q^{\?i}-\@\psi^i\@d\/t\/\big)\@.
\end{equation}

\smallskip\noindent
\textbf{(\/iii\/)} \,A vector field $\@\G\@$ over $\@\A\@$ is called a \emph{semispray\/} if and only if its integral curves are jet--extensions of admissible
sections $\@\g:\R\to\V\@$. In coordinates, this property is summarized into the representation $\@\G=\de/de t\@+\@\psi^{\@i}\@\de/de{q^i}\@
+\@\G^A\@\de/de{z^A}\@$.\vspace{1pt}

According to the latter, given any covariant tensor field $\@w\@$ on $\@\V\@$, the Lie derivative $\@\L\@_\G\/(\pi^*\/(w))\@$ is a covariant field on $\@\A\@$,
depending only on $\@w\@$ and \emph{not\/} on the choice of the semispray~$\G$.

The correspondence $\@w\mapsto\L\@_\G\/(\pi^*\/(w))\@$, essentially identical to Tulczyjew's operator $d_T$ \cite{Tulczyjew1,Tulczyjew3,Tulczyjew4}, is here
denoted by $\@\d/dt\@$ and called the \emph{symbolic time derivative\/}: by construction, it is a derivation of the covariant tensor algebra over $\@\V\@$ with
values in the covariant algebra over $\@\A\@$, uniquely characterized by the properties
\begin{subequations}\label{2.7}
\begin{align}
  &\d\/f/dt\,=\,\L\@_\G\/\big(\pi^*\/(f)\big)\,=\,\de f/de t +\psi^k\@\de f/de{q^k}\,:=\,\dot f\@,                \\[3pt]
  &\d/dt\big(d\?f\big)\,=\,\L\@_\G\/\big(\pi^*\/(d\?f)\big)\,=\,d\@\L\@_\G\big(\pi^*\/(f)\big)\,=\,d\?\dot f\@.
\end{align}
\end{subequations}

\medskip\noindent
\textbf{(\/iv\/)} \,The vertical bundle associated with the fibration $\@\A\rarw{\pi}\V\@$\vspace{.75pt} is denoted by $\@V\/(\A)\,$. Each
$\@Y=Y^A\big(\de/de{z^A}\@\big)_{\!z}\in V_z\/(\A)\@$\vspace{1pt} induces an action on the ring of differentiable functions over $\@\V\@$, based on the
prescription
\begin{equation*}
\r\/(Y)\/(f):=Y\/(\dot f)\,=\,Y^A\biggl(\de\psi^i/de{z^A}\@\biggr)_{\!z}\@\biggl(\de f/de{q^i}\biggr)_{\pi\/(z)}\qquad\forall\,f\in \F\/(\V)\@.
\end{equation*}

According to the latter, $\@\r\/(Y)\@$ is a derivation of $\@\F\/(\V)\@$, i.e.~a (vertical) vector on $\@\V\@$. We have therefore a fibred homomorphism
\begin{equation*}
\begin{CD}
V\/(\A)         @>\r>>            V\/(\V)          \\
@VVV                               @VVV            \\
\A              @>\pi>>             \V
\end{CD}
\end{equation*}
\\[1pt]
expressed in components as
\begin{equation}\label{2.8}
\r\@\biggl(\de/de{z^A}\@\biggr)_{\!z}\@=\@\biggl(\de\psi^i/de{z^A}\@\biggr)_{\!z}\@\biggl(\de/de{q^i}\biggr)_{\pi\/(z)}\@.
\end{equation}

\smallskip
\subsection{Vector bundles along sections}\label{Sec2.3}
\textbf{(\/i\/)} \,Given an admissible section $\@\g\colon\R\to\V\@$, we denote by $\@\Vg\rarw{t}\R\@$ the bundle of vertical vectors along $\@\g\@$, by
$\@\Wg\rarw{t}\R\@$ the restriction of $\@V\/(\A)\,$ to the lift $\@\h\g\colon\R\to\A\@$, and by $\@\Ag\rarw{t}\R\@$ the bundle of \emph{isochronous\/} vectors
along $\@\h\g\@$, meant as vectors annihilating the $1$--form $\@d\/t\@$.

\vspace{1pt}
We adopt fibred coordinates $\?t,u^i\@$ in $\@\Vg\@$, $\?t,v^A\@$ in $\@\Wg\@$ and $\?t,u^i\!,v^A\@$ in $\@\Ag\@$, according to the prescriptions
\begin{subequations}\label{2.9}
\begin{alignat}{2}
 & X\,=\,u^i\/(X)\bigg(\de/de{q^i}\bigg)_{\!\g\/(t\/(X))}         &&\forall\;X\in\Vg\,,\hskip.6cm                                   \\[4pt]
 & Y\,=\,v^A\/(Y)\bigg(\de/de{z^A}\bigg)_{\!\h\g\/(t\/(Y))}       &&\forall\;Y\in \Wg\,,\hskip.6cm                                  \\[4pt]
 & \h X\,=\,u^i\/(\h X)\bigg(\de/de{q^i}\bigg)_{\!\h\g\/(t\/(\h X))}\!\!+\,v^A\/(\h X)\bigg(\de/de{z^A}\bigg)_{\!\h\g\/(t\/(\h X))}
\qquad\;&&\forall\;\h X\in\Ag\,.
\end{alignat}
\end{subequations}

The symbolic time derivative defined in Sec.\! 2.2 induces a derivation --- still denoted by $\@\d/dt\@$ --- of the algebra of covariant tensor fields along
$\@\g\@$ into the algebra of covariant tensor fields along $\@\h\g\@$.

On account of Eqs.\;(\ref{2.7}\@a, b), the effect of $\@\d/dt\@$ on a function $\@f\/(t)\@$ coincides with the ordinary derivative, while on a $1$--form
$\@w=w_0\/(t)\@(d\/t)_{\g}+w_i\/(t)\@(d\/q^i)_{\g}\@$ it reads
\begin{equation}\label{2.10}
\d w/dt\,=\,\biggl[\d w_0/dt\,d\/t\,+\,\d w_i/dt\@d\/q^i\,+\,w_i\@\biggl(\de\psi^i/de t\,d\/t\,+\,\de\psi^i/de{q^k}\,d\/q^k\,+\,
\de\psi^i/de{z^A}\,d\/z^A\biggr)\@\biggr]_{\h\g}\@.
\end{equation}

In a similar way, the correspondence (\ref{2.8}) determines an injective homomorphism $\@\Wg\xrightarrow{\h\r}\Vg\@$, expressed in components as
\begin{equation}\label{2.11}
\h\r\@\bigg(\de/de{z^A}\bigg)_{\!\h\g}\,=\@\bigg(\de\psi\?^i/de{z^A}\bigg)_{\!\h\g}\@\bigg(\de/de{q^i}\bigg)_{\!\g}\@.
\end{equation}

\smallskip\noindent
\textbf{(\/ii\/)} \,A straightforward argument, left to the reader, ensures that the first jet--bundle $\@\j\Vg\@$ is canonically diffeomorphic to the space of
isochronous vectors along $\@\j\g\@$. In jet coordinates, this results into the identification
\begin{equation*}
Z\,=\,u^i\/(Z)\bigg(\de/de{q^i}\bigg)_{\!\j\g\/(t\/(Z))}+\;\dot u^i\/(Z)\bigg(\de/de{\dot q^i}\bigg)_{\!\j\g\/(t\/(Z))}
\qquad\;\forall\;Z\in\j\Vg\,.
\end{equation*}

Due to the latter, the push--forward of the imbedding (\ref{2.2}), restricted to the submanifold $\@\Ag\subset T\/(\A)\@$, determines a vector bundle
homomorphism
\begin{subequations}\label{2.12}
\begin{equation}
\begin{CD}
\Ag          @>i_*>>       \j\Vg            \\
@V{\pi_*}VV                @VV{\pi_*}V      \\
\Vg\;\;      @=             \Vg
\end{CD}\vspace{5pt}
\end{equation}
expressed in coordinates as\vspace{5pt}
\begin{equation}
\dot u^i=\left(\de\psi^i/de{q^k}\right)_{\!\h\g}\@u^k + \left(\de\psi^i/de{z^A}\right)_{\!\h\g}\@v^A\@.
\end{equation}
\end{subequations}
The kernel of the projection $\@\Ag\rarw{\pi_*}\Vg\@$ is easily recognized to coincide with the vertical bundle $\@\Wg\@$.

\medskip\noindent
\textbf{(\/iii\/)} \,The restriction of the space $\@V^*\/(\V)\@$ to the curve $\@\g\@$ determines a vector bundle $\@V^*\/(\g)\rarw{t}\R\@$, dual to the
vertical bundle $\@\Vg\@$.

\vspace{1pt}
The elements of $V^*\/(\g)\@$ are called the \emph{virtual $1$--forms\/} along $\g\@$. We recall that, by definition, they are not $1$--forms in the ordinary
sense, but equivalence classes of $\@1$--forms under the relation (\ref{2.4}).

The elements of the tensor algebra generated by $\@\Vg\@$ and $\@V^*\/(\g)\@$ are called the \emph{virtual tensors\/} 
\linebreak 
along $\@\g\@$.

Preserving the notation $\@\varpi:T^*_\g\/(\V)\to V^*\/(\g)\@$ for the quotient map sending each (ordinary) $1$--form $\@\l\@$ along $\@\g\@$ into the
corresponding equivalence class and $\@\delta\?q^i$~for the image $\@\varpi\/(d\/q^i)\@$, every virtual tensor field $\@w\colon\R\to\Vg\otimes_\R
V^*\/(\g)\otimes_\R\cdots\,$ is locally represented as
\begin{equation*}
w\@=\,w^{\@i}{}_{j\@\cdots}\/(t)\,\bigg(\de/de{q^i}\bigg)_{\!\g}\!\otimes\wg j\otimes\cdots\@.
\end{equation*}

Every local coordinate system $t,q^i\/$ in $\V\@$\vspace{1pt} induces fibred coordinates $t,p_i\@$ in $\@V^*\/(\g)\@$, with 
\linebreak
$\@p_i\/(\h\l):=\big<\big(\de/de{q^i}\big)_\g\@,\@\h\l\big>\@$ $\,\forall\,\h\l\in V^*\/(\g)\@$.

\smallskip
\subsection{Deformations}\label{Sec2.4}
A deformation $\g\?_\xi\colon\R\to\V\@$ of an admissible section $\@\g\@$ --- and, likewise, a deformation $\@\h\g\?_\xi\colon\R\to\A\@$ of the lift $\@\h\g\@$
--- are called \emph{admissible\/} if and only if all sections $\@\g\?_\xi\@,\@\h\g\?_\xi\@$ are admissible, i.e.~if and only if they satisfy the condition
$\j{\g\?_\xi}=i\cdot\h\g\?_\xi\@$.

In coordinates, assuming the representation $\@\h\g:\@q^i=q^i\/(t)\,,\,z^A=z^A\/(t)\,$, the admissible deformations of $\?\h\g\@$ are described by equations of
the form
\begin{equation*}
\h\g\?_\xi\@:\qquad q^i=\vphi^i\/(\xi,t)\,,\quad z^A=\z^A\/(\xi,t)\@,
\end{equation*}
subject to the conditions
\begin{subequations}\label{2.13}
\begin{align}
 & \vphi^i\/(0,t)\@=\@q^i\/(t)\,,\quad \z^A\/(0,t)\@=\@z^A\/(t)\@,                  \\[3pt]
 &\de\vphi^i/de t\,=\,\psi^i\/\big(t,\vphi^i\/(\xi,t),\z^A\/(\xi,t)\big)\@.
\end{align}
\end{subequations}

\smallskip
Setting $\@X^i\/(t):=\big(\de\@\vphi^i/de{\?\xi\;}\big)_{\xi=0}\,\@,\,X^A\/(t):=\big(\de\@\z\/^A/de{\?\xi\;}\big)_{\xi=0}\,$\vspace{1pt}, the infinitesimal
deformation tangent to $\@\h\g\?_\xi\,$ is the section $\@\h X\colon\R\to\A\?(\h\g)\@$ locally expressed as
\begin{equation*}
\h X\,=\,X^i\/(t)\left(\de/de{q^i}\right)_{\!\h\g}\,+\,X^A\/(t)\left(\de/de{z^A}\right)_{\!\h\g}\@,
\end{equation*}
while the admissibility condition (\ref{2.13}\@b) is reflected into the \emph{variational equation\/}
\begin{equation}\label{2.14}
\d X^i/d t\,=\,\left(\de\@\psi^i/de{q^k}\right)_{\!\h\g}X^k\,+\,\left(\de\@\psi^i/de{z^A}\right)_{\!\h\g}X^A\@.
\end{equation}

The infinitesimal deformation tangent to the projection $\@\g\?_\xi=\pi\cdot\h\g\?_\xi\,$ is similarly identified with the section $\@X\colon\R\to\Vg\@$
locally expressed by
\begin{equation*}
X=\pi_*\@\h X\,=\,X^i\/(t)\@\left(\de/de{q^i}\right)_{\!\h\g}\@.
\end{equation*}
Collecting all results and recalling Eq.\;(\ref{2.12}b) we conclude
\begin{proposition}\label{Pro2.1}
Let $\@\h\g:\R\to\A\,$ be the lift of an admissible evolution $\@\g:\R\to\V$. Then, a section $\?X:\R\to\Vg\?$ represents an admissible infinitesimal
deformation of $\@\g\@$ if and only if its first jet extension factors through $\@\Ag\@$, i.e.~if and only if there exists a section $\@\h X :\R\to\Ag\,$
satisfying $\@\j X=i_*\/\h X\/$.
\\
Conversely, a section $\@\h X:\R\to\Ag\@$ represents an admissible infinitesimal deformation of $\@\h\g\@$ if and only if it projects into an admissible
infinitesimal deformation~of $\@\g\@$, i.e.~if and only if $\,i_*\h X=\j{\pi_*\@\h X}\@$.
\end{proposition}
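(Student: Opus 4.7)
The plan is to reduce both statements, at the level of coordinates, to the variational equation \eqref{2.14}, which in view of \eqref{2.12} is exactly the component form of the tensorial relation $i_*\h X=\j{\pi_*\h X}$. The substance of the proposition is then the promotion of an infinitesimal datum to a genuine finite admissible deformation, a purely ODE--theoretic matter.

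Necessity in both parts is immediate differentiation. If $\h\g_\xi\colon q^i=\vphi^i(\xi,t),\,z^A=\z^A(\xi,t)$ is an admissible deformation of $\h\g$, differentiating the admissibility condition (second of equations \eqref{2.13}) with respect to $\xi$ at $\xi=0$ reproduces \eqref{2.14} for $X^i=(\partial\vphi^i/\partial\xi)_{\xi=0}$ and $X^A=(\partial\z^A/\partial\xi)_{\xi=0}$; these being the components of the tangent $\h X$ and of its projection $X=\pi_*\h X$, in view of \eqref{2.12} the resulting identity reads $i_*\h X=\j X$. This settles the ``only if'' part of the second statement. For the first, given only an admissible family $\g_\xi$, the rank hypothesis on $\|\partial\psi^i/\partial z^A\|$ together with the implicit function theorem provides, at least locally in $\xi$, a unique smooth lift $\h\g_\xi$ with $\h\g_0=\h\g$; the preceding argument then supplies the required $\h X$.

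For sufficiency, suppose $X^i(t)$ and $X^A(t)$ satisfy \eqref{2.14}. I would choose any smooth family $\z^A(\xi,t)$ with $\z^A(0,t)=z^A(t)$ and $(\partial\z^A/\partial\xi)_{\xi=0}=X^A(t)$ --- for instance $\z^A(\xi,t):=z^A(t)+\xi X^A(t)$ --- fix an instant $t_0$, and pick $\vphi^i(\xi,t_0)$ smooth in $\xi$ with $\vphi^i(0,t_0)=q^i(t_0)$ and $(\partial\vphi^i/\partial\xi)_{\xi=0,\,t=t_0}=X^i(t_0)$. Solving the ODE $\partial\vphi^i/\partial t=\psi^i(t,\vphi,\z)$ for $\vphi^i(\xi,t)$, classical smooth dependence on initial data and parameters yields an admissible deformation $\h\g_\xi$ of $\h\g$. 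Its tangent has $z^A$--components $X^A$ by construction, while the $q^i$--components $Y^i:=(\partial\vphi^i/\partial\xi)_{\xi=0}$ satisfy the same linear ODE \eqref{2.14} as $X^i$ with the same value at $t_0$; uniqueness forces $Y^i\equiv X^i$. This produces the admissible deformation of $\h\g$ required by the second statement, and, by projection, the admissible deformation of $\g$ required by the first.

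The main (and essentially only) non--routine ingredient is this ODE construction; it rests exclusively on standard smooth dependence and uniqueness theorems for ordinary differential equations, so no serious obstacle is anticipated.
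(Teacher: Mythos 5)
Your argument is correct and is precisely the routine verification the paper omits (``the proof is entirely straightforward, and is left to the reader''): differentiating the admissibility condition $\partial\vphi^i/\partial t=\psi^i$ in $\xi$ yields necessity and identifies Eq.\;(\ref{2.14}) with $i_*\h X=\j{\pi_*\h X}$ via (\ref{2.12}b), while the parametrized Cauchy problem plus uniqueness for the linear variational system gives sufficiency. The only detail worth a word is that for $\xi\ne 0$ the solution of the parametrized ODE need not extend to all of $\R$, so the finite deformation you construct is guaranteed only on compact $t$--intervals for $|\xi|$ small enough --- harmless here, and consistent with the paper's subsequent restriction to closed intervals.
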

The proof is entirely straightforward, and is left to the reader.

\smallskip
Proposition \ref{Pro2.1} points out the completely symmetric roles played by diagram (\ref{2.1}) in the study of the admissible \emph{evolutions\/} and by
diagram (\Ref{2.12}a) in the study of the admissible infinitesimal \emph{deformations\/}, thus enforcing the viewpoint that the latter context is essentially a
``linearized counterpart'' of the former one.

\begin{remark}\label{Rem2.1}
Throughout the previous discussion, an admissible infinitesimal deformation is meant as a vertical vector field along $\@\g\@$ (or as an isochronous vector
field along $\@\h\g\@$) tangent to an admissible finite deformation {\em whatsoever}. A necessary and sufficient condition for this to hold is the validity of
the variational equation~(\ref{2.14}).

Any additional requirement on the allowed finite deformation --- such as e.g.~keeping the endpoints fixed --- is reflected into a corresponding restriction on
the infinitesimal ones. We shall return on this point in Sec.~\!\ref{Sec2.7}.
\end{remark}

\smallskip
\subsection{Infinitesimal controls}\label{Sec2.5}
\medskip\noindent
\textbf{(\/i\/)} \,According to Proposition \ref{Pro2.1}, the admissible infinitesimal deformations of an admissible section $\@\g:\R\to\V\,$ are in 1--1
correspondence with the sections $\@\h X:\R\to\Ag\,$ satisfying the consistency requirement $\@i_*\h X=\j{\pi_*\@\h X}\@$.

In local coordinates, setting $\@\h X=X^i\/(t)\,\de/de{q^i}\@+X^A\/(t) \,\de/de{z^A}\,$, the requirement is expressed by the variational equation (\ref{2.14}).

Exactly as it happened with Eq.\;(\ref{2.3}), Eq.\;(\ref{2.14}) indicates that, for each admissible $\@\h X$, the functions $\@X^i\/(t)\/$ are determined, up
to initial data, by the knowledge of $\@X^A\/(t)\/$.
Once again, however, one has to cope with the fact that the components $\@X^A$ have no invariant geometrical meaning.

The difficulty is overcome introducing a linearized version of the idea of \emph{control\/}.
\begin{definition}\label{Def2.1}
Let $\@\g:\R\to\V\@$ denote an admissible evolution. Then:
\begin{itemize}
\item a linear section $\@h:\Vg\to\Ag\@$, meant as a vector bundle homomorphism satisfying $\@\pi_*\cdot h=id\@$, is called an \emph{infinitesimal
    control\/} along $\@\g\@$;\vspace{1pt}
\item the image $\@\Hg:=h\@(\Vg)\@$, viewed as a vector subbundle of $\@\Ag\to\R\@$, is called the \emph{horizontal distribution\/} along $\@\h\g\@$
    induced by $\@h\@$; every section $\@\h X:\R\to\Ag\@$ satisfying $\@\h X\/(t)\in\Hg\;\forall\;\@t\in\R\@$ is called a horizontal section.
\end{itemize}
\end{definition}

\begin{remark}\label{Rem2.2}\rm
The term \emph{infinitesimal control\/} is intuitively clear: given an admissible section $\?\g\?$, let $\@\s:\V\to\A\@$ denote any control containing
$\?\g\@$, i.e.~satisfying $\?\s\cdot\g=\h\g\@$. Then, on account of the identity $\?\pi_*\cdot\s_* =(\pi\cdot\s)\@_* =id\@$, the tangent map $\?\s_*:T\/(\V)\to
T\/(\A)\@$, restricted to $\?\Vg\@$, determines a linear section
\linebreak
$\?\s_*:\Vg\to\Ag\@$ satisfying the requirements of Definition \ref{Def2.1}. The infinitesimal controls may therefore be thought of as equivalence classes of
ordinary controls having a first order contact along~$\g\@$.
\end{remark}

\medskip\noindent
\textbf{(\/ii\/)} \,Given an infinitesimal control $\@h:\Vg\to\Ag\@$, the horizontal distribution $\@\Hg\@$ and the vertical subbundle $\@\Wg\@$ split the
vector bundle $\@\Ag\@$ into the fibred direct sum
\begin{equation}\label{2.15}
\Ag\,=\,\Hg\oplus_{\@\R}\/\Wg\@,
\end{equation}
thereby giving rise to a couple of homomorphisms $\@\P_\H:\Ag\to\Hg\,$ (horizontal projection) and $\@\P_V:\Ag\to\Wg\@$ (vertical projection), uniquely defined
by the relations
\begin{equation}\label{2.16}
\P_\H\@=\@h\cdot\pi_*\ ;\qquad\P_V\@=\,id\@ -\P_\H\@.
\end{equation}

In fibred coordinates, every infinitesimal control $\@h:V\/(\g)\to\Ag\,$ is locally represented as
\begin{equation}\label{2.17}
v^A\@=\@h_i{}^A\/(t)\,u^i\@.
\end{equation}
In this way:
\begin{itemize}
\item
the horizontal distribution $\@\Hg\@$ is locally spanned by the vector fields
\begin{equation}\label{2.18}
\DE_i:=h\@\bigg[\left(\de/de{q^i}\right)_{\!\g}\@\bigg]\,=\,\left(\de/de{q\?^i}\right)_{\!\h\g} + h\@_i{}^A\,\left(\de/de{z^A}\right)_{\!\h\g}\@;
\end{equation}
\item
every vertical vector field $\@X=X^i\/(t)\big(\de/de{q^i}\big)_{\!\g}\@$ along $\@\g\@$ may be lifted to a horizontal field $\@h\/(X)\@$ along $\@\h\g\@$,
expressed in components as
\begin{equation}\label{2.19}
h\/(X)\,=\,X^i\/(t)\,\DE_i\,=\,X^i\/(t)\@\bigg[\left(\de/de{q\?^i}\right)_{\!\h\g} +
h\@_i{}^A\,\left(\de/de{z^A}\right)_{\!\h\g}\@\bigg]\@;
\end{equation}
\item
every vector $\@\h X=X^i\@\big(\de/de{q\?^i}\big)_{\h\g}+ X^A\@\big(\de/de{z^A}\big)_{\h\g}\@\in\Ag\,$ admits a unique representation of the form $\@\h
X=\P_\H\/(\h X)+\P_V\/(\h X)\@$, with
\begin{equation}\label{2.20}
\P_\H\/(\h X)=X^i\,\DE_i\,,\quad\P_V\/(\h X)=\Big(X^A\!-X^i\@h\@_i{}^A\Big)\!\left(\de/de{z^A}\right)_{\!\h\g}\!\!=:
U^A\!\left(\de/de{z^A}\right)_{\!\h\g}\@;
\end{equation}
\item
every $1$--form $\@\nu\@$ along $\@\h\g\@$ determines a linear functional on $\@\A\/(\h\g)\@$ and therefore, by duality, a virtual $1$--form $\@h^*(\nu)\@$
along $\@\g\@$, uniquely defined by the relation
\begin{equation}\label{2.21}
\big<\?h^*(\nu)\@,\@X\@\big>\@=\@\big<\?\nu\@,\@h\/(X)\@\big>\;\;\forall\;X\in\Vg\quad\Longrightarrow\quad
h^*(\nu)\@=\@\big<\?\nu\@,\@\DE_i\@\big>\,\wg i\,.\hskip.3cm
\end{equation}
\end{itemize}

\medskip
The role of Definition \ref{Def2.1} is further enhanced by the following
\begin{definition}\label{Def2.2}
Let $\@h\/$ be an infinitesimal control along the (admissible) section $\@\g\@$. A section 
\linebreak
$\@X:\R\to\Vg\@$ is said to be $\@h$--transported along $\@\g\@$ if
and only if its horizontal lift $\@h\/(X):\R\to\Ag\@$ is an admissible infinitesimal deformation of $\@\h\g\@$, i.e.~if and only if $\;i_*\cdot h\/(X)=\j X\@$.
\end{definition}

In coordinates, setting $X=X^i\/(t)\@\big(\de/de{q^i}\big)_\g\plus06\@$ and recalling Eqs.\;(\ref{2.14}), (\ref{2.19}), the condition for $h$--transport is
expressed by the linear system of ordinary differential equations
\begin{equation}\label{2.22}
\d X^i/d t\@=\left[\bigg(\de\psi^i/de{q^k}\BIGG)_{\h\g}\,+
\,h_k{}^A\,\bigg(\de\psi^i/de{z^A}\BIGG)_{\h\g}\,\right]\@X^k\@ = \@X^k\,\DE_k\@\psi^i\@.
\end{equation}

From the latter, on account of Cauchy theorem, we conclude that the $h$--transported sections of $\@\Vg\@$ form an $n$--dimensional vector space $\@V_h\@$,
isomorphic to each fibre $\@\Vg_{|t}\,$ through the evaluation map $\@X\to X\/(t)\,$. We can therefore state:
\begin{proposition}\label{Pro2.2}
Every infinitesimal control $\@h:\Vg\to\Ag\@$ determines a trivialization of the vector bundle $\@\Vg\xrightarrow{t\,}\R\@$.
\end{proposition}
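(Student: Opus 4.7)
The plan is to make explicit the trivialization whose existence is already implicit in the remark preceding the proposition, namely the fact that the space $V_h$ of $h$--transported sections is an $n$--dimensional vector space whose evaluation at any instant is an isomorphism onto the corresponding fibre of $\Vg$.

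First, I would invoke Cauchy's theorem for the linear system (\ref{2.22}): for any $t_0\in\R$ and any $X_0\in\Vg_{|t_0}$ there exists a unique solution $X\colon\R\to\Vg$ of (\ref{2.22}) with $X(t_0)=X_0$. By Definition \ref{Def2.2} this solution is precisely an $h$--transported section, so the evaluation map $\mathrm{ev}_{t_0}\colon V_h\to \Vg_{|t_0}\,,\;X\mapsto X(t_0)$, is a linear bijection; in particular $\dim V_h=n$. The same argument applied at every $t\in\R$ shows that $\mathrm{ev}_t$ is an isomorphism for every $t$, and the smooth dependence of solutions of a linear ODE on the initial data ensures that $\mathrm{ev}_t^{-1}$ depends smoothly on $t$.

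Next I would fix once and for all a basis $\{X_{(1)},\ldots,X_{(n)}\}$ of $V_h$ (e.g.\ by choosing $X_{(i)}(t_0)=\big(\partial/\partial q^i\big)_{\g(t_0)}$) and build the bundle map
\begin{equation*}
\Phi\colon \R\times\R^n\longrightarrow \Vg\,,\qquad \Phi(t,a^1,\ldots,a^n)\,:=\,\sum_{i=1}^n a^i\,X_{(i)}(t)\,.
\end{equation*}
Smoothness of $\Phi$ follows from the smooth dependence of solutions of (\ref{2.22}) on initial data; that $\Phi$ covers the identity on $\R$ is obvious; and the fact that $\Phi$ restricts to a linear isomorphism on each fibre is exactly the statement that $\mathrm{ev}_t$ is an isomorphism. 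Hence $\Phi$ is the required global trivialization of $\Vg\to\R$.

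There is no real obstacle here: the whole content reduces to the existence, uniqueness, and smooth dependence on initial data of solutions of the linear Cauchy problem (\ref{2.22}), a fact the authors have already flagged. The only point worth stressing in the write--up is that the trivialization so constructed is canonical up to the choice of one reference fibre (equivalently, up to the choice of a basis of $V_h$), so that different choices differ by a constant element of $GL(n,\R)$; this makes it clear that the trivialization is really a piece of structure intrinsically attached to $h$, not an ad hoc local device.
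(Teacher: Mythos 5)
Your proposal is correct and follows essentially the same route as the paper, which likewise derives the proposition directly from Cauchy's theorem applied to the linear system (\ref{2.22}) and the resulting isomorphism $V_h\cong\Vg_{|t}$ via the evaluation map; you merely make the resulting bundle map $\Phi$ and its smoothness explicit, which the paper leaves implicit.
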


Proposition \ref{Pro2.2} provides an identification between sections $\@X:\R\to\Vg\@$ and vector valued functions $\@X:\R\to V_h\@$ --- whence, by duality,
also an identifica\-tion between sections $\@\h\l:\R\to V^*\/(\g)\@$ and vector valued functions $\@\h\l:\R\to V^*_h\@$ --- thus opening the door to the
introduction of an \emph{absolute time derivative\/} $\D/D t\@$ for virtual tensor fields along $\@\g\@$.
The algorithm is readily implemented in components.\vspace{1pt}

To this end, let $\big\{e_{(a)}\big\}$, $\big\{e^{(a)}\big\}$ be any pair of dual bases for the spaces $\?V_h$, $\?V_h^*$.\vspace{1pt} By definition, each
$e_{(a)}$ is a vertical vector field along $\g$, obeying the transport law~(\ref{2.22}). In coordinates, setting
$\,e_{(a)}=\e_a^i\,\big(\de/de{q^i}\big)_\g\,$, this implies the relation
\begin{subequations}\label{2.23}
\begin{equation}
\d \e_a^i/d t\,=\,\e_a^k\,\@\DE_k\@\?\psi^i\@.
\end{equation}

In a similar way, each $\@e^{(a)}\@$ is a virtual $1$--form along $\@\g\@$\vspace{1pt}, expressed on the basis $\@\wg i\@$ as $\,e^{(a)}=\e^a_i\,\wg i\@$, with
$\,\e^a_i\,\e_b^i\@=\@\delta\@^a_b\,$\vspace{-2pt}. On account of Eq.\;(\Ref{2.23}a), the components $\@\e^a_i\@$ obey the transport law
\begin{equation}
\d/d t\,\Big(\@\e^a_i\,\e_a^{\?j}\@\Big)\,=\,0\qquad\Longrightarrow\qquad \d \e^a_i/d t\,=\,-\,\e^a_j\,\DE_i\@\psi\@^j\vspace{8pt}\@.
\end{equation}
\end{subequations}

The functions\vspace{-4pt}
\begin{subequations}\label{2.24}
\begin{equation}
\tau\?_i\@\?^j\,:=\,\d \e^a_i/d t\;\e_a^{\?j}\,=\,-\,\e^a_i\;\d\e_a^{\?j}/d t\vspace{.5em}
\end{equation}
are called the {\it temporal connection coefficients\/} associated with the infinitesimal control $\@h\@$ in the coordinate system $\@t\?,q^i\@$.
Eqs.~(\ref{2.18}), (\Ref{2.23}a,b) provide the expression
\begin{equation}
\tau\?_i\@^j\,=\,-\,\DE_i\@\psi\@^j\,=\,-\,\left(\de\psi^j/de{q\?^i}\right)_{\!\h\g}
\@-\@h\@_i{}^A\,\left(\de\psi^j/de{z^A}\right)_{\!\h\g}\@.
\end{equation}
\end{subequations}

Given any section $\@X:\R\to\Vg\@$, the definition of the operator $\@\D/Dt\,$ yields the evaluation
\begin{equation*}
\D X/D t\@=\@\d/dt\,\Big<\@X,\@e^{(a)}\@\Big>\,e_{(a)}\,=\@\d/dt\,\Big(\@X^i\@\e^a_i\@\Big)\,\e_a^j\,\bigg(\de/de{q^j}\bigg)_{\!\g}\@,
\end{equation*}
written more simply as
\begin{subequations}\label{2.25}
\begin{equation}
\D X/D t\,=\left(\@\d\/X^j/d t\@+\@X^i\@\tau\?_i\@\?^j\right)\!\bigg(\de/de{q^j}\bigg)_{\!\g}\@.
\end{equation}

In a similar way, for any virtual $1$--form $\@\h\l:\R\to V^*\/(\g)\@$, the same argument provides the result
\begin{equation}
\D\h\l/D t\@=\@\d/dt\,\Big<\@\h\l\@,\@e_{(a)}\@\Big>\,e^{(a)}=\d/dt\@\Big(\@\l_i\,\e_a^i\@\Big)\@\e^a_j\,\wg j=
\bigg(\@\d\/\l_j/d t\@-\@\l_i\,\tau\?_j{}^i\bigg)\wg j\@.
\end{equation}
\end{subequations}

\medskip\noindent
\textbf{(\/iii\/)} \,In view of Eq.\;(\ref{2.20}), every infinitesimal deformation $\@\h X\@$\vspace{1pt} of the section $\@\h\g\@$ admits a unique
representation of the form $\@\h X=h\/(X)+U$, where:
\begin{itemize}
\item
$\@X\@=\@\pi_*\@\h X\@:=\@X^i\@\big(\de/de{q^i}\big)_\g\@$ is a vertical field along $\@\g\@$, namely the (unique) infinitesimal deformation of $\@\g\@$
lifting to $\@\h X\@$;
\smallskip
\item
$\@U\@\@=\@\P_V\/(\h X)\@:=\@U^A\@\big(\de/de{z^A}\big)_{\h\g}\@$ is a vertical vector field along $\@\h\g\@$.
\end{itemize}

In terms of this decomposition, the variational equation (\ref{2.14}) reads
\begin{equation*}
\d X^i/dt\,=\,X^k\bigg(\de\psi^i/de{q^k}\bigg)_{\!\h\g}+\,
\bigg(\de\psi^i/de{z^A}\bigg)_{\!\h\g}\left(X^k\@h_k{}^A\@+\@U^A\right)\@.
\end{equation*}

Recalling Eqs.\;(\Ref{2.24}b), (\Ref{2.25}a), as well as the representation (\Ref{2.11}) for the homomorphism
\linebreak
$\@\h\r:\Wg\to\Vg\@$, the latter equation is more
conveniently written as
\begin{subequations}\label{2.26}
\begin{equation}
\D X/D t\,=\,\h\r\@\big(\@U\@\big)=\,\h\r\@\big(\@\P_V\/(\h X)\@\big)
\end{equation}
or also, setting $\@X=X^a\@e_{(a)}\@$, $\@U=U^A\@\big(\de/de{z^A}\big)_{\h\g}\@$ and expressing everything in the $h$--transported basis $\@e_{(a)}$
\begin{equation}
\d X^a/d t=\Big<\@e^{(a)}\@,\,\h\r\big(U\big)\@\Big>\,=\,
\e^a_i\;\bigg(\de\psi^i/de{z^A}\BIGG)_{\h\g}\;U^A\@.
\end{equation}
\end{subequations}

Precisely as the original equation (\ref{2.14}), Eq.\;(\Ref{2.26}a) points out that every infinitesimal deformation $\@X\@$ is determined by the knowledge of a
vertical vector field $\@U:\R\to\Wg\@$ through the solution of a well posed Cauchy problem:\vspace{1pt} the advantage of the present formulation is that all
quantities have now a precise geometrical meaning relative to the horizontal distribution $\@\Hg\,$ induced by the infinitesimal control $\@h\@$.

As we shall see, this aspect has useful consequences. At the same time, of course, one cannot overlook the fact that, in the formulation of a typical
variational problem, no distinguished section 
\linebreak
$\?h:\Vg\to\Ag\@$ is generally included among the data, and no one is needed in order to formulate the results.

In this respect, the infinitesimal controls play the role of \emph{gauge fields\/}, useful for covariance purposes, but unaffecting the evaluation of the
extremals. Accordingly, in the subsequent analysis we shall regard $h$ as a user--defined object, eventually checking the invariance of the results under
arbitrary changes~$h\to h'$.

\smallskip
\subsection{Corners}\label{Sec2.6}
As anticipated in the Introduction, in the study of the extremals of a variational problem we shall not only consider \emph{ordinary\/} sections, but also
\emph{piecewise differentiable} ones, defined on \emph{closed} intervals. To this end, we stick to the following standard terminology:
\begin{itemize}
 \item an admissible closed arc $\@\arc\g{\?a,b\?}\@$ in $\@\V\@$ is the restriction to a closed interval $\@[\?a,b\@]\@$ of an admissible section
$\@\g:(c\@,d\@)\to\V\@$ defined on some open interval $\@(c\@,d\@)\supset[\?a,b\@]\@$;
\vspace{1pt}
\item a piecewise differentiable evolution of the system in the interval $\@[t_0\?,\?t_1]\@$ is a finite collection
\begin{equation*}
\arc{\?\g}{\?t_0\?,\?t_1}:=\big\{\arc{\?\g^{\narc{s}}}{\?a_{s-1},a_s}\?,\,s=1,\ldots,N,\;t_0=a_0<a_1<\cdots<a_N=t_1\/\big\}
\end{equation*}
of admissible closed arcs satisfying the matching conditions
\begin{equation}\label{2.27}
\g^{\ns}\/(a_s)\,=\,\g^{\nS}\/(a_s)\qquad\;\forall\,s=1,\ldots,N-1\@.
\end{equation}
\end{itemize}

According to Eq.\;(\ref{2.27}), the image $\@\g\/(t)\@$ is well defined and continuous for all $\?t_0\le t\le t_1\@$. This allows to regard the map
$\@\g:[\@t_0,t_1]\to\V\@$ as a section in a broad sense. The points $\@\g\/(t_0)\@,\,\g\/(t_1)\,$ are called the endpoints of $\@\g\@$. The points
$\@x_s:=\g\/(a_s)\,$, $\,s=1,\ldots,N-1\@$ are called the \emph{corners\/}.

Consistently with the stated definitions, the lift of an admissible closed arc $\@\arc{\g}{\?a,b\@}\@$ is the restriction to $\@[\?a,b\@]\@$ of the lift
$\@\h\g:(c\@,d\@)\to\A\@$, while the lift $\arc{\?\h\g}{\?t_0\?,\?t_1}$ of a piecewise differentiable evolution
$\@\big\{\arc{\@\g^{\ns}}{\@a_{s-1},a_s}\@\big\}\,$ is the family of lifts $\@\h\g^{\ns}$, each restricted to the interval $\@[\@a_{s-1},a_s\?]\@$.

With this definition, the image $\@\h\g\/(t)\@$ is well defined for all $\?t\ne a_1,\ldots,a_{N-1}\@$, thus allowing to regard $\@\h\g:[\@t_0,t_1]\to\A\@$ as a
(generally discontinuous) section of the velocity space. In particular, due to the fact that the map $\@i:\A\to\j\V\@$ is an imbedding of $\@\A\@$ into an
\emph{affine\/} bundle over $\@\V\@$, each difference
\begin{equation*}
\big[\?\h\g\@\big]_{a_s}=\@i\@\big(\@\h\g^{\nS}\/(a_s)\@\big)-\@i\@\big(\@\h\g^{\ns}\/(a_s)\@\big)\@,\;s=1,\ldots,N-1
\end{equation*}
identifies a vertical vector in $\@T_{x_s}\/(\V)\@$, called the \emph{jump\/} of $\@\h\g\@$ at the corner $\@x_s\,$.

In local coordinates, setting $\@q^{\;i}_{\ns}\/(t):=q^i\/(\g^{\ns}\/(t))\@$ and introducing the notation
\linebreak
$\big[\?\psi^i\/(\h\g)\?\big]_{\?a_s}:=\?\psi^i\?(\h\g^{\?\nS}\/(a_s))- \psi^i\?(\h\g^{\?\ns}\/(a_s))\/$ to indicate the jump of the function
$\psi^i\/(\?\h\g\/(t))$ at $t=a_s\@$, Eqs.\;(\ref{2.3}), (\ref{2.27}) provide the representation
\begin{equation}\label{2.28}
\big[\?\h\g\@\big]_{a_s}=\bigg(\bigg(\d q^{\;i}_{\nS}/dt\bigg)_{\!a_s}-
\bigg(\d q^{\;i}_{\ns}/dt\bigg)_{\!a_s}\bigg)\bigg(\de/de{q^i}\bigg)_{\!x_s}
\@=\Big[\?\psi^i\/(\h\g)\Big]_{a_s}\bigg(\de/de{q^i}\bigg)_{\!x_s}\@.
\end{equation}

\smallskip
Along the same guidelines, an admissible deformation of an admissible closed arc $\arc\g{\?a,b\?}$ is a 
\linebreak
$1$--parameter family
$\arc{\g_\xi}{\?a\/(\xi),b\/(\xi)\@}$, $\@|\?\xi\?|<\eps$ of admissible closed arcs depending differentiably on $\@\xi\@$ and satisfying
$\@\arc{\g_0}{\?a\/(0),b\/(0)\@}=\arc\g{\?a,b\?}\@$. Notice that the definition explicitly includes possible variations of the reference interval
$\@[\?a\/(\xi),b\/(\xi)\@]\@$.

In a similar way, an admissible deformation of a piecewise differentiable evolution $\@\arc\g{\?t_0,t_1\?}\@$ is a collection
$\@\big\{\arc{\@\g^{\ns}_{\;\xi}}{\@a_{s-1}\/(\xi),a_s\/(\xi)\@}\/\big\}\@$ of deformations of the various arcs, satisfying the matching conditions
\begin{equation}\label{2.29}
\g^{\ns}_{\;\xi}\/(a_s\/(\xi))\,=\,\g^{\nS}_{\;\xi}\/(a_s\/(\xi))\qquad\;
\forall\;|\?\xi\?|<\eps\,,\;s=1,\ldots,N-1\@.
\end{equation}

In the stated circumstance, the lifts $\@\h\g_\xi\@$ and $\@\h\g^{\ns}_{\;\xi}\@$, respectively restricted to the intervals $\@ [\?a\/(\xi),b\/(\xi)\@]\,$ and
$\@[\@a_{s-1}\/(\xi),a_s\/(\xi)\,)\,$ are easily recognized to provide deformations for the lifts $\@\h\g:[\?a,b\@]\to\A\,$ and
$\@\h\g^{\ns}:[\@a_{s-1},a_s\?]\to\A\@$.

Unless otherwise stated, we shall only consider deformations leaving the interval $\@[\?t_0,t_1\@]\@$ fixed, i.e.~satisfying $\@a_0\/(\xi)=
t_0\,,\,a_N\/(\xi)=t_1\@$ $\forall\,\xi\@$. \,No restriction will be posed on the functions $\@a_s\/(\xi)\@,\,s=1,\ldots,N-1\@$.
Each curve $\@x_s\/(\xi):=\g_\xi\/(a_s\/(\xi))\,$ will be called the \emph{orbit\/} of the corner $\?x_s\/$ under the given deformation.

In local coordinates, setting $\@q^i\/(\g^{\ns}_{\;\xi}\/(t))=\vphi_{\ns}^{\;i}\/(\xi,t)\,$, the matching conditions (\ref{2.29}) read
\begin{equation}\label{2.30}
\vphi_{\ns}^{\;i}\?(\?\xi\?,\@a_s\/(\xi)\?)\,=\,\vphi_{\nS}^{\;i}\?(\?\xi\?,\@a_s\/(\xi)\?)\@,
\end{equation}
while the representation of the orbit $\@x_s\/(\xi)\@$ takes the form
\begin{equation}\label{2.31}
x_s\/(\xi)\@:\quad t\,=\,a_s\/(\xi)\,,\quad
q^i\@=\@\vphi_{\ns}^{\;i}\?(\?\xi\?,\@a_s\/(\xi)\?)\@.
\end{equation}

\medskip
The previous arguments have a natural ``infinitesimal'' counterpart. More specifically: an admissible infinitesimal deformation of an admissible closed arc
$\@\arc\g{\?a,b\@}\@$ is a triple $\@(\?\a,X,\b\@)\@$, where $\@X\@$ is the restriction to $\@[\?a,b\@]\@$ of an admissible infinitesimal deformation of
$\@\g:(c,d\@)\to\V\@$, while $\a,\?\b$ are the derivatives
\begin{equation}\label{2.32}
\a\,=\,\d\?a/d\xi\@\bigg|_{\@\xi=0}\,,\qquad\;\b\,=\,\d\?b/d\xi\@\bigg|_{\@\xi=0}\,,
\end{equation}
expressing the \emph{speed of variation\/} of the interval $\@\big[\@a\/(\xi)\?,\?b\/(\xi)\@\big]\@$ at $\@\xi=0\@$.

In a similar way, an admissible infinitesimal deformation of a piecewise differentiable evolution $\@\arc\g{\?t_0,t_1\?}\@$ is a collection \vspace{1pt}
$\@\big\{\!\?\cdots\@\a_{s-1}\@,\@X_{\ns}\@,\@\a_s\@\cdots\big\}\@$ of admissible infinitesimal deformations of each single closed arc, with \vspace{1.6pt}
$\@\a_s=\d\?a_s/d\xi\@\big|_{\@\xi=0}\,$ and, in particular, with $\@\a_0=\a_N=0\@$ whenever the interval $\@[\@t_0,t_1\?]\@$ is held fixed.

Let us analyse the situation in closer detail. To start with, we notice that the quantities $\@\a_s\?,\?X_{\ns}\@$ are not independent: Eqs.\;(\ref{2.30})
imply in fact the identities
\begin{equation*}
\de\@\vphi_{\ns}^{\;i}/de\xi\,+\,\de\@\vphi_{\ns}^{\;i}/de t\,\d\?a_s/d\xi\,=\,
\de\@\vphi_{\ns}^{\;i}/de\xi\,+\,\de\@\vphi_{\nS}^{\;i}/de t\,\d\?a_s/d\xi\@.
\end{equation*}

From these, evaluating everything at $\@\xi=0\@$ and recalling the relation between finite deformations and infinitesimal ones, we get the \emph{jump
relations\/}
\begin{equation}\label{2.33}
\Big(\?X_{\nS}^{\;i}\@-\@X_{\ns}^{\;i}\?\Big)_{a_s}\,=\,
-\@\a_s\@\bigg(\@\d\?q_{\nS}^{\;i}/dt\,-\,\d\?q_{\ns}^{\;i}/dt\@\bigg)_{\!a_s}
\,=\,-\@\a_s\@\Big[\?\psi^i\/(\h\g)\?\Big]_{\?a_s}\@.\hskip.6cm
\end{equation}

Furthermore, the admissibility of each single infinitesimal deformation $\@X_{\ns}\@$ requires the existence of a corresponding lift $\@\h
X_{\ns}\!=X_{\ns}^{\;i}\@\big(\de/de{q^i}\big)_{\h\g^{\ns}}+\@X_{\!\ns}^{\@A}\@\big(\de/de{z^A}\big)_{\h\g^{\ns}}\@$ satisfying the variational equation
(\ref{2.14}).

Both aspects are conveniently accounted for by assigning to each $\@\g^{\ns}\@$ an (arbitrarily chosen) infinitesimal control $\@h^{\ns}:V\/(\g^{\ns})\to
A\/(\h\g^{\ns})\,$\vspace{1pt}. In this way, proceeding as in Sec. \!2.5 and denoting by $\@\big(\D/D t\big)_{\g^{\ns}}\,$ the absolute time derivative along
$\@\g^{\ns}$ induced by $\@h^{\ns}\@$ we get the following
\begin{proposition}\label{Pro2.3}\hskip-.85pt
Every admissible infinitesimal deformation of an admissible evolution $\?\arc\g{\?t_0,t_1\?}\?$ over a fixed interval $\?[\?t_0,t_1\?]\?$ is determined, up to
initial data, by a collection of vertical vector fields $\@\big\{\@U_{\ns}\@ =\@U_{\ns}^{A}\,\big(\de/de{z^A}\big)_{\h\g^{\ns}}\big\}\/$, $\@s=1,\ldots,N\@$
and by $\@N-1\@$ real numbers $\@\a_1\?,\ldots,\?\a_{N-1}\@$ through the covariant variational equations
\begin{equation}\label{2.34}
\bigg(\D X_{\ns}/D t\bigg)_{\!\g^{\ns}}=\,\h\r\@\@(U_{\ns})\,=\,
U_{\ns}^{A}\@\bigg(\de\psi^i/de{z^A}\bigg)_{\!\h\g^{\ns}}
\bigg(\de/de{q^i}\bigg)_{\!\g^{\ns}}\qquad s=1,\ldots,N
\end{equation}
completed with the jump conditions (\ref{2.33}). The lift of the deformation is described by the family of vector fields
\begin{equation}\label{2.35}
\h X_{\ns}\,=\,h^{\ns}\?(X_{\ns})\@+\@U_{\ns}\@,\qquad\; s=1,\ldots,N\@.
\end{equation}
\end{proposition}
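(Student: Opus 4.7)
The plan is to reduce the statement to a piecewise application of the single-arc analysis developed in Section~\ref{Sec2.5}, supplemented by the infinitesimal form of the matching conditions at the corners. Three ingredients need to be assembled: a per-arc covariant variational equation, jump relations at each corner, and a sequential Cauchy-type construction glueing everything together.

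First I would focus on a single closed arc $\arc{\g^{\ns}}{a_{s-1},a_s}$. By the definition given in Section~\ref{Sec2.6}, an admissible infinitesimal deformation of such an arc is the restriction of an admissible infinitesimal deformation of a smooth extension of $\g^{\ns}$. Proposition~\ref{Pro2.1}, together with the discussion culminating in Eq.~(\Ref{2.26}a), then applies verbatim: once the infinitesimal control $h^{\ns}$ is fixed, the lift $\h X_{\ns}$ admits the unique splitting~(\ref{2.35}) dictated by the direct sum~(\ref{2.15}); the variational equation~(\ref{2.14}) becomes equivalent to the covariant equation~(\ref{2.34}); and the field $X_{\ns}$ is determined by $U_{\ns}$ up to a single vector-valued initial datum through the resulting linear Cauchy problem.

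Next I would derive the jump relations~(\ref{2.33}) from the finite matching conditions~(\ref{2.30}). Differentiating the identity $\vphi_{\ns}^{\;i}(\xi,a_s(\xi))=\vphi_{\nS}^{\;i}(\xi,a_s(\xi))$ with respect to $\xi$ at $\xi=0$, applying the chain rule, and using~(\ref{2.3}) to identify the relevant partial $t$-derivatives at $a_s$ with $\psi^i(\h\g^{\ns}(a_s))$ and $\psi^i(\h\g^{\nS}(a_s))$, yields~(\ref{2.33}) directly. Because the admissible deformations are only required to fix $t_0$ and $t_1$, with no constraint imposed on the intermediate $a_s(\xi)$, each $\a_s$ is a genuinely free parameter.

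Finally, I would combine these ingredients into a sequential construction: given an initial value $X_{\narc{1}}(t_0)$, the fields $\{U_{\ns}\}$ and the numbers $\a_1,\ldots,\a_{N-1}$, one solves~(\ref{2.34}) on $[a_0,a_1]$ for $X_{\narc{1}}$, reads off the initial datum for $X_{\narc{2}}$ from the jump condition~(\ref{2.33}) at $a_1$, and iterates up to $s=N$; the lift is then assembled via~(\ref{2.35}). Conversely, every admissible deformation produces such data by setting $U_{\ns}:=\P_V(\h X_{\ns})$ and $\a_s:=\d a_s/d\xi\big|_{\xi=0}$. The only point that needs checking is the tensorial character of~(\ref{2.34}), but this was already established for a single smooth arc in Section~\ref{Sec2.5} through the temporal connection coefficients~(\Ref{2.24}b); the covariance of~(\ref{2.33}) is immediate from~(\ref{2.28}), which guarantees that the right-hand side is a well-defined vertical vector at the corner $x_s$. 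Extending from the smooth to the piecewise setting will therefore be essentially free, the jump conditions being the only genuinely new ingredient.
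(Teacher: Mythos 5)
Your proposal is correct and follows essentially the same route the paper intends: the jump relations (\ref{2.33}) are obtained by differentiating the matching conditions (\ref{2.30}) at $\xi=0$ exactly as in the text preceding the proposition, and the per--arc reduction of the variational equation (\ref{2.14}) to the covariant form (\ref{2.34}) via the splitting (\ref{2.15}) is precisely the Sec.~\!\ref{Sec2.5} argument the paper invokes with the words ``proceeding as in Sec.~\!2.5''. The sequential Cauchy construction gluing the arcs through the corner data, together with the converse assignment $U_{\ns}=\P_V(\h X_{\ns})$, $\a_s=\d a_s/d\xi\big|_{\xi=0}$, is exactly the ``entirely straightforward'' verification the authors leave to the reader.
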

The proof is entirely straightforward, and is left to the reader. Introducing $n$ piecewise differentiable vector fields $\DE_1\?,\ldots,\?\DE_n\@$ along
$\?\h\g\?$ according to the prescription
\begin{equation*}
\DE_i\/(t)\@=\@
\@h^{\ns}\@\bigg(\de/de{q^i}\bigg)_{\!\g^{\ns}\/(t)}\qquad\quad \forall\;t\in(\?a_{s-1},a_s)\,,\;\;s=1,\ldots,N\@,
\end{equation*}
Eq.\;(\ref{2.35}) takes the explicit form
\begin{equation}\label{2.36}
\h X_{\ns}\,=\,h^{\ns}\bigg(X_{\ns}^{\;i}\@\bigg(\de/de{q^i}\bigg)_{\!\g}\bigg)\@+\@U_{\ns}
\,=\,X_{\ns}^{\;\@i}\,\DE_i\,+\,U_{\ns}^A\@\bigg(\De/de{z^A}\bigg)_{\!\h\g}
\end{equation}
on each open arc $\@\h\g^{\ns}:(a_{s-1},a_s)\to\A\@$.

\smallskip
To discuss the implications of Eq.\;(\ref{2.34}) we resume the notation $\@\Vg\@$ and $\@V^*\/(\g)\@$ respectively for the totality of vertical vectors and for
the totality of virtual $1$--forms along $\g\?$
\footnote%
{\@Notice that this definition is perfectly meaningful even at the corners $\@\g\/(a_s)\@$.}.
We then define a transport law for virtual tensors on $\@\g\@$, henceforth called $h$--transport, gluing $\@h^{\ns}$--transport along each arc
$\@\arc{\g^{\ns}}{\?a_{s-1}\?,\?a_s\@}\@$ and continuity at the corners, i.e.~continuity of the components at $\?t=a_s\@$.

In view of Proposition \ref{Pro2.2}, the $h$--transported vector fields form an $\?n$--dimensional vector space $\@V_h\@$, isomorphic to each fibre
$\@\Vg_{|t}\,$. Likewise, the $h$--transported virtual $1$--forms span a vector space $\@V^*_h\@$, dual to $\@V_h\@$ and isomorphic to
$\@V^*\/(\g)_{|t}\;\@\forall\,t\@$.

This provides canonical identifications of $\@\Vg\@$ with the cartesian product $\@[\?t_0,t_1\@]\times V_h\@$ and of $\@V^*\/(\g)\@$ with the product
$\@[\?t_0,t_1\@]\times V^*_h\@$, thus allowing to regard every section $\@X:[\?t_0,t_1\@]\to\Vg\,$ as a vector valued map $\@X:[\?t_0,t_1\@]\to V_h\@$, and every
section $\@\h\l:[\?t_0,t_1\@]\to V^*\/(\g)\@$ as a map $\@\h\l:[\?t_0,t_1\@]\to V^*_h\@$.

Exactly as in Sec. \!2.3, the situation is formalized referring $\@V_h\@$ and $\@V^*_h\@$ to dual bases $\@\{\@e_{(a)}\@\}\@$, $\@\{\@e^{(a)}\@\}\@$ related to
$\@\big(\de/de{q^i}\big)_{\g}\@$, $\wg i\@$ by the transformation
\begin{equation}\label{2.37}
\bigg(\de/de{q^i}\bigg)_{\!\g}\@=\@\e^a_i\?(t)\;e_{(a)}\;,\qquad\;\wg i\@=\@\e_a^i\?(t)\;e^{(a)}\@.
\end{equation}

Given any admissible infinitesimal deformation $\@\big\{(X_{\ns},\a_s)\big\}\/$, we now glue all sections 
\linebreak
$\@X_{\ns}:[\@a_{s-1}\?,\?a_s\@]\to V\?(\g^{\ns})\@$
into a single, piecewise differentiable map $\@X:[\@t_0,t_1\@]\to V_h\@$, with jump discontinuities at $\@t=a_s\@$ satisfying Eq.\;(\ref{2.33}).

For each $\@s=1,\ldots,N\@$ we have then the relations
\begin{equation}\label{2.38}
X_{\ns}=X^a\/(t)\,e_{(a)}\@,\quad\bigg(\D X_{\ns}/D t\bigg)_{\!\g^{\ns}}\!=\d X^a/d t\,e_{(a)}\qquad\;\forall\;t\in (\?a_{s-1}\?,\?a_s\@)\@.
\end{equation}

In a similar way, we collect all fields $U_{\!\@\ns}$ into a single object $\?U$, conventionally called a vertical vector field along $\@\h\g\@$.

In this way, the covariant variational equation (\ref{2.34}) takes the form
\begin{subequations}\label{2.39}
\begin{equation}
\d X^a/d t\,=\,U^A\,\e^a_i\,\bigg(\de\psi^i/de{z^A}\bigg)_{\!\h\g}
\qquad\;\forall\;t\ne a_s\@,
\end{equation}
completed with the jump conditions
\begin{equation}
\big[\?X^a\@\big]_{\?a_s}=\big[\?X^i\@\big]_{\?a_s}\,\e^a_i\?(a_s)=-\@\a_s\,\e^a_i\?(a_s)\,\Big[\?\psi^i\/(\h\g)\?\Big]_{a_s}\;\,\;s=1,\ldots,N-1\@.
\end{equation}
\end{subequations}

\smallskip
\subsection{The variational setup}\label{Sec2.7}
\textbf{(\/i\/)} \,Given an admissible piecewise differentiable section 
\linebreak
$\g:[\@t_0\@,\@t_1\@]\to\V\@$, let $\@\VV\@$ denote the infinite--dimensional vector
space formed by the totality of vertical vector fields $\@U=\big\{U_{\ns}\@,\,s=1,\ldots,N\@\big\}\@$ along $\@\h\g\@$. Also, let $\@\W\@$ denote the direct
sum $\@\VV\oplus\R^{N-1}$.

By Eqs.\;(\Ref{2.39}a,b), every admissible infinitesimal deformation $X$ of $\@\g\@$ is then determined, up to initial data, by an element
$\@(U,\@\atilde\?):=(U,\a_1,\ldots,\a_{N-1})\in\W\@$.

For variational purposes, let us now focus on the class of infinitesimal deformations $\@X:[\?t_0\?,\?t_1\@]\to\Vg\@$ {\em vanishing at the endpoints of}
$\g\@$. Setting $\@X\/(t_0)=0\@$, Eqs.\;(\Ref{2.39}a,b) provide the evaluation
\begin{equation}\label{2.40}
X\/(t)\@=\@\left(\int_{t_0}^t\,U^A\,\e^a_i\,\bigg(\de\psi^i/de{z^A}\bigg)_{\!\h\g}\,d\/t\,
-\,\sum_{a_s<\?t}\,\a_s\;\e^a_i\?(a_s)\,\Big[\?\psi^i\/(\h\g)\?\Big]_{a_s}\right)e_{(a)}\@.
\end{equation}
Denoting by $\@\U:\W\to V_h\@$ linear map defined by the equation
\begin{equation}\label{2.41}
\U\/(U,\atilde\?)=\left(\;\int_{t_0}^{t_1}U^A\,\e^a_i\@\bigg(\de\psi^i/de{z^A}\bigg)_{\!\h\g}\@d\/t\;-\@
\sum_{s=1}^{N-1}\@\a_s\,\@\e^a_i\?(a_s)\@\Big[\?\psi^i\/(\h\g)\?\Big]_{a_s}\,\right)e_{(a)}\@,
\end{equation}
it is then an easy matter to conclude that the required class is in 1--1 correspondence with the subspace $\@\ker\@(\U)\subset\W\@$.

Actually, as anticipated in Remark \ref{Rem2.1}, what really matters in a variational context is not the space $\@\ker\@(\U)\@$ itself, but the (possibly
smaller) subset $\@\X\subset\ker\@(\U)\@$ formed by the infinitesimal deformations tangent to admissible \emph{finite\/} deformations with fixed endpoints.

The linear span of $\@\X\@$, henceforth denoted by $\@\Delta\/(\g)\@$, will be called the \emph{variational space\/} of $\@\g\@$. The evolutions of the system
will be classified into \emph{ordinary\/}, when $\@\Delta\/(\g) =\ker\@(\U)\@$ and~\emph{exceptional\/}, when $\@\Delta\/(\g)\subsetneq\ker\@(\U)\@$.

\medskip\noindent
\textbf{(\/ii\/)} \,A further important classification comes from the nature of the inclusion $\U\?(\W)\subset V_h\@$: an evolution
$\@\arc{\?\g}{\?t_0\?,\?t_1}\@$ is called \emph{normal\/} if $\U\?(\W)=V_h$,
\emph{abnormal\/} in the opposite case\@
\footnote%
{\@As we shall see, when applied to the extremals of an action functional, the terminology agrees with the current one (see, among others, \cite{Montgomery,SL}
and references therein).}.
It is called \emph{locally normal} if its restriction to any closed subinterval $\@[\?t_0'\?,\?t_1'\?]\subseteq [\?t_0\?,\?t_1\?]\@$ is normal.
The dimension of the annihilator $\@\big(\@\U\?(\W)\@\big){}^0\subset V_h^*\@$ is called the \emph{abnormality index\/} of $\@\g\@$.

In this connection, a useful result is provided by the following
\begin{proposition}\label{Pro2.4}
The annihilator $\@\big(\@\U\?(\W)\@\big){}^0\subset V_h^*\@$ coincides with the totality of $\?h$--transported virtual $1$--forms $\@\h\rho=\rho_{\?i}\,\wg
i\@$ satisfying the conditions
\begin{subequations}\label{2.42}
\begin{align}
&\rho_{\?i}\@\bigg(\de\psi^i/de{z^A}\bigg)_{\!\h\g}=\,0\@,\qquad&&A=1,\ldots,r\@,       \\[4pt]
&\rho_{\?i}\?(a_s)\?\big[\@\psi^i\/(\h\g)\@\big]_{\?a_s}=\,0\@,\qquad&& s=1,\ldots,N-1\@.
\end{align}
\end{subequations}
\end{proposition}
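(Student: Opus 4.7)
The plan is to exploit the trivialization of $V^*(\gamma)$ provided by Proposition \ref{Pro2.2} (extended to piecewise differentiable sections as in Sec.~\ref{Sec2.6}) to reduce the problem to a pair of independent elementary variational assertions, one integral and one discrete.

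First I would unpack the identification: a generic element $\hat{\rho}\in V_h^*$ is, by definition, an $h$--transported virtual $1$--form along $\gamma$. Writing it on the $h$--transported dual basis as $\hat{\rho}=\rho_a\,e^{(a)}$ with constant coefficients $\rho_a$, and using the transformation \eqref{2.37}, one recovers the representation $\hat{\rho}=\rho_{\,i}(t)\,\wg i$ with $\rho_{\,i}(t)=\rho_a\,\varepsilon^a_i(t)$. Pairing $\hat{\rho}$ with the image \eqref{2.41} then gives
\begin{equation*}
\bigl\langle\hat{\rho},\,\U(U,\tilde{\alpha})\bigr\rangle \,=\,
\int_{t_0}^{t_1}\!\! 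U^A\,\rho_{\,i}(t)\,\Bigl(\tfrac{\partial\psi^i}{\partial z^A}\Bigr)_{\!\hat{\gamma}}\!dt
\,-\,\sum_{s=1}^{N-1}\alpha_s\,\rho_{\,i}(a_s)\,\bigl[\psi^i(\hat{\gamma})\bigr]_{a_s},
\end{equation*}
since the products $\rho_a\,\varepsilon^a_i$ collapse to $\rho_{\,i}$ inside both terms.

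Next I would observe that $\W=\VV\oplus\R^{N-1}$ splits the pairing into two independent contributions. Choosing $\tilde{\alpha}=0$ and letting $U=\{U_{\ns}\}$ range over the infinite--dimensional space $\VV$, the vanishing of the integral for every admissible $U^A$ (piecewise differentiable on each subinterval) is, by the standard fundamental lemma of the calculus of variations applied on each open arc $(a_{s-1},a_s)$, equivalent to the pointwise condition
\begin{equation*}
\rho_{\,i}(t)\,\Bigl(\tfrac{\partial\psi^i}{\partial z^A}\Bigr)_{\!\hat{\gamma}(t)}=0\qquad A=1,\dots,r,
\end{equation*}
namely \eqref{2.42}a. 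Choosing instead $U=0$ and letting $\tilde{\alpha}=(\alpha_1,\dots,\alpha_{N-1})$ range freely in $\R^{N-1}$, the vanishing of the finite sum $\sum_s\alpha_s\,\rho_{\,i}(a_s)[\psi^i(\hat{\gamma})]_{a_s}$ for arbitrary real coefficients forces each term to vanish, giving \eqref{2.42}b.

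The converse implication is immediate: if \eqref{2.42}a,b both hold, the integrand and each summand in the displayed pairing vanish identically, so $\hat{\rho}$ annihilates the whole of $\U(\W)$. The only mild technical point is making sure the class of test fields $U^A$ is rich enough to invoke the fundamental lemma on each open arc while still respecting the piecewise differentiable framework of Sec.~\ref{Sec2.6}; this is harmless because bump functions supported inside a single arc $(a_{s-1},a_s)$ already belong to $\VV$, and these suffice to conclude the pointwise identity on each arc, hence on the complement of the corners. No genuine obstacle arises; the result is essentially a bookkeeping translation of duality between $\U$ and the conditions on $\rho_{\,i}$ once the trivialization of Proposition \ref{Pro2.2} is in place.
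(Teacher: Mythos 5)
Your proposal is correct and follows essentially the same route as the paper: both write the pairing of $\h\rho=\rho_a\,e^{(a)}$ with $\U(U,\atilde\?)$ via Eq.\;(\ref{2.41}), collapse $\rho_a\,\e^a_i$ to $\rho_i(t)$, and read off Eqs.\;(\Ref{2.42}a,b) from the arbitrariness of $U$ and of $\a_1,\ldots,\a_{N-1}$. The paper simply states that the resulting relation is ``clearly equivalent'' to the two conditions, whereas you spell out the fundamental lemma and the splitting $\W=\VV\oplus\R^{N-1}$ — a harmless elaboration of the same argument.
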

\begin{proof}
In view of Eq.\;(\Ref{2.41}), the subspace $\@\big(\@\U\?(\W)\@\big){}^0\subset V_h^*\@$ consists of the totality of elements
$\@\h\rho=\rho_{\?a}\,e^{(a)}=\rho_{\?a}\,\e^a_i\?\wg i\@$ satisfying the relation
\begin{equation*}
\rho_{\?a}\left(\,\int_{t_0}^{t_1}U^A\,\e^a_i\bigg(\de\psi^i/de{z^A}\bigg)_{\!\h\g}\@d\/t
\;-\@\sum_{s=1}^{N-1}\@\a\?_s\,\@\e^a_i\@(a\?_s)\@\Big[\?\psi^i\/(\h\g)\?\Big]_{a_s}\@\right)=\@0\qquad \forall\;(U,\atilde)\in\W\@,
\end{equation*}
clearly equivalent to Eqs.\;(\Ref{2.42}a,b).
\end{proof}

On account of Eqs.\;(\Ref{2.24}b), (\Ref{2.25}b), the condition of $h$--transport of $\@\h\rho\@$ along each arc $\@\g^{\ns}\@$ is expressed in coordinates as
\begin{equation}\label{2.43}
\d\rho_{\?i}/d t\,+\,\rho_k\left(\de\psi^k/de{q^i}\right)_{\!\h\g}\,+\,
h_i{}^A\cancel{\,\rho_k\left(\de\psi^k/de{z^A}\right)_{\!\h\g}}\,=\,0\@,
\end{equation}
the cancellation being due to Eq.\;(\Ref{2.42}a). The content of Proposition \ref{Pro2.4} is therefore independent of the choice of the infinitesimal controls
$\@h^{\ns}:V\/(\g^{\ns})\to A\@(\h\g^{\ns})\,$.

\smallskip
This aspect is formalized denoting by $\@\dot\g=\big(\de/de t\big)_\g +\psi^{\@i}_{|\h\g}\,\big(\de /de{q^i}\big)_{\g}\@$\vspace{1.5pt} the tangent vector to
the curve $\@\g\@$. Recalling the expression (\ref{2.10}) for the symbolic time derivative of a $1$--form along $\@\g\@$, we have then the following
\begin{corollary}\label{Cor2.1}
The annihilator $\@\big(\@\U\?(\W)\@\big){}^0\@$ is isomorphic to the vector space formed by the totality of continuous (ordinary) $1$--forms
$\@\rho=\rho\@_0\@\/(t)d\/t_{|\g}+\rho_i\/(t)\@d\/q^{\@i}_{|\g}\@$ along $\@\g\@$ satisfying the conditions
\begin{equation}\label{2.44}
\big<\?\rho\@,\@\dot\g\?\big>\@=\@0\,,\qquad\; \d\rho/dt\,=\,0\@.
\end{equation}
\end{corollary}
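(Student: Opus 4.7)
The plan is to realize the stated isomorphism as the quotient map $\varpi\colon T^*_\g(\V) \to V^*(\g)$, restricted to a distinguished choice of representative. The key observation is that each virtual $1$-form $\h\rho = \rho_i\,\wg i$ admits a unique ordinary $1$-form representative $\rho = \rho_0\,dt_{|\g} + \rho_i\,dq^i_{|\g}$ satisfying $\langle\rho,\dot\g\rangle = 0$, namely the one with $\rho_0 := -\rho_i\,\psi^i(\h\g)$. The assignment $\h\rho \mapsto \rho$ is thus a linear bijection between virtual $1$-forms along $\g$ and ordinary $1$-forms along $\g$ annihilating $\dot\g$; it then remains to check that, under this bijection, the conditions of Proposition 2.4 (relations (\ref{2.42}a,b) plus $h$-transport) match continuity of $\rho$ together with $d\rho/dt = 0$.

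Expanding the symbolic time derivative through formula (\ref{2.10}), $d\rho/dt$ becomes a $1$-form on $\A$ along $\h\g$, and its vanishing is equivalent to the three scalar equations obtained by setting the coefficients of $dz^A$, $dq^k$ and $dt$ to zero:
\begin{equation*}
\rho_i\left(\de\psi^i/de{z^A}\right)_{\!\h\g} = 0, \qquad \frac{d\rho_k}{dt} + \rho_i\left(\de\psi^i/de{q^k}\right)_{\!\h\g} = 0, \qquad \frac{d\rho_0}{dt} + \rho_i\left(\de\psi^i/de t\right)_{\!\h\g} = 0.
\end{equation*}
The first is exactly condition (\ref{2.42}a); the second coincides with the $h$-transport law (\ref{2.43}) after the cancellation (\ref{2.42}a) produces. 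The crux of the argument is to show that the third equation is automatically implied by the first two together with the constraint $\rho_0 = -\rho_i\psi^i(\h\g)$: differentiating the constraint in $t$, expanding $\frac{d}{dt}\psi^i(\h\g)$ via the admissibility identity $\dot q^k = \psi^k$, and substituting the transport law for $\dot\rho_i$, the terms involving $\partial\psi^i/\partial q^k$ cancel by an index swap, the term carrying $\partial\psi^i/\partial z^A$ vanishes by (\ref{2.42}a), and what remains is precisely the third equation. This is the only non-routine computation, and it expresses the fact that the annihilation constraint $\langle\rho,\dot\g\rangle = 0$ is preserved by the transport.

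Finally, the corner conditions are handled as follows. The definition of $h$-transport in Sec.\ \ref{Sec2.6} imposes continuity of the components $\rho_i$ across each $a_s$. Continuity of the ordinary $1$-form $\rho$ additionally demands continuity of $\rho_0 = -\rho_i\psi^i(\h\g)$, which, once $\rho_i$ is continuous, reduces to $\rho_i(a_s)\left[\psi^i(\h\g)\right]_{a_s} = 0$, i.e.\ exactly condition (\ref{2.42}b). The converse direction is completely parallel: starting from a continuous $1$-form $\rho$ with $\langle\rho,\dot\g\rangle = 0$ and $d\rho/dt = 0$, set $\h\rho := \varpi(\rho)$ and read off (\ref{2.42}a), (\ref{2.42}b) and the $h$-transport law from the same three coefficient equations. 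The main obstacle is the $dt$-coefficient calculation sketched above; everything else reduces to unpacking definitions.
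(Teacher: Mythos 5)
Your proposal is correct and follows essentially the same route as the paper: both realize the isomorphism via the quotient map $\varpi$ with the distinguished representative $\rho_0=-\rho_i\,\psi^i_{|\h\g}$, expand $\d\rho/dt=0$ in coordinates to recover Eqs.\;(\ref{2.42}a) and (\ref{2.43}), and read off (\ref{2.42}b) from continuity of $\rho_0$ at the corners. The only (cosmetic) difference is that the paper regroups the expansion on the contact basis $\big\{d\/t,\,d\/q^i-\psi^i d\/t,\,d\/z^A-\d z^A/dt\,d\/t\big\}$, which makes the $d\/t$--coefficient disappear automatically, whereas you work in the basis $\{d\/t, d\/q^i, d\/z^A\}$ and verify explicitly that the $d\/t$--equation is redundant --- a check you carry out correctly.
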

\begin{proof}
In coordinates, Eqs.\;(\ref{2.44}) are summarized into the system
\begin{align*}
\rho_0\@&=\@-\@\rho_i\,\psi^{\@i}_{|\h\g}\,,                                                                                   \\[4pt]
0\@&=\@-\d/dt\,\big(\rho_i\@\psi^{\@i}_{|\h\g}\big)\@d\/t_{|\h\g}\@+\@\d\rho_i/dt\,d\/q^{\@i}_{|\h\g}\@+\@\rho_i\,d\?\psi^{\@i}_{|\h\g}\,= \\[4pt]
&=\@\biggl[\d\rho_i/dt\@+\@\rho_k\biggl(\de\psi^k/de{q^i}\biggr)_{\!|\h\g}\biggr]\Big(d\/q^i-\@\psi^i\@d\/t\Big)_{|\h\g}\@+
\@\rho_k\biggl(\de\psi^k/de{z^A}\biggr)_{\!|\h\g}\biggl(d\/z^A-\@\d z^A/dt\,d\/t\biggr)_{\!|\h\g}\@.
\end{align*}
From the latter it is easily seen that the quotient map $\@\varpi:T^*_\g\/(\V)\to V^*\/(\g)\@$ sets up a 1-1 correspondence between continuous $1$--form
satisfying Eqs.\;(\ref{2.44}) and virtual $1$--forms $\@\h\rho:=\rho_i\/(t)\@\wg i\@$ satisfying Eqs.\;(\ref{2.42}a,b), (\ref{2.43}).
\end{proof}

\vskip-4pt
Straightforward consequences of Proposition \ref{Pro2.4} are:
\begin{itemize}
\item
in the absence of constraints, all evolutions are automatically \emph{normal\/};
\item
the abnormality index of a piecewise differentiable section $\@\g\@$ cannot exceed the abnormality index of each single arc $\@\g^{\ns}\@$.
\end{itemize}

\noindent
These and other topics related to normality are discussed in Appendix B.\vspace{2pt}

An important insight into the hierarchy between normality and ordinariness is provided by the following
\begin{proposition}\label{Pro2.5}
The normal evolutions form a subset of the ordinary ones.
\end{proposition}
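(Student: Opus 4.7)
The plan is to establish ordinariness by showing $\ker(\U) \subseteq \Delta(\g)$ through an implicit function theorem argument; the reverse inclusion holds by definition.

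Fix $(U_0, \tilde\alpha^{(0)}) \in \ker(\U)$, and let $X_0$ denote the corresponding admissible infinitesimal deformation produced via (\ref{2.40}); the hypothesis $(U_0, \tilde\alpha^{(0)}) \in \ker(\U)$ ensures that $X_0$ vanishes at both endpoints. Normality supplies auxiliary pairs $(U_j, \tilde\alpha^{(j)}) \in \W$, $j = 1, \ldots, n$ with $n = \dim V_h$, chosen so that the vectors $\U(U_j, \tilde\alpha^{(j)})$ form a basis of $V_h$. For each $(\xi, \eta) = (\xi, \eta_1, \ldots, \eta_n)$ in a neighborhood of $0 \in \R^{n+1}$ I would construct an admissible piecewise differentiable deformation $\g_{\xi, \eta}$ of $\g$ by perturbing, on each arc, the control functions $z^A(t)$ by $\xi U_0^A(t) + \sum_j \eta_j U_j^A(t)$, shifting the corner times $a_s$ by $\xi \alpha_s^{(0)} + \sum_j \eta_j \alpha_s^{(j)}$, and integrating (\ref{2.3}) arc by arc from the fixed initial point $\g(t_0)$, so that the matching conditions (\ref{2.27}) hold by construction.

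Introducing the endpoint map $F(\xi, \eta) := \g_{\xi, \eta}(t_1)$, classical smooth dependence of ODE flows on parameters makes $F$ of class $C^1$ near the origin, while the derivation leading to (\ref{2.41}) identifies $\partial F / \partial \xi\big|_0 = \U(U_0, \tilde\alpha^{(0)}) = 0$ and $\partial F / \partial \eta_j\big|_0 = \U(U_j, \tilde\alpha^{(j)})$, a basis of $V_h$. The partial Jacobian of $F$ in the $\eta$-variables is therefore invertible at the origin; the implicit function theorem furnishes smooth functions $\eta_j(\xi)$ with $\eta_j(0) = 0$ solving $F(\xi, \eta(\xi)) = \g(t_1)$, and differentiating this identity at $\xi = 0$ together with the vanishing of $\partial F / \partial \xi\big|_0$ forces $\eta_j'(0) = 0$ for every $j$. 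The resulting family $\xi \mapsto \g_{\xi, \eta(\xi)}$ is then an admissible finite deformation of $\g$ with fixed endpoints whose tangent at $\xi = 0$ equals $X_0$, showing $(U_0, \tilde\alpha^{(0)}) \in \X$ and hence $\ker(\U) \subseteq \Delta(\g)$.

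The main obstacle is the regularity analysis supporting the $C^1$-dependence of $F$ on its parameters at the corners: varying the switching times $a_s$ changes the domains on which the piecewise smooth controls act, mixing infinite-dimensional variations of the control functions with finite-dimensional variations of the $a_s$. A clean route is to work arc by arc --- invoking classical smooth dependence of the flow of (\ref{2.3}) on initial data, control functions and interval endpoints on each individual arc --- and then to concatenate, tracking how the endpoint of each arc (including the contribution of the corner jump $\big[\h\g\big]_{a_s}$) propagates into the initial condition of the next. It is precisely in this concatenation that the first-order expansion of $F$ reproduces the integral and corner-jump sum appearing in (\ref{2.41}), making the implicit function theorem applicable.
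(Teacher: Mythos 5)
Your argument is correct, but it follows a genuinely different route from the one in Appendix A. The paper fixes adapted coordinates along each arc, introduces an auxiliary fibre metric $G^{AB}$ on $V\/(\A)$, and builds a family $\g_{(\xi,\nu)}$ in which the $n$ correction parameters $\nu\in V_h^*$ enter only at order $\xi^2$ (Eqs.\;(\ref{A.8}a,b)); the endpoint discrepancy then factorizes as $\tfrac12\,\xi^2\,\theta^i(\xi,\nu)$ with $\theta^i|_{\xi=0}=b^i+S^{ij}\nu_j$, and the implicit function theorem is applied to $\theta=0$ after showing (Theorem \ref{TeoA.2}) that the Gram matrix (\ref{A.15}) has rank $n-p$, $p$ being the abnormality index. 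You instead take $n$ first--order auxiliary variations $(U_j,\tilde\alpha^{(j)})$ whose images under $\U$ span $V_h$ --- which is precisely what normality supplies --- apply the implicit function theorem directly to the endpoint map $F(\xi,\eta)$, and recover tangency to $X_0$ from the a posteriori identity $\eta'(0)=0$. Your version is shorter and needs no metric; the paper's quadratic construction keeps the deformation tangent to $X_0$ by design and, more importantly, is what produces the finer results of which Proposition \ref{Pro2.5} is only the $p=0$ case, namely Theorem \ref{TeoA.2} and the deformability criterion for abnormal evolutions in Theorem \ref{TeoA.3}. Two points deserve explicit care in a full writeup: (i) the identification $\de F/de{\eta_j}\big|_0=\U(U_j,\tilde\alpha^{(j)})$ presupposes that perturbing the bare functions $z^A(t)$ by $\eta_j\,U_j^A(t)$ yields an infinitesimal deformation whose vertical component relative to the chosen infinitesimal control is exactly $U_j$; this holds automatically only in charts where $h_i{}^A=0$ (the paper's adapted charts of Corollary \ref{CorA.1}), and otherwise the perturbation should be routed through the controls $\s^{\ns}$ rather than through $z^A(t)$ alone; (ii) the joint $C^1$ dependence of $F$ on $(\xi,\eta)$ across the moving corner times, which you rightly flag as the delicate step, is exactly what the flow--and--slice construction in the proof of Theorem \ref{TeoA.1} settles, and can be imported verbatim into your scheme. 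With these points made precise, your proof establishes $\ker(\U)\subseteq\X$ and hence the proposition.
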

The result is proved in Appendix A. In this connection, see also \cite{Hsu}.

\medskip

\section{Calculus of variations}\label{Sec3}
\subsection{Extremaloids}\label{Sec3.1}
Let us now come to the central problem of variational calculus. Let $\@\Lagr\in\F\/(\A)\,$ denote a differentiable function on the velocity space $\A\@$,
henceforth called the \emph{Lagrangian\/}.
Also, let $\?\arc{\@\g}{\?t_0\?,\?t_1}\@$ ($\@\g\?$ for short) denote an admissible piecewise differentiable evolution of the system, defined on a closed
interval $\@[t_0,t_1]\subset\R\@$.

Indicating by $\@\h\g\@$ the lift of $\@\g\@$, consider the \emph{action functional\/}
\begin{equation}\label{3.1}
\I\@[\g]:= \int_{\h\g} \Lagr\@dt
\,:=\,\sum_{s=1}^N\;\int_{a_{s-1}}^{\@a_s}\@\big(\h\g^{\ns}\big)^*\/(\Lagr)\,dt
\end{equation}
\begin{definition}\label{Def3.1}
The evolution $\?\g\?$ is called an \emph{extremaloid} for the functional (\ref{3.1})\vspace{1.8pt} if and only if, for every admissible deformation $\?\gxi=\?$
$\big\{\arc{\?\g^{\ns}_{\;\xi}}{\@a_{s-1}\/(\xi),a_s\/(\xi)\@}\?\big\}\?$ \vspace{1pt} with fixed endpoints, the function
\begin{equation*}
\I\@(\xi)\@:=\@\int_{\hgxi} \Lagr\@dt =\sum_{s=1}^N\;\int_{a_{s-1}\/(\xi)}^{\@a_s\/(\xi)}\@
\big(\h\g_{\;\xi}^{\ns}\big)^*\/(\Lagr)\,dt
\end{equation*}
is stationary at $\xi=0\@$.
\end{definition}

The content of Definition \ref{Def3.1} is formalized denoting by $\h X_{\ns}\@$ the infinitesimal deformation associated with each $\?\h\g^{\ns}_{\;\xi}\@$.
Recalling Eq.\;(\ref{2.36}) as well as the definition $\/\a_s=\d a_s/d\xi\big|_{\xi=0}\,$, we have then the evaluation
\begin{subequations}\label{3.2}
\begin{multline}
\d \/\I/d\xi\bigg|_{\xi=0}\,=\,\sum_{s=1}^N\left[\, \d/d\xi\int_{a_{s-1}\/(\xi)}^{\@a_s\/(\xi)}\@
\Lagr\/\big(\h\g^{\ns}_{\;\xi}\big)\,d\/t\,\right]_{\@\xi=0}\,=                         \\[3pt]
=\,\sum_{s=1}^N\@\bigg\{\@\int_{a_{s-1}}^{\@a_s}\@\h X_{\ns}\/(\Lagr)\,d\/t\;+\;\Big[\@
\a_s\,\Lagr\/(\h\g^{\ns}\/(a_s))-\a_{s-1}\,\Lagr\/(\h\g^{\ns}\/(a_{s-1}))\@\Big]\@\bigg\}\@.
\end{multline}
On account of the requirement $\@\a_0=\a_N=0\@$, denoting by
\begin{equation*}
\big[\?\Lagr\/(\h\g)\?\big]_{\?a_s}:=\big[\@\Lagr\/(\h\g^{\nS}\/(a_s))-
\Lagr\/(\h\g^{\ns}\/(a_s))\@\big]
\end{equation*}
the jump of the function $\@\Lagr\/(\h\g\/(t))\@$ at $t=a_s$, Eq.\;(\Ref{3.2}a) may be concisely written as
\begin{equation}
\d \/\I/d\xi\bigg|_{\xi=0}\,=\,
\sum_{s=1}^N\@\int_{a_{s-1}}^{\@a_s}\@\Big(\@X_{\ns}^{\;i}\@\DE_i\/(\Lagr)+U_{\ns}^A\@\de
\Lagr/de{z^A}\@\Big)\,d\/t\,-\,\sum_{s=1}^{N-1}
\a_s\@\big[\?\Lagr\/(\h\g)\?\big]_{\?a_s}\@.
\end{equation}
\end{subequations}

To handle Eq.\;(\Ref{3.2}b),  we introduce $N$ virtual $1$--forms $\@\h\rho\?^{\ns}\!=p_{\;i}^{\ns}\!(t)\@\wgs i\@$ (one for each arc $\@\g^{\ns}\@$)
satisfying the transport law\,%
\footnote%
{\,For the notation, see Eq.\;(\ref{2.21})}
\begin{subequations}\label{3.3}
\begin{align}
&\bigg(\D\@\h\rho\@^{\ns}/Dt\bigg)_{\!\g^{\ns}}\,=\,h^*(d\?\Lagr)_{|\?\h\g^{\ns}}\,=
\,\Big(\?\DE_i\@\Lagr\@\Big)_{\h\g^{\ns}}\,\wgs i \\
\intertext{as well as the matching conditions}
&\h\rho\@^{\ns}\big|_{\?a_s}\,=\,\h\rho\@^{\nS}\big|_{\?a_s}\@,\qquad\;s=1\?,\ldots,\?N-1\@.
\end{align}
\end{subequations}

Once again, for notational convenience, we collect all $\@\h\rho\@^{\ns}\@$ into a continuous, piecewise differentiable section $\@\h\rho:[\@t_0,t_1\@]\to
V^*\/(\g)\@$ according to the prescription
\begin{equation}\label{3.4}
\h\rho\?(t)\@=\@\h\rho\@^{\ns}\/(t)
\qquad\quad\forall\;t\in[\@a_{s-1}\?,\@a_s\@]\@.
\end{equation}

On account of Eqs.\;(\ref{3.3}\,a,b), $\@\h\rho\@$ is then uniquely determined by $\@\Lagr\@$, up to an arbitrary $h$--transported virtual $1$--form along
$\@\g\@$.

Taking the covariant variational equation (\ref{2.34}) as well as the duality relations 
\linebreak
$\@\Big<\Big(\de/de{q^i}\Big)_{\!\g^{\ns}}\@,\@\wgs k\Big>
=\delta\?^k_i\@$ into account, by Eq.\;(\Ref{3.3}a) we get the expression
\begin{align*}
X_{\ns}^{\;i}\,\DE_i\@\Lagr=
\bigg<\!X_{\ns}\@,\bigg(\!\D\h\rho\@^{\ns}/Dt\bigg)_{\!\!\g^{\ns}}\bigg>
&=\d/d t\,\Big<\@X_{\ns}\@,\@\h\rho\@^{\ns}\@\Big>-
\bigg<\!\bigg(\!\D X_{\ns}/Dt\!\bigg)_{\!\!\g^{\ns}}\!,\@\h\rho\?^{\ns}\bigg>=       \\[4pt]
&=\,\d/d t\,\Big(X_{\ns}^{\;i}\;p_{\;i}^{\ns}\@\Big)\,-\,
p_{\;i}^{\ns}\bigg(\de\psi^i/de{z^A}\bigg)_{\!\h\g^{\ns}}\@U_{\ns}^{A}\quad
\end{align*}

\vskip-10pt\noindent whence also\vspace{4pt}
\begin{equation*}
\int_{a_{s-1}}^{\@a_s}X_{\ns}^{\;i}\;\DE_i\/(\Lagr)\;d\/t\;=\,
\bigg[\@X_{\ns}^{\;i}\;p_{\;i}^{\ns}\@\bigg]_{a_{s-1}}^{a_s}-\;
\int_{a_{s-1}}^{\@a_s}\@p_{\;i}^{\ns}\bigg(\de\psi^i/de{z^A}\BIGG)_{\h\g^{\ns}}
\@U_{\ns}^{A}\;d\/t\@.
\end{equation*}

Summing over $s$ and recalling Eqs.\;(\ref{2.33}), (\ref{3.3}b) as well as the conditions $\@X\/(t_0)=X\/(t_1)=0\@$, this implies the relation
\begin{equation*}
\sum_{s=1}^N\,\int_{a_{s-1}}^{\@a_s}X_{\ns}^{\;i}\;\DE_i\/(\Lagr)\;d\/t\,=\,
-\int_{t_0}^{t_1}p_i\@\bigg(\de\psi^i/de{z^A}\bigg)_{\!\h\g}\@U^A\,d\/t\,+\,
\sum_{s=1}^{N-1}\a_s\@\Big[\?\psi^i\/(\h\g)\?\Big]_{a_s}\@p_i\/(a_s)\@.
\end{equation*}

In this way, omitting all unnecessary subscripts, Eq.\;(\Ref{3.2}b) attains the final form
\begin{equation}\label{3.5}
\d \/\I/d\xi\bigg|_{\xi=0}=\int_{t_0}^{t_1}\!\bigg(\de \Lagr/de{z^A}\@-\@p_i\,
\de\psi^i/de{z^A}\bigg)\@U^A\,d\/t\,+\,\sum_{s=1}^{N-1}
\a_s\@\Big[\,p_i\/(t)\,\psi^i\/(\h\g)\@-\@\Lagr\/(\h\g)\@\Big]_{\?a_s}\@.
\end{equation}

In the algebraic environment introduced in Sec. \!2.7, the previous discussion is naturally formalized regarding the right--hand side of Eq.\;(\ref{3.5}) as a
linear functional $\@d\@\I_\g:\W\to\R\@$ on the vector space $\@\W=\VV\oplus\R^{N-1}$.

A necessary and sufficient condition for $\@\g\@$ to be an extremaloid for the functional (\ref{3.1}) is then the vanishing of $\@d\@\I_\g\@$ on the subset
$\@\X\subset\W\@$ formed by the totality of elements$\@\@(U,\@\atilde\?)\@$ tangent to admissible finite deformations with fixed endpoints. By linearity, this
condition is mathematically equivalent to the requirement
\begin{equation}\label{3.6}
\Delta\/(\g)\@\subset\@\ker\@(d\@\I_\g)\@,
\end{equation}
$\@\Delta\/(\g)=\text{Span}\@(\X)\subseteq\ker\@(\U)\@$ denoting the variational space of $\@\g\@$.

As we shall see, Eq.\;(\ref{3.6}) provides an algorithm for the determination of \emph{all\/} the extremaloids of the functional (\ref{3.1}) within the class
of \emph{ordinary\/} evolutions.

The exceptional case is more complicated: the lack of an explicit characterization of the space $\@\Delta\/(\g)\@$ in terms of local properties of the
section~$\g\@$ requires in fact a direct check of Eq.\;(\ref{3.6}) on each exceptional evolution.

In what follows we pursue an intermediate strategy; namely, rather than dealing with Eq.\;(\ref{3.6}) we discuss the implications of the stronger requirement
\begin{subequations}\label{3.7}
\begin{equation}
\ker\@(\U)\@\subset\@\ker\@(d\@\I_\g)\@.
\end{equation}

According to the classification introduced in Sec. \!\ref{Sec2.7}, the latter is \emph{necessary and sufficient\/} for an \emph{ordinary\/} evolution, but merely
\emph{sufficient\/} for an \emph{exceptional\/} one, to be an extremaloid of the functional (\ref{3.1}).

By elementary algebra, the requirement (\Ref{3.7}a) is equivalent to the existence of a (possibly non--unique) linear functional $\@\h\s:V_h\to\R\@$ ---
i.e.\;of a $h$--transported virtual $1$--form along $\@\g\@$ --- satisfying the relation
\begin{equation}
\begin{picture}(40,11)
\put (11,7.5){\vector(1,0){11}}     \put (15,9.5){$\U$}
\put (26,5){\vector(0,-1){8}}       \put (28,0){$\h\s$}
\put (8,5){\vector(3,-2){15}}       \put (9,-2.7){$d\@\I_\g$}
\put (7,7.5){\makebox(0,0){\hss$\W$\hss}}
\put (26,7.5){\makebox(0,0){\hss$V_h$\hss}}
\put (26,-5.6){\makebox(0,0){\hss$\R$\hss}}
\end{picture}\plus{36}{36}
\end{equation}
\end{subequations}

Setting $\@\h\s=\s_a\,e^{(a)}=\s_i\@\wg i\@$ and recalling Eqs.\;(\ref{2.41}), (\ref{3.5}), Eq.\;(\Ref{3.7}b) amounts to the condition
\begin{multline*}
\int_{t_0}^{t_1}\bigg(\@\de \Lagr/de{z^A}\@-\@p_i\,\de\psi^i/de{z^A} \@\bigg)U^A\,d\/t\,+\,\sum_{s=1}^{N-1}
\a_s\@\Big[\@p_i\/(t)\@\psi^i\/(\h\g)\@-\@\Lagr\/(\h\g)\@\Big]_{\?a_s}\,=                                   \\[2pt]
=\,\s_a\left(\;\int_{t_0}^{t_1}U^A\,\e^a_i\@\bigg(\de\psi^i/de{z^A}\BIGG)_{\h\g}\@d\/t\;-\@
\sum_{s=1}^{N-1}\@\a_s\,\@\e^a_i\?(a_s)\@\Big[\?\psi^i\/(\h\g)\?\Big]_{a_s}\,\right)\,=                     \\[2pt]
=\,\int_{t_0}^{t_1}U^A\,\s_i\@\bigg(\de\psi^i/de{z^A}\BIGG)_{\h\g}\@d\/t\;-\@
\sum_{s=1}^{N-1}\@\a_s\,\@\s_i\?(a_s)\@\Big[\?\psi^i\/(\h\g)\?\Big]_{a_s}\@.\quad
\end{multline*}
By the arbitrariness of $\@(U,\@\atilde\?)\@$, the latter splits into the system

\vskip-14pt
\begin{subequations}\label{3.8}
\begin{align}
&\de \Lagr/de{z^A}\@-\big(\@p_i\@+\s_i\big)\@\de\psi^i/de{z^A}\,=\,0\,,
\qquad&&A=1,\ldots,r\,,                                                                                     \\[5pt]
&\Big[\@\big(\@p_i\@+\s_i\big)\@\psi^i\/(\h\g)\@
-\@\Lagr\/(\h\g)\@\Big]_{\?a_s}\,=\,0\,,\qquad&&s=1,\ldots,N-1\,.\qquad\quad
\end{align}
\end{subequations}

\smallskip\noindent
Collecting all results, and recalling Propositions \ref{Pro2.4}, \ref{Pro2.5} we conclude
\begin{theorem}\label{Teo3.1}
Given an admissible evolution $\@\g\@$, let $\@\p\?$ denote the totality of piecewise differentiable virtual $\?1$--forms $\?\h\rho=p_i\/(t)\,\wg i$ along
$\?\g$ satisfying Eqs.\;(\ref{3.3}a,b), (\ref{3.4}) as well as the finite relations
\begin{subequations}\label{3.9}
\begin{align}
&p_i\,\de\psi^i/de{z^A}\,=\,\de \Lagr/de{z^A}\,,\qquad&&A=1,\ldots,r                                        \\
\intertext{and the matching conditions}
&\Big[\@p_i\@\psi^i\/(\h\g)\@-\@\Lagr\/(\h\g)\@\Big]_{\?a_s}\,=\,0\,,\qquad&&s=1,\ldots,N-1\,.
\end{align}
\end{subequations}
Then: \ALPH
\begin{enumerate}
\item the condition $\?\p\ne\emptyset\@$ is \emph{sufficient\/} for $\?\g\?$ to be an extremaloid for the functional (\ref{3.1})\@;
\item if $\@\g\@$ is an \emph{ordinary\/} evolution, the condition $\?\p\ne\emptyset\@$ is also \emph{necessary\/} for $\@\g\@$ to be an extremaloid;
\item $\g\?$ is a \emph{normal extremaloid\/}, namely an extremaloid belonging to the class of normal evolutions, if and only if the set $\@\p\@$ consists of a
    single element.
\end{enumerate}
\end{theorem}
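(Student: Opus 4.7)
\medskip
\noindent\textbf{Proof proposal.}

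The plan is to reduce all three assertions to the formula (\ref{3.5}) for the first variation $d\I_\g$, regarded as a linear functional on $\@\W=\VV\oplus\R^{N-1}\@$, combined with the elementary equivalence (\ref{3.7}a)$\,\Leftrightarrow\,$(\ref{3.7}b) already pointed out in the text. The decisive remark is that, along any admissible $\@\g\@$, the components $\@p_i\@$ of any virtual $1$--form $\@\h\rho\@$ satisfying the transport law (\ref{3.3}a,b) are determined only up to the components $\@\s_i\@$ of an arbitrary $h$--transported virtual $1$--form $\@\h\s=\s_i\,\wg i\@$.

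For part (a), assuming $\@\h\rho\in\p\@$, the finite conditions (\ref{3.9}a,b) make both the integrand and the corner sum on the right--hand side of (\ref{3.5}) vanish identically; hence $\@d\I_\g\equiv 0\@$ on $\@\W\@$, \emph{a fortiori\/} on $\@\Delta(\g)\@$, which is precisely the condition (\ref{3.6}) for $\@\g\@$ to be an extremaloid. For part (b), ordinariness gives $\@\Delta(\g)=\ker(\U)\@$, so $\@\g\@$ being an extremaloid coincides with (\ref{3.7}a), and hence by (\ref{3.7}b) there exists an $h$--transported $\@\h\s\@$ with $\@d\I_\g=\h\s\circ\U\@$. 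Expanding both sides via (\ref{3.5}) and (\ref{2.41}) and using the arbitrariness of $\@(U,\atilde)\in\W\@$ yields the system (\ref{3.8}a,b); then, for any particular solution $\@\h\rho\@$ of (\ref{3.3}a,b), the shifted form $\@\h\rho':=\h\rho+\h\s\@$ still obeys (\ref{3.3}a,b) --- the transport law since $\@D\h\s/Dt\equiv 0\@$ on each arc, and the corner matching since $h$--transported $1$--forms are continuous at corners by construction --- while the shifted components $\@p_i+\s_i\@$ turn (\ref{3.8}a,b) into (\ref{3.9}a,b). Hence $\@\h\rho'\in\p\@$.

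For part (c), the same shift argument shows that $\@\p\@$ is an affine subspace of $\@V^*\/(\g)\@$ modelled on the space of $h$--transported virtual $1$--forms $\@\h\s\@$ with $\@\h\s\circ\U\equiv 0\@$ on $\@\W\@$, i.e.\ on the annihilator $\@(\U(\W))^0\subset V_h^*\@$ already characterized in Proposition \ref{Pro2.4}. In the ``only if'' direction, Proposition \ref{Pro2.5} ensures that a normal extremaloid is ordinary, so by (b) $\@\p\ne\emptyset\@$; normality gives $\@(\U(\W))^0=\{0\}\@$, whence $\@\p\@$ consists of a single element. Conversely, if $\@\p\@$ has a unique element then $\@\p\ne\emptyset\@$ makes $\@\g\@$ an extremaloid by (a), while the triviality of the modelling space forces $\@(\U(\W))^0=\{0\}\@$; since $\@V_h\@$ is finite--dimensional, this gives $\@\U(\W)=V_h\@$, i.e.\ normality.

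The only delicate step is the bookkeeping in (b): recognising that the absorption $\@\h\rho\mapsto\h\rho+\h\s\@$ converts the $\@\h\s\@$--dependent system (\ref{3.8}a,b) --- which, as derived in the text, characterises extremaloids through (\ref{3.7}a)$\,\Leftrightarrow\,$(\ref{3.7}b) --- into the $\@\h\s\@$--free system (\ref{3.9}a,b) of the theorem. Once this identification is in place, everything else reduces to linear algebra within the finite--dimensional spaces $\@V_h\@$ and $\@V_h^*\@$.
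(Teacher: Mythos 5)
Your proposal is correct and follows essentially the same route as the paper's own proof: part (a) via the vanishing of the right--hand side of Eq.\;(\ref{3.5}) under Eqs.\;(\ref{3.9}a,b), part (b) via the factorization (\ref{3.7}b) and the absorption $\h\rho\mapsto\h\rho+\h\s$ turning Eqs.\;(\ref{3.8}a,b) into Eqs.\;(\ref{3.9}a,b), and part (c) via the affine structure of $\p$ over the annihilator $\big(\U(\W)\big)^0$ together with Propositions \ref{Pro2.4} and \ref{Pro2.5}. The only (harmless) difference is that you spell out the converse implication in part (c) slightly more explicitly than the paper does.
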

\begin{proof}
In view of Eqs.\;(\ref{3.5}), (\Ref{3.9}a,b), the existence of at least one $\@\h\rho\in\p\@$ ensures the validity of $\@\d \I/d\xi\big|_{\?\xi=0}\@
=\@0\plus{10.8}{6.8}\,$ for all admissible infinitesimal deformations vanishing at the endpoints of $\?\g\?$. Assertion a) is then a direct consequence of
Definition~\ref{Def3.1}.

In particular, according to our previous discussion, if $\@\g\@$ is an ordinary extremaloid, for any continuous virtual $1$-form  $\@\h\rho= p_i\@\?\wg i\@$
obeying the transport law (\Ref{3.3}a) there exists at least one $h$--transported $1$-form  $\@\h\s=\s_i\@\wg i\@$ satisfying Eqs.\;(\Ref{3.8}a,b). The sum
$\@\h\rho+\h\s=\big(\@p_i\@+\s_i\@\big)\,\wg i\@$ is then automatically in the class $\p$, thus proving assertion b).

Finally, as pointed out in Sec. \!2.7, the normal evolutions form a subclass of the ordinary ones, uniquely characterized by the requirement
$\@\big(\?\U\?(\W)\?\big){}^0=\{\?0\?\}\@$.

Therefore, according to assertion b), a normal evolution $\@\g\@$ is an extremaloid if and only if the class $\p$ is nonempty.
Moreover, by Eqs.\;(\Ref{3.3}a), (\Ref{3.8}a), given any pair $\@\h\rho\@,\@\h\rho'\in\p\@$, the difference $\@\h\rho-\h\rho'\@$ is an $h$--transported
$1$--form satisfying Eqs.\;(\Ref{2.42}a,b).
By Proposition \ref{Pro2.4} this implies 
\linebreak
$\@\h\rho-\h\rho'\@\in\@\big(\@\U\?(\W)\@\big){}^0\@\Rightarrow\@\h\rho=\h\rho'\@$, thus establishing assertion c).
\end{proof}

In view of Eqs.\;(\Ref{2.24}b), (\Ref{2.25}b), for any $\@\h\rho\in\p\@$ the transport law (\Ref{3.3}a) simplifies to
\begin{equation*}
\d p_i/d t\,+\,p_k\left(\de\psi^k/de{q^i}\right)_{\!\h\g}\,+\,h_i{}^A\cancel{\,p_k\left(\de\psi^k/de{z^A}\right)_{\!\h\g}}\,=\,
\left(\de \Lagr/de{q^i}\right)_{\!\h\g}\,+\,\cancel{h_i{}^A\@\left(\de \Lagr/de{z^A}\right)_{\!\h\g}}\@,
\end{equation*}
the cancellation being due to Eq.\;(\Ref{3.9}a). Exactly as it happened with Proposition~\ref{Pro2.4}, all assertions of Theorem \ref{Teo3.1} have therefore an
intrinsic meaning, irrespective of the choice of the infinitesimal controls $\@h^{\ns}:V\/(\g^{\ns})\to A\/(\h\g^{\ns})\,$.

The previous arguments show that the determination of the \emph{ordinary\/} extremaloids of the functional (\ref{3.1}) relies on $2\@n+r\/$ equations

\vskip-12pt
\begin{subequations}\label{3.10}
\begin{align}
& \d q^i/d t\,=\,\psi^i\/(t,q^i,z^A)\,,                                             \\[1pt]
& \d p_i/d t\,+\,\de\psi^k/de{q^i}\,p_k\,=\,\de \Lagr/de{q^i}\,,                    \\[1pt]
&p_i\,\de\psi^i/de{z^A}\,=\,\de \Lagr/de{z^A}
\end{align}
\end{subequations}
for the unknowns $\@q^i\/(t)\@,p_i\/(t)\@,\@z^A\/(t)\@$, completed with the continuity requirements
\begin{equation}\label{3.11}
\big[\@q^i\@\big]_{\?a_s}\,=\,\big[\@p_i\@\big]_{\?a_s}\,=\,\big[\@p_i\,\psi^i-\Lagr\@\big]_{\?a_s}\,=\,0\@,\qquad\quad s=1,\ldots,N-1\@.
\end{equation}

As already pointed out, all equations are independent of the choice of the infinitesimal controls, and involve only the ``true'' data of the problem, namely
the Lagrangian $\?\Lagr\@$ and the constraint equations (\ref{2.2}). In particular:
\begin{itemize}
\item
the algorithm (\Ref{3.10}a,b,c), (\ref{3.11}) is invariant under arbitrary transformations of the form
\begin{equation}\label{3.12}
\Lagr\,\to\,\Lagr\@+\,\de f/de t\,+\,\de f/de{q^k}\,\psi^k\,,\qquad p_i\/(t)\,\to\,p_i\/(t)\@+\@\bigg(\de f/de{q^i}\bigg)_{\!\g\/(t)}\@,
\end{equation}
$f\/(t,q^1,\ldots,q^n)\@$ being any differentiable function over $\@\V\@$;\vspace{5pt}

\item
the last pair of equations (\ref{3.11}) extend to the present context the well known \emph{Erdmann--Weierstrass corner conditions\/} of holonomic variational
calculus \cite{Sagan,Giaquinta}.
\end{itemize}

In strict analogy with the result established in Corollary \ref{Cor2.1}, a more compact description of the extremality conditions is provided by the following
\begin{corollary}\label{Cor3.1}
The class $\@\p\@$ is in 1-1 correspondence with the (possibly empty) affine space $\@\wp^*\/(\g)\@$ formed by the totality of \emph{continuous\/} $1$--forms
$\@\rho=p_0\@d\/t_{|\g}+p_i\@d\/q^{\@i}_{|\g}\@$ along $\@\g\@$ satisfying the conditions
\begin{equation}\label{3.13}
\big<\?\rho\@,\@\dot\g\?\big>\@=\@\Lagr_{|\h\g}\,,\qquad\; \d\rho/dt\,=\,d\/\Lagr_{|\h\g}\@.
\end{equation}
\end{corollary}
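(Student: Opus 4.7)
The plan is to imitate the strategy already used in the proof of Corollary \ref{Cor2.1}: namely, to show that the quotient map $\varpi\colon T^*_\g(\V)\to V^*(\g)$ restricts to an affine bijection between $\wp^*(\g)$ and $\p$. Concretely, given an ordinary $1$--form $\rho=p_0\,d\/t_{|\g}+p_i\,d\/q^i_{|\g}$ along $\g$, one attaches to it the virtual form $\h\rho:=p_i(t)\,\wg i=\varpi(\rho)$, and the claim is that $\rho\in\wp^*(\g)$ if and only if $\h\rho\in\p$.

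First I would unpack the two conditions in (\ref{3.13}). Evaluating $\langle\rho,\dot\g\rangle$ using the local expression $\dot\g=\bigl(\de/de t\bigr)_\g+\psi^i_{|\h\g}\bigl(\de/de{q^i}\bigr)_\g$ reduces the first equation of (\ref{3.13}) to the single scalar relation $p_0=\Lagr_{|\h\g}-p_i\,\psi^i_{|\h\g}$, thus expressing $p_0$ algebraically in terms of the $p_i$'s and the data. Next I would compute $d\rho/d\/t$ from the formula (\ref{2.10}) for the symbolic time derivative of a $1$--form along $\g$, and split the resulting identity $d\rho/dt=d\Lagr_{|\h\g}$ into the three groups of coefficients of $d\/t$, $d\/q^i$ and $d\/z^A$ at points of differentiability. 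The coefficient of $d\/z^A$ gives exactly Eq.~(\Ref{3.9}a); the coefficient of $d\/q^i$ gives precisely the transport equation (\Ref{3.3}a) (after the cancellation already noted below Theorem~\ref{Teo3.1}, which is legitimate thanks to (\Ref{3.9}a)); the coefficient of $d\/t$ is then automatically implied by differentiating the relation $p_0=\Lagr_{|\h\g}-p_i\,\psi^i_{|\h\g}$, so it carries no extra information.

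To handle the corners I would use the fact that continuity of $\rho$ at $t=a_s$ is equivalent, in the presence of the relation $p_0=\Lagr-p_i\psi^i$, to the simultaneous continuity of the $p_i$'s (this is the matching condition (\Ref{3.3}b), equivalently (\ref{3.4})) and of the combination $p_i\psi^i(\h\g)-\Lagr(\h\g)$, which is precisely the Erdmann--Weierstrass corner condition (\Ref{3.9}b). Together with the previous step this establishes that $\varpi$ sends $\wp^*(\g)$ into $\p$.

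Finally, for the inverse direction, given $\h\rho=p_i(t)\,\wg i\in\p$ one simply defines $p_0(t):=\Lagr_{|\h\g}-p_i\psi^i_{|\h\g}$ and sets $\rho:=p_0\,d\/t_{|\g}+p_i\,d\/q^i_{|\g}$. The continuity of $p_0$ follows from (\Ref{3.9}b), so $\rho$ is globally continuous; the computation already performed shows that Eqs.~(\ref{3.13}) hold. Since the two constructions are manifestly inverse to each other, the correspondence is bijective, and it is affine because $\p$ and $\wp^*(\g)$ are each either empty or an affine space modelled on the linear space described in Corollary~\ref{Cor2.1}. The only mildly delicate point is checking that no constraint on $p_0$ escapes the $d\/t$ coefficient of $d\rho/d\/t=d\Lagr_{|\h\g}$; this is a short direct verification rather than a genuine obstacle.
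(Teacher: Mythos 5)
Your proposal is correct and follows essentially the same route as the paper: both reduce Eqs.~(\ref{3.13}) in coordinates to $p_0=\Lagr_{|\h\g}-p_i\,\psi^i_{|\h\g}$ plus the relations (\Ref{3.10}b,c) and the corner conditions (\ref{3.11}), observe that the $d\/t$--component carries no independent information, and conclude via the quotient map $\varpi$. The only cosmetic difference is that the paper organizes the decomposition of $d\/\Lagr_{|\h\g}-\d\rho/dt$ on the contact basis $\big(d\/q^i-\psi^i d\/t\big)$, $\big(d\/z^A-\tfrac{d z^A}{dt}\,d\/t\big)$, whereas you split on $d\/t,\,d\/q^i,\,d\/z^A$ and then discard the redundant $d\/t$--coefficient.
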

\begin{proof}
In coordinates, Eqs.\;(\ref{3.13}) are summarized into the system
\begin{subequations}\label{3.14}
\begin{align}
p_0\@&=\@\Lagr_{|\h\g}-\@p_i\,\psi^{\@i}_{|\h\g}\,,                                                                                                \\[4pt]
0\;&=\@d\/\Lagr_{|\h\g}-
\d/dt\,\big(\Lagr_{|\h\g}-p_i\@\psi^{\@i}_{|\h\g}\big)\@d\/t_{|\h\g}\@-\@\d p_i/dt\,d\/q^{\@i}_{|\h\g}\@-\@p_i\,d\?\psi^{\@i}_{|\h\g}\,=        \nn\\[4pt]
&\quad\;=\biggl[\biggl(\de \Lagr/de{q^i}\biggr){\plus07}_{\!|\h\g}-\d p_i/dt-p_k\biggl(\de\psi^k/de{q^i}\biggr){\plus07}_{\!|\h\g}\biggr]
\Big(d\/q^i\!\@-\psi^i\@d\/t\Big){\!\plus05}_{|\h\g}\@+                                                                                         \nn\\[4pt]
&\quad\;+\@\biggl[\biggl(\de \Lagr/de{z^A}\biggr){\plus07}_{\!|\h\g}-
p_k\biggl(\de\psi^k/de{z^A}\biggr){\plus07}_{\!|\h\g}\biggr]\biggl(d\/z^A-\@\d z^A/dt\,d\/t\biggr){\plus07}_{\!|\h\g}\@.
\end{align}
\end{subequations}

Eq.\;(\ref{3.14}a,b), together with the assumed continuity of the components $\@p_0\@,p_i\@$ at the corners, reproduce the content of Eqs.\;(\ref{3.10}b,c),
(3.11).
Recalling the previous discussion on the characterization of the class $\@\p\@$, we conclude that the quotient map $\@\varpi:T^*_\g\/(\V)\to V^*\/(\g)\@$ sets
up a 1-1 correspondence $\@\wp^*\/(\g)\leftrightarrow\p\@$.
\end{proof}

In view of Corollary \ref{Cor3.1}, all assertions of Theorem \ref{Teo3.1} may be rephrased, systematically replacing the class $\@\p\@$ with $\@\wp^*\/(\g)\@$.

For future reference we point out that, on account of Eqs.\;(\ref{3.10}b,c), (\ref{3.14}a), the component $\@p_0\/(t)\@$ satisfies the evolution equation
\begin{equation}\label{3.15}
\d p_0/dt\@=\@\d \Lagr_{|\h\g}/dt\@-\@\d p_i/dt\@\psi^{\@i}_{|\h\g}\@-\@p_i\@\d\psi^i_{|\h\g}/dt\@=\@\biggl(\de \Lagr/de t\biggr)_{\!\h\g}\@-\@p_i
\biggl(\de\psi^i/de t\biggr)_{\!\h\g}\,,
\end{equation}
formally similar to Eq.\;(\ref{3.10}b).

\smallskip
\subsection{Lagrange multipliers}\label{Sec3.2}
As an illustration of the algorithm developed in Sec. \!\ref{Sec3.1}, we now discuss a constructive approach to the \emph{Lagrange multipliers\/} method. To
this end, we consider a situation in which:
\begin{itemize}
\item the action integral involves a ``free'' Lagrangian $\@\L\@$, viewed as a function on $\@\j\V\@$, with local expression
$\@\L=\L\/(t,q^i,\dot q^i)\@$;
\item the presence of the constraints restricts the mobility of the system to a submanifold 
\linebreak
$\@\A\xrightarrow{i}\j\V\@$, implicitly represented by the system
\begin{equation}\label{3.16}
g_\s\/(t,q^i,\dot q^i)=0\,,\qquad\quad\s=1,\ldots,n-r\@,
\end{equation}
with $\@g_\s\in\F\/(\j\V)\@$ and $\@\rank\@\big\|\de(g_1\cdots\@g_{n-r})/de{(\dot q^1 \cdots\@\dot q^n)}\,\big\|=n-r\@$.
\end{itemize}

The geometric setup developed in Sec. \!3.1 is recovered regarding the intrinsic Lagrangian $\@\Lagr\@$ as the pull--back of the extrinsic one.

In coordinates, the situation is expressed by the equations
\begin{subequations}\label{3.17}
\begin{align}
& \Lagr\/(t,q^i,z^A)\,=\,\L\/(\?t,q^i,\psi^i\/(t,q^i,z^A)\@)\@,                          \\[3pt]
& g_\s\/(\?t,q^i,\psi^i(t,q^i,z^A)\@)\,=\,0\@.
\end{align}
\end{subequations}

Let $\@\E\@$ denote the product manifold $\@\V\times\R^{n-r}\!\vspace{1pt}$, with local coordinates $\@t,q^i,\l^\s$. The natural projection
$\@\E\xrightarrow{\pi_1}\V\@$ makes $\@\E\@$ into a vector bundle over $\@\V\@$ and therefore also into a fibre bundle over $\@\R\@$.

We regard $\@\E\to\R\@$ as the configuration space of a fictitious \emph{unconstrained\/} system, merge the functions $\@\L,\@g_\s\@$ into a
function\vspace{1pt} $\@\LL:=\L +\l^\s\@g_\s\@$, and adopt the latter as a (singular) Lagrangian over $\@\j\E\@$.

The content of Eqs.\;(\ref{3.17}a,b) is then summarized into the single expression
\begin{equation}\label{3.18}
\Lagr\/(t,q^i,z^A)\,=\,\LL\/(\?t,q^i,\psi^i\/(t,q^i,z^A)\?,\l^\a\@)\qquad\;\forall\;(\l^1,\ldots,\l^{n-r})\in\R^{n-r}\@.
\end{equation}
From the latter, we derive the relations
\begin{equation}\label{3.19}
\de \Lagr/de t\@=\@\de\LL/de t\@+\@\de\LL/de{\dot q^k}\;\de\psi^k/de t\;,\hskip.4cm
\de \Lagr/de{q^i}\@=\@\de\LL/de{q^i}\@+\@\de\LL/de{\dot q^k}\;\de\psi^k/de{q^i}\;,\hskip.4cm
\de \Lagr/de{z^A}\@=\@\de\LL/de{\dot q^k}\;\de\psi^k/de{z^A}\@.\;
\end{equation}

By the arbitrariness of $\@\l^1,\ldots,\l^{n-r}$, each equation (\ref{3.19}) is actually equivalent to $\@n-r+1\@$ separate relations. In particular, the last
one splits into
\begin{equation}\label{3.20}
\de \Lagr/de{z^A}\@=\@\de\L/de{\dot q^k}\;\de\psi^k/de{z^A}\;,\qquad \de g_\s/de{\dot q^k}\,\de\psi^k/de{z^A}\,=\@0\@.
\end{equation}

To any continuous, piecewise differentiable section $\@\mu:[\@t_0,t_1\@]\to\E\@$ let us now associate the action integral
\begin{equation}\label{3.21}
\I\@[\mu]:= \int_{j_1\!\@(\mu)}\LL\;dt\,:=\,\sum_{s=1}^N\,\int_{a_{s-1}}^{\@a_s}\!\biggl[\@\L\biggl(t,q^i,\d q^i/dt\biggr)\@+
\@\l^\s\@g_\s\biggl(t,q^i,\d q^i/dt\biggr)\@\biggr]\,dt
\end{equation}

We have then the following
\begin{theorem}\label{Teo3.2}
The projection $\@\E\xrightarrow{\pi_1}\V\@$ sets up a 1--1 correspondence between extremaloids of the functional (\ref{3.21}) and \emph{ordinary\/}
extremaloids of the functional~(\ref{3.1}).
\end{theorem}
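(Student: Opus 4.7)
The strategy is to apply Theorem \ref{Teo3.1} to both variational problems and match the resulting Pontryagin data. On $\E$ the problem is \emph{unconstrained}: the velocity space is the whole of $\j\E$, so every admissible section is automatically normal, hence ordinary, and Theorem \ref{Teo3.1} reduces to the statement that $\mu$ is an extremaloid of (\ref{3.21}) if and only if there exist continuous momenta $p_i=\partial\LL/\partial\dot q^i$ and $p_\sigma=\partial\LL/\partial\dot\lambda^\sigma=0$ satisfying the Euler--Lagrange equations for $\LL$ together with the corresponding Erdmann--Weierstrass corner conditions.

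For the forward direction, I would observe that the Euler--Lagrange equation associated with the $\lambda^\sigma$ variable reads $0=\dot p_\sigma=\partial\LL/\partial\lambda^\sigma=g_\sigma$, which is precisely the constraint (\Ref{3.17}b); hence $\gamma:=\pi_1\cdot\mu$ is admissible. Using the identities (\ref{3.19}) together with $p_i=\partial\L/\partial\dot q^i+\lambda^\sigma\partial g_\sigma/\partial\dot q^i$, the Euler--Lagrange equation for $q^i$ collapses to the costate equation (\Ref{3.10}b) on $\V$; the algebraic relation (\Ref{3.9}a) follows directly from (\ref{3.20}) and the constraint $g_\sigma(\h\g)=0$. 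The energy--jump corner condition $[\,p_i\dot q^i-\LL\,]_{a_s}=0$ reduces to $[\,p_i\psi^i-\Lagr\,]_{a_s}=0$ because $\LL|_{j_1(\mu)}=\Lagr|_{\h\g}$ on the constraint surface. Thus $\h\rho:=p_i\,\wg i$ belongs to $\p$, and by Theorem \ref{Teo3.1}(a), $\gamma$ is an extremaloid of (\ref{3.1}).

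For the converse direction, let $\gamma$ be an ordinary extremaloid of (\ref{3.1}). Theorem \ref{Teo3.1}(b) supplies some $\h\rho=p_i\,\wg i\in\p$. To lift to (\ref{3.21}) I must construct continuous multipliers $\lambda^\sigma(t)$ solving
\begin{equation*}
p_i\,=\,\bigg(\frac{\partial\L}{\partial\dot q^i}\bigg)_{\!\h\g}+\,\lambda^\sigma\bigg(\frac{\partial g_\sigma}{\partial\dot q^i}\bigg)_{\!\h\g}.
\end{equation*}
This is an overdetermined linear system; its consistency at each $t$ rests on (\Ref{3.9}a) together with the identity $\partial g_\sigma/\partial\dot q^i\cdot\partial\psi^i/\partial z^A=0$ from (\ref{3.20}). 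Since the columns of $\partial\psi^i/\partial z^A$ span exactly the kernel of $\partial g_\sigma/\partial\dot q^i$, the vector $p_i-\partial\L/\partial\dot q^i$ annihilates that kernel, hence lies in the row span of $\partial g_\sigma/\partial\dot q^i$; the full--rank hypothesis then determines $\lambda^\sigma(t)$ uniquely. Reversing the forward calculation verifies that $\mu=(\gamma,\lambda^\sigma)$ satisfies the Euler--Lagrange equations and corner conditions for $\LL$, so $\mu$ is an extremaloid of (\ref{3.21}).

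The main obstacle I expect is establishing the continuity of the reconstructed $\lambda^\sigma$ at the corners $a_s$ of $\gamma$: the coefficient $(\partial g_\sigma/\partial\dot q^i)_{\h\g}$ inherits the jumps of $\dot q^i$, while $p_i$ is continuous by Theorem \ref{Teo3.1}, so a priori $\lambda^\sigma$ could jump and violate the matching condition for $\mu$ as a piecewise differentiable section of $\E$. I expect the Erdmann--Weierstrass condition $[\,p_i\psi^i-\Lagr\,]_{a_s}=0$, transcribed to the $\E$ setting together with the continuity of $p_i$, to furnish exactly the algebraic identity that forces $[\,\lambda^\sigma\,]_{a_s}=0$; nailing this down cleanly is the delicate point.
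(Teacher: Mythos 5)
Your proposal follows essentially the same route as the paper's proof. The forward direction is identical: set $p_i=\partial\LL/\partial\dot q^{\,i}$, $p_0=-\HH$, use Eqs.\;(\ref{3.19}), (\ref{3.20}) to check that the resulting continuous $1$--form satisfies Eqs.\;(\ref{3.22}a,b), i.e.\ belongs to $\wp^*(\g)$, and invoke Theorem \ref{Teo3.1}. The converse is also the paper's argument: pointwise solvability of the linear system (\ref{3.23}) for the multipliers, resting on Eq.\;(\ref{3.9}a), on the identity $(\partial g_\s/\partial\dot q^{\,k})(\partial\psi^k/\partial z^A)=0$ and on the rank hypothesis (the paper cites Rouch\'e--Capelli where you argue with kernels and row spans --- same content). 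The only cosmetic difference is that, once $\mu$ is built, the paper does not re-derive the Euler--Lagrange equations for $\LL$; it verifies directly that the pulled--back form $\h\rho=p_0\,d\/t_{|\mu}+p_i\,d\/q^{\,i}_{|\mu}$ satisfies $\big<\h\rho\,,\dot\mu\big>=\LL_{|\j\mu}$ and $\d\h\rho/dt=d\LL_{|\j\g}$, so that the class $\wp\/(\mu)$ is nonempty. Concerning the delicate point you flag: the continuity of the reconstructed $\l^\s$ at the corners is not addressed in the paper at all --- Eq.\;(\ref{3.23}) is solved for each $t$ and the resulting $\mu$ is simply declared to be a section --- so you will not find the missing argument there; note moreover that your proposed remedy cannot work as stated, since the Erdmann--Weierstrass energy condition supplies a single scalar relation per corner while there are $n-r$ jumps $\big[\l^\s\big]_{a_s}$ to control. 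Apart from that honestly flagged (and genuinely unresolved) point, which the paper shares, your argument matches the published one.
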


\begin{proof}
According to elementary (holonomic) variational calculus, the extremaloids $\@\mu:q^i=q^i\/(t)\@, 
\linebreak 
\l^\s=\l^\s\/(t)\@$ of the functional (\ref{3.21}) are
determined by the Euler--Lagrange equations
\begin{align*}
& g_\s\/(t,q^i,\dot q^i)\,=\,0                                                          \\[3pt]
& \d/dt\,\de\LL/de{\dot q^i}\,-\,\de\LL/de{q^i}\,=\,0
\end{align*}
completed by the Erdmann--Weierstrass corner conditions, asserting the continuity of the ``momenta'' $\@\de\LL/de{\dot q^i}\@$ and of the ``Hamiltonian''
$\,\HH =\dot q^i\@ \de\LL/de{\dot q^i}\@-\@\LL\,$ along $\@\j\mu\@$.

Setting $\@\g=\pi_1\cdot\mu\@,\;p_0=-\big(\HH\big)_{\!\@\j\mu}\,,\;p_i=\de\LL/de{\dot q^i}\@$ and taking Eqs.\;(\ref{3.17}a,b), (\ref{3.18}), (\ref{3.19}) into
account, it is readily seen that the $1$--form $\@\rho\@=\@p_0\@d\/t_{|\g}+ p_i\@d\/q^i_{|\g}\@$ is continuous along $\@\g\@$ and satisfies the conditions
\begin{subequations}\label{3.22}
\begin{align}
& p_0\@+\@p_i\,\psi^i\big(t,q^i,\psi^i\big)\@=\@\Lagr\big(t,q^i,\psi^i\big)\@,                                                  \\[4pt]
& \d p_0/dt=\de \Lagr/de t-\@p_i\,\de\psi^i/de t\;,\quad\;\d p_i/dt=\de \Lagr/de{q^i}\@-\@p_k\,\de\psi^k/de{q^i}\;,
\quad\;p_k\,\de\psi^k/de{z^A}=\de \Lagr/de{z^A}\@,
\end{align}
\end{subequations}
reproducing the content of Eqs.\;(\ref{3.13}).

This proves that the class $\@\p\@$ is non--empty, thus ensuring that the section $\@\g\@$ is an ordinary extremaloid of the functional~(\ref{3.1}).

Conversely, if $\@\g\@$ is an ordinary extremaloid, the condition $\@\p\ne\emptyset\@$ ensures the existence of at least a $1$--form
\mbox{$\,\rho=p_0\@d\/t_{|\g}\@+p_i\@ d\/q^{\@i}\@$} satisfying Eqs.\;(\ref{3.22}a,b).

From these, taking Eqs.\;(\ref{3.20}), the functional independence of the $\@g_\a$'s and the Rouch\'e--Capelli theorem into account, we conclude that the linear
system
\begin{equation}\label{3.23}
p_i\/(t)\@=\@\biggl(\de\L/de{\dot q^i}\biggr){\!\plus07}_{j_1\!\@(\g)}\!+\@\l^\a\@\biggl(\de g_\s/de{\dot q^k}\biggr){\!\plus07}_{j_1\!\@(\g)}\@=\@
\biggl(\de\LL/de{\dot q^i}\biggr){\!\plus07}_{j_1\!\@(\g)}
\end{equation}
admits a unique solution $\@\l^\a=\l^\a\/(t)\@$, $\@\a=1,\ldots,n-r\@$. The pair $\@(\g,\rho)\@$ determines therefore a unique section
$\@\mu:[\@t_0,t_1\@]\to\E\@$, satisfying $\@\pi_1\cdot\mu=\g\@$.

On the other hand, through a straightforward pull--back procedure, the $1$--form $\@\rho\@$ induces a continuous $1$--form
$\@\h\rho=p_0\,d\/t_{|\mu}\@+p_i\,d\/q^{\@i}_{|\mu}\@$ along $\@\mu\@$.

On account of Eqs.\;(\ref{3.18}), (\ref{3.19}), (\ref{3.22}a,b), (\ref{3.23}), denoting by $\@\dot\mu\@$ the tangent vector field to $\@\mu\@$, the latter
satisfies the relations
\begin{align*}
& \big<\h\rho\@,\@\dot\mu\?\big>\@=\@p_0\@+p_i\@\dot q^{\@i}_{\,|\j\mu}\@=\@p_0\@+\@p_i\,\psi^i_{\,|\h\g}\@=\@\Lagr_{|\h\g}\@=\@\LL_{|\j\mu}\@,        \\[2pt]
&\d p_0/dt\@=\@\biggl(\de \Lagr/de t\biggr){\!\plus07}_{\!\@\h\g}-\@p_i\@\biggl(\de\psi^i/de t\biggr){\!\plus07}_{\!\@\h\g}\@=
\@\biggl(\de \Lagr/de t\biggr){\!\plus07}_{\!\@\h\g}-\biggl(\de\LL/de{\dot q^i}\biggr){\!\plus07}_{j_1\!\@(\g)}\@\biggl(\de\psi^i/de t\biggr)_{\!\@\h\g}\@=
\@\biggl(\de\LL/de t\biggr){\!\plus07}_{j_1\!\@(\g)} \@,                                                                                                \\[2pt]
&\d p_i/dt\@=\@\biggl(\de \Lagr/de{q^i}\biggr){\!\plus07}_{\!\@\h\g}-\@p_k\@\biggl(\de\psi^k/de{q^i}\biggr){\!\plus07}_{\!\@\h\g}\@=
\@\biggl(\de \Lagr/de{q^i}\biggr){\!\plus07}_{\!\@\h\g}-\biggl(\de\LL/de{\dot q^k}\biggr){\!\plus07}_{j_1\!\@(\g)}\@\biggl(\de\psi^k/de{q^i}\biggr)_{\!\@\h\g}\@=
\@\biggl(\de\LL/de{q^i}\biggr){\!\plus07}_{j_1\!\@(\g)}\@,
\end{align*}
whence also, making use of the identity $\@\big(\de\LL\,/de{\l^\a}\big){\!\plus03}_{j_1\!\@(\g)}=0\@$
\begin{equation*}
\d\h\rho/dt\,=\,\d p_0/dt\,d\/t_{\,|j_1\!\@(\g)}\,+\,\d p_i/dt\,d\/q^i_{\,|j_1\!\@(\g)}\,+\,p_i\,d\/\dot q^i_{\,|j_1\!\@(\g)}\,=\,d\LL_{\,|j_1\!\@(\g)}
\end{equation*}

This proves that the class $\@\wp\/(\mu)\@$ is non--empty, thus ensuring that the section $\@\mu\@$ is an extremaloid of the functional~(\ref{3.21}).
\end{proof}

From the proof of Theorem \ref{Teo3.2} one can easily derive the following
\begin{corollary}\label{Cor3.2}
An extremaloid $\@\g\@$ of the functional (\ref{3.1}) is \emph{normal\/} if and only if there exists \emph{exactly one\/} extremaloid $\@\mu\@$ of the
functional~(\ref{3.21}) satisfying $\@\g=\pi_1\cdot\mu\@$.
\end{corollary}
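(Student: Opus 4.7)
The strategy is to parametrise the fibre of the projection $\pi_1\colon\mu\mapsto\g=\pi_1\cdot\mu\,$, restricted to extremaloids of~(\ref{3.21}), by the affine space $\@\wp^*(\g)\@$ of~Corollary~\ref{Cor3.1}, and then invoke Theorem~\ref{Teo3.1}\,c) to count its elements.

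First, I note that by Proposition~\ref{Pro2.5} normal evolutions are automatically ordinary, so Theorem~\ref{Teo3.2} applies to any normal extremaloid $\@\g\@$; and if at least one $\@\mu\@$ with $\@\g=\pi_1\cdot\mu\@$ exists, $\@\g\@$ is ordinary by the same theorem. Hence in both directions of the corollary we may assume that $\@\g\@$ is ordinary, and it suffices to show that the set of extremaloids $\@\mu\@$ of~(\ref{3.21}) projecting onto $\@\g\@$ is in $1$--$1$ correspondence with $\@\wp^*(\g)\@$.

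To set up the bijection I follow the two halves of the proof of Theorem~\ref{Teo3.2}. In the first direction, given an extremaloid $\@\mu:q^i=q^i(t),\,\l^\s=\l^\s(t)\@$ of~(\ref{3.21}), the construction $\@p_0:=-\HH_{|\j\mu}\@,\,p_i:=\de\LL/de{\dot q^i}_{|\j\mu}\@$ assigns to it a continuous $1$--form $\@\rho=p_0\,d\/t_{|\g}+p_i\,d\/q^{\@i}_{|\g}\in\wp^*(\g)\@$ satisfying Eqs.\;(\ref{3.22}\,a,\,b). In the opposite direction, given $\@\rho\in\wp^*(\g)\@$, the linear system~(\ref{3.23}) together with the rank assumption on $\@\de g_\s/de{\dot q^k}\@$ and the Rouch\'e--Capelli argument in the proof of Theorem~\ref{Teo3.2} determines a unique family of multipliers $\@\l^\s(t)\@$, hence a unique section $\@\mu\@$ with $\@\pi_1\cdot\mu=\g\@$; a direct check (already carried out in the proof of Theorem~\ref{Teo3.2}) shows that this $\@\mu\@$ is indeed an extremaloid of~(\ref{3.21}). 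The two constructions are manifestly inverse to each other, so the fibre $\@\pi_1^{-1}(\g)\@$ is in bijection with $\@\wp^*(\g)\@$.

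It remains to count. By Corollary~\ref{Cor3.1}, $\@\wp^*(\g)\@$ is in $1$--$1$ correspondence with the class $\@\p\@$ of Theorem~\ref{Teo3.1}, and by assertion c) of that theorem $\@\p\@$ reduces to a single element precisely when $\@\g\@$ is a \emph{normal} extremaloid. Combining this with the bijection of the previous paragraph yields the statement: exactly one $\@\mu\@$ projects onto $\@\g\@$ if and only if $\@\g\@$ is normal. The only point requiring some care --- and the most likely source of a slip --- is verifying that the two constructions above are genuinely inverse (in particular, that the multipliers recovered from $\@\rho\@$ via~(\ref{3.23}) are precisely those of the original $\@\mu\@$ when one starts from an extremaloid); however, this is a direct consequence of the injectivity built into the Rouch\'e--Capelli step and does not require any new machinery beyond what is already displayed in the proof of Theorem~\ref{Teo3.2}.
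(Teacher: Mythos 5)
Your proof is correct and follows essentially the route the paper intends: the authors state that Corollary \ref{Cor3.2} is "easily derived from the proof of Theorem \ref{Teo3.2}", and what they have in mind is exactly your bijection between the extremaloids $\mu$ in the fibre over $\g$ and the class $\wp^*(\g)\leftrightarrow\p$ (via $p_i=\de\LL/de{\dot q^i}$ one way and the Rouch\'e--Capelli resolution of (\ref{3.23}) the other), combined with Theorem \ref{Teo3.1}\,c) to reduce "exactly one $\mu$" to normality. Your closing remark about checking that the two constructions are mutually inverse is the right point to flag, and your justification via the uniqueness of the multipliers is sound.
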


\smallskip
\subsection{Pontryagin's equations in $\@\C\/(\A)\@$}\label{Sec3.3}
The algorithm (\ref{3.10}a,b,c) involves a set of $\@2\@n+r\/$ equations for the unknowns $q^i\/(t)\@,\@z^A\/(t)\@,\@p_i\/(t)\@$. As such, it has a natural
setting in the geometrical environment provided by the contact bundle $\@\C\/(\A)\@$.

As pointed out in Sec. \!\ref{Sec2.2}, the latter is a vector bundle over $\@\A\@$, referred to fibred coordinates $\@t,q^i,z^A\!\?,\?p_i\,$, and identified
with the pull--back of the space $\@V^*\/(\V)\,$ through the commutative diagram
\begin{equation*}
\begin{CD}
\C\/(\A )         @>\k>>          V^*\/(\V)        \\
@V{\z}VV                          @VV\pi V         \\
\A                @>\pi>>           \V
\end{CD}
\end{equation*}

The advantage of the environment $\@\C\/(\A)\,$ comes from the presence of the Liouville $1$--form (\ref{2.6}).
By means of the latter, the Lagrangian $\@\Lagr\in\F\/(\A)\,$ may be lifted to a $1$--form $\@\thL\@$ over $\@\C\/(\A)\@$ according to the prescription\@%
\footnote
{\@A deeper insight into the geometrical meaning of the $1$--form (\ref{3.24}) comes from the study of the gauge--theoretical structure of the
calculus of variations, as developed in \cite{BLP}.}%
\footnote%
{\@As usual, we use the same symbol for covariant objects in $\@\A\@$ and for their pull--back in $\@\C\/(\A)\@$.}
\begin{equation}\label{3.24}
\thL\,:=\,\Lagr\,d\/t\@+\@\Th\,=\,\left(\Lagr-p_i\,\psi^i\right)d\/t\@+\@p_i\,d\/q^i\,:=\,
-\,\Ham\,d\/t\@+\@p_i\,d\/q^i
\end{equation}
The function $\@\Ham\/(t,q^i\@\!,p_i,z^A):=p_i\,\psi^i-\Lagr\@$ is called the \emph{Pontryagin Hamiltonian\/}. The $1$--form (\ref{3.24}) is called the
\emph{Pontryagin--Poincar\'e--Cartan\/} $1$--form.

To understand the role of $\@\thL\@$, we focus on the fibration $\@\C\/(\A)\xrightarrow{\u}\V\@$ given by the composite map $\@\u:=\pi\cdot\k\@$. A piecewise
differentiable section $\@\arc{\@\Chi\@}{\@t_0,t_1}\@$ consisting of a finite family of closed arcs
\begin{equation*}
{\Chi}^{\ns}:[\@a_{s-1},a_s\@]\to\C\/(\A)\,,\qquad s=1,\ldots,N,\quad t_0=a_0<a_1<\cdots<a_N=t_1
\end{equation*}
is called $\@\u$--{\it continuous\/} if and only if it projects onto a continuous, piecewise differentiable section
\linebreak
$\@\u\cdot\Chi:[\@t_0,t_1\@]\to\V\@$.\vspace{1pt}

A deformation $\@\Chi_{\?\xi}=\big\{\arc{\@\Chi_{\;\xi}^{\ns}}{\,a_{s-1}\/(\xi),a_s\/(\xi)}\@\big\}\@$  is called $\@\u$--continuous if and only if all
sections $\@\Chi_{\?\xi}\@$ are $\@\u$--continuous. The necessary and sufficient condition for this to happen is the validity of the matching conditions
\begin{equation}\label{3.25}
\u\cdot\Chi^{\ns}_{\;\xi}\/(a_s\/(\xi))\,=\,\u\cdot\Chi^{\nS}_{\;\xi}\/(a_s\/(\xi))\qquad\;\forall\;|\?\xi\?|<\eps\,,\;s=1,\ldots,N-1\@.
\end{equation}

A deformation $\@\Chi_{\?\xi}\@$ is said to preserve the endpoints of $\@\u\cdot\Chi\@$ if and only if the projection $\@\u\cdot\Chi_{\?\xi}\@$ does so. A
vector field along $\@\Chi\@$ tangent to the orbits of a \mbox{$\@\u$--continuous} deformation is called an \emph{infinitesimal deformation\/}.

It is worth noticing that the previous definitions do not require the admissibility of the sections $\,\u\cdot\Chi\,$ or $\,\u\cdot\Chi_{\?\xi}\@$\vspace{1pt}.
In this respect, the only condition needed in order for a vector field $\@X^i\@\big(\de/de{q^i}\big)_\chi +\@\G^A\big(\de/de{z^A}\big)_\chi
+\@\Pi_{\?i}\?\big(\de/de{p_i}\big)_\chi\@$\vspace{1pt} to represent an infinitesimal deformation of $\@\Chi\@$ is the consistency with the matching conditions
(\ref{3.25}), summarized into the \emph{jump relations\/}
\begin{equation}\label{3.26}
\biggl[\@X^i+\@\a_s\,\d q^i/dt\@\biggr]_{a_s}=\,0\qquad s=1,\ldots,N-1
\end{equation}
with $\@\a_s=\big(\d a_s/d\xi\big)_{\xi=0}\,$ and with no a--priori relationship between $\@\d q^i/dt\@$ and $\@\psi^i_{|\chi}\@$.

\medskip
By means of $\@\thL\@$ we define an action integral $\,\I\@[\@\Chi\@]\@$ over $\@\C\/(\A)\/$, assigning to each \mbox{$\@\u$--continuous} section
$\@\Chi:\@q^i=q^i\/(t)\@,\;z^A=z^A\/(t)\@,\;p_i=p_i\/(t)\@$ the real number
\begin{equation}\label{3.27}
\hskip-0cm\I\@[\@\Chi\@]:=\!\int_{\chi}\thL=\int_{t_0}^{t_1}\!\!\!\left(p_i\,\d q^i/dt -\Ham\!\@\right)\!d\/t=
\sum_{s=1}^N\int_{a_{s-1}}^{a_s}\!\!\!\left(p^{\ns}_{\@i}\,\d q^{\@i}_{\ns}/dt -\Ham{\plus03}_{\!\@|\Chi^{\!\ns}}\!\right)\!d\/t\@.
\end{equation}

For any $\u$--continuous deformations $\@\Chi_{\?\xi}\@$\vspace{2pt} preserving the endpoints of \mbox{$\@\u\cdot\Chi\@$,} denoting by
$\@X^i\@\big(\de/de{q^i}\big)_\chi +\@\G^A\big(\de/de{z^A}\big)_\chi +\@\Pi_{\?i}\?\big(\de/de{p_i}\big)_\chi\@$\vspace{2pt} the corresponding infinitesimal
deformation, we have then the relation
\begin{align*}
\d\@\I\@[\@\Chi_{\?\xi}\@]/d\xi\,\bigg|_{\@\xi=0}\,=\@&\int_{t_0}^{t_1} \left[\bigg(\@\d\?q^i/dt\@-\@\de\Ham/de{p_i}\bigg)\@\Pi_{\?i}\@-
\@\bigg(\@\d\?p_i/dt\@+\@\de\Ham/de{q^i}\bigg)\@X^i\@-\@\de\Ham/de{z^A}\,\G^A\@\right]d\/t\;+                                                   \\[3pt]
&-\@\sum_{s=1}^{N-1}\biggl[\@p_i\biggl(X^i+\@\a_s\,\d q^i/dt\@\biggr)-\,\a_s\,\Ham\biggl]_{a_s}\@.
\end{align*}

From this, taking Eqs.\;(\ref{3.26}) into account\vspace{1pt}, we conclude that the vanishing of $\@\d\@\I/d\xi\big|_{\@\xi=0}\@$ for arbitrary
$\@\Chi_{\?\xi}\@$ is mathematically equivalent to the system
\begin{subequations}\label{3.28}
\begin{align}
& \d\/q^i/d t\,=\,\de\?\Ham/de{p_i}\,\@,\qquad \d\/p_i/d t\,=\,-\@\de\?\Ham/de{q^i}\,\@,\qquad \de\?\Ham/de{z^A}\,=\,0\,\@,                      \\
\intertext{completed with the continuity conditions}
& \big[\@p_i\@\big]_{\?a_s}\,=\,\big[\@\Ham\big]_{\?a_s}\,=\,0\qquad\quad s=1,\ldots,N-1\@.
\end{align}
\end{subequations}

In view of the definition of the Pontryagin Hamiltonian $\@\Ham\@$, Eqs.\;(\Ref{3.28}a,b) are easily seen to reproduce the content of Eqs.\;(\Ref{3.10}a,b,c),
(\ref{3.11})\@
\footnote%
{Actually, exactly as it happened in Sec. \!3.1, the condition $\@\big[\@q^i\@\big]_{\?a_s}=0\@$ does not arise from the stationarity requirement, but is
implicit in the definition of $\@\Chi\@$.}.

As far as the \emph{ordinary\/} extremaloids are concerned, the original (constrained) variational problem in the configuration space is therefore equivalent
to a \emph{free\/} variational problem in the contact manifold, in full agreement with Pontryagin's \emph{maximum principle\/}.

As a further comment on Eqs.\;(\ref{3.28}), let us digress on the special situation determined by the ansatz $\@\Lagr=0\@$. In the stated circumstance, the
functional
\begin{equation}\label{3.29}
\I_0[\@\Chi\@]\,:=\,\int_{\chi}\?\Th\,=\,\int_{t_0}^{t_1}\?p_i\!\left(\d q^i/dt -\psi^i\right)d\/t
\end{equation}
is an intrinsic attribute of the manifold $\@\C\/(\A)\@$, entirely determined by the Liouville $1$--form (\ref{2.6}). Its role is clarified by the following
\begin{proposition}\label{Pro3.1}
Let $\@\g:[\@t_0,t_1\@]\to\V$ denote any continuous, piecewise differentiable section. Then: \ALPH
\begin{enumerate}
\item
$\@\g$ is admissible if and only if the functional (\ref{3.29}) admits at least one extremaloid $\@\Chi\@$ projecting onto $\@\g\@$, i.e.~satisfying
$\@\u\cdot\Chi=\g\@$;
\item
for any admissible $\/\g\?$, the extremaloids of $\,\I_{\@0}\@$ projecting onto $\?\g\?$ are in 1-1 correspondence with the elements of the annihilator
$\?\big(\@\U\?(\W)\@\big){}^0$.
\end{enumerate}
\end{proposition}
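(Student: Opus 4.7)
The approach is to specialise the algorithm (\ref{3.28}a,b) to the case $\@\Lagr=0\@$ and match it directly against the intrinsic description of $\@\big(\@\U\?(\W)\@\big){}^0\@$ given by Proposition~\ref{Pro2.4}. With $\@\Lagr=0\@$ the Pontryagin Hamiltonian reads $\@\Ham=p_{\?i}\,\psi^i\@$, so for a $\u$--continuous section $\@\Chi\colon q^i=q^i(t),\,z^A=z^A(t),\,p_i=p_i(t)\@$ the extremaloid conditions become
\begin{equation*}
\d q^i/dt\,=\,\psi^i(t,q,z)\,,\quad \d p_i/dt\,+\,p_k\,\de\psi^k/de{q^i}\,=\,0\,,\quad p_{\?i}\,\de\psi^i/de{z^A}\,=\,0\,,
\end{equation*}
together with $\@[\?p_{\?i}\?]_{\?a_s}=0\@$ and $\@\big[\?p_{\?i}\,\psi^i(\h\g)\?\big]_{\?a_s}=0\@$ at the corners.

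For part (a), the first equation is precisely the admissibility condition (\ref{2.3}) for the projected section $\@\g=\u\cdot\Chi\@$, so the existence of any extremaloid $\@\Chi\@$ projecting onto $\@\g\@$ forces $\@\g\@$ to be admissible. Conversely, if $\@\g\@$ is admissible with lift $\@\h\g\colon q^i=q^i(t),\,z^A=z^A(t)\@$, the section $\@\Chi_{\?0}\@$ obtained by setting $\@p_{\?i}\equiv 0\@$ along $\@\h\g\@$ satisfies all remaining equations and corner conditions trivially.

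For part (b), fix an admissible $\@\g\@$ together with the base extremaloid $\@\Chi_{\?0}\@$ just constructed. Since $\@i\colon\A\to\j\V\@$ is an embedding, the lift $\@\h\g\@$ is unique; any other extremaloid $\@\Chi\@$ projecting onto $\@\g\@$ therefore shares the $\@(q^i(t),z^A(t))\@$ components and may differ from $\@\Chi_{\?0}\@$ only in its $\@p\@$--components. The difference identifies a piecewise differentiable virtual $1$--form $\@\h\rho=p_{\?i}(t)\,\wg i\@$ along $\@\g\@$, and by linearity of the extremaloid equations in the $\@p\@$'s (where $\@\Lagr=0\@$ is essential) the extremaloids projecting onto $\@\g\@$ form an affine space modelled on the vector space of those $\@\h\rho\@$ satisfying: \emph{(i)} the transport law $\@\d p_i/dt+p_k\,\de\psi^k/de{q^i}=0\@$; \emph{(ii)} the algebraic relation $\@p_{\?i}\,\de\psi^i/de{z^A}=0\@$; \emph{(iii)} the continuity $\@[\?p_i\?]_{\?a_s}=0\@$ together with $\@p_{\?i}(a_s)\,[\?\psi^i(\h\g)\?]_{a_s}=0\@$ (to which $\@[\?p_i\?\psi^i\?]_{\?a_s}=0\@$ reduces once $\@p_i\@$ is continuous). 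Conditions \emph{(ii)} and \emph{(iii)} reproduce verbatim (\ref{2.42}a,b), while \emph{(i)} coincides with the $\@h\@$--transport equation~(\ref{2.43}) for any choice of infinitesimal control $\@h\@$, the extra term $\@h_i{}^A\@p_k\,\de\psi^k/de{z^A}\@$ being killed by \emph{(ii)}. Proposition \ref{Pro2.4} therefore identifies this vector space with $\@\big(\@\U\?(\W)\@\big){}^0\@$, and the correspondence $\@\Chi\mapsto\Chi-\Chi_{\?0}\@$ yields the required 1--1 match.

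The only step requiring real care is the bookkeeping at the corners: one must observe that the two independent jump conditions $\@[\?p_i\?]_{\?a_s}=\big[\?p_i\?\psi^i\?\big]_{\?a_s}=0\@$ coming from (\ref{3.28}b) are equivalent, in the presence of a continuous $\@\h\rho\@$, to the annihilator condition (\ref{2.42}b) on the discontinuities of $\@\psi^i(\h\g)\@$, so that the affine model space of extremaloids lands \emph{exactly} inside $\@\big(\@\U\?(\W)\@\big){}^0\@$ with no slack.
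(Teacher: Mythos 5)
Your proposal is correct and follows essentially the same route as the paper: specialise Eqs.\;(\Ref{3.28}a,b) to $\@\Lagr=0\@$, recognise the first equation as the admissibility condition for $\@\u\cdot\Chi\@$, and identify the remaining homogeneous linear system in the $\@p_i\@$ (transport law, algebraic relation, corner conditions) with the characterization of $\@\big(\@\U\?(\W)\@\big){}^0\@$ given by Proposition \ref{Pro2.4} and Corollary \ref{Cor2.1}. The explicit construction of the base extremaloid $\@p_i\equiv 0\@$ and the careful reduction of the corner conditions to Eq.\;(\Ref{2.42}b) are exactly the points the paper relies on, stated there slightly more tersely.
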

\begin{proof}
For $\@\Lagr=0\@$, Eqs.\;(\Ref{3.28}a,b) reduce to

\vskip-12pt
\begin{subequations}\label{3.30}
\begin{align}
& \d q^i/d t\,=\,\psi^i\/(t,q^i,z^A)\,,                                                                 \\[2pt]
& \d p_i/d t\,+\,\de\psi^k/de{q^i}\,p_k\,=0\,,                                                          \\[0pt]
& p_i\,\de\psi^i/de{z^A}\,=\,0\,,                                                                       \\[4pt]
\big[&\@p_i\@\big]_{\?a_s}\,=\,\big[\@p_i\,\psi^i\@\big]_{\?a_s}\,=\,0\@,\qquad\quad s=1,\ldots,N-1\@.
\end{align}
\end{subequations}

Eq.~(\Ref{3.30}a) is the admissibility requirement for the section $\@\u\cdot\Chi\@$. Accordingly, if an extremaloid $\@\Chi\@$ of the functional (\ref{3.29})
projects onto $\@\g\@$, its image $\@\z\cdot\Chi\@$ under the map $\@\z:\C\/(\A)\to\A\@$ coincides with the lift $\@\h\g:[\@t_0,t_1\@]\to\A\@$.

For any admissible $\@\g\@$, the extremaloids $\@\Chi\@$ projecting onto $\@\g\@$ are therefore in 1--1 correspondence with the solutions $\@p_i\/(t)\@$ of the
homogeneous system (\Ref{3.30}b,c,d), with the functions $\@q^i\/(t),\@z^A\/(t)\@$ regarded as given.

Assertions \textit{a)} and \textit{b)} are then straightforward consequences of the fact that, according to Proposition \ref{Pro2.4}\@, Eqs.\;(\Ref{3.30}b,c,d)
are precisely the conditions required in order for a virtual $\@1$--forms $\@p_i\/(t)\,\wg i\@$ to belong to the annihilator $\?\big(\@\U\?(\W)\@\big){}^0\?$.
\end{proof}
According to Proposition \ref{Pro3.1}, a section $\g:[\?t_0,t_1\?]\to\V$ describes a \emph{normal\/} evolution of the system if and only if the functional
(\ref{3.29}) admits \emph{exactly one\/} extremaloid projecting onto~$\g$, namely the one corresponding to the trivial solution $\@p_i\/(t)=0\@$. If the
extremaloids projecting onto~$\g$ are more than one, $\@\g\@$ represents an \emph{abnormal\/} evolution; if no such extremaloid exists, $\@\g\@$ is not
admissible.

Returning to the action integral (\ref{3.27}) we can now state
\begin{corollary}\label{Cor3.3}
The totality of extremaloids of the functional (\ref{3.27}) projecting onto a section 
\linebreak
$\?\g:[\@t_0,t_1\@]\to\V$ is an affine space, modelled on the vector
space formed by the extremaloids of the functional (\ref{3.29}) projecting onto~$\g\?$.
\end{corollary}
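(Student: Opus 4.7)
The plan is to exploit the explicit characterisations of extremaloids provided by Eqs.\;(\ref{3.28}a,b) for the functional (\ref{3.27}) and by Eqs.\;(\ref{3.30}a,b,c,d) for the functional (\ref{3.29}), and to observe that the latter system is precisely the homogeneous counterpart of the former, formally obtained by setting $\@\Lagr=0\@$ in the Pontryagin Hamiltonian $\@\Ham=p_i\,\psi^i-\Lagr\@$.

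First I would observe that every extremaloid $\@\Chi\@$ of (\ref{3.27}) projecting onto $\@\g\@$ automatically satisfies $\@\z\cdot\Chi=\h\g\@$. Indeed, the Hamilton equation $\,\d q^i/d t=\de\Ham/de{p_i}=\psi^i\,$ from (\ref{3.28}a) expresses the admissibility of $\@\u\cdot\Chi=\g\@$; the rank condition on $\@\psi^i\@$ recorded after Eq.\;(\ref{2.2}), combined with the fact that $\@i:\A\to\j\V\@$ is an embedding, then forces the functions $\@z^A(t)\@$ to be uniquely determined by $\@q^i(t)\@$. Hence any two extremaloids $\@\Chi,\Chi'\@$ of (\ref{3.27}) over $\@\g\@$ share the coordinates $\@t,q^i,z^A\@$ and may differ only in the fibre coordinate $\@p_i\@$.

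Next, setting $\@\tilde p_i(t):=p_i(t)-p'_i(t)\@$ and subtracting the Pontryagin equations for $\@\Chi\@$ and $\@\Chi'\@$ term by term, all $\@\Lagr\@$-dependent contributions cancel: the equations $\@\d p_i/d t=-\de\Ham/de{q^i}\@$ and $\@\de\Ham/de{z^A}=0\@$ reduce respectively to
\begin{equation*}
\d\tilde p_i/d t\,=\,-\@\tilde p_k\,\de\psi^k/de{q^i}\,,\qquad \tilde p_i\,\de\psi^i/de{z^A}\,=\,0\,,
\end{equation*}
while the corner conditions $\@[p_i]_{a_s}=[\Ham]_{a_s}=0\@$ yield $\@[\tilde p_i]_{a_s}=[\tilde p_i\,\psi^i]_{a_s}=0\@$---the common jump $\@[\Lagr(\h\g)]_{a_s}\@$, being an attribute of $\@\h\g\@$ alone, dropping out in the subtraction. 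These are exactly Eqs.\;(\ref{3.30}b,c,d), so that $\@(t,q^i(t),z^A(t),\tilde p_i(t))\@$ is an extremaloid of (\ref{3.29}) projecting onto $\@\g\@$. Conversely, given any extremaloid $\@\Chi\@$ of (\ref{3.27}) over $\@\g\@$ and any extremaloid $\@\Chi_0\@$ of (\ref{3.29}) over $\@\g\@$, the fibrewise sum of their $\@p_i\@$-components in $\@\C(\A)\to\V\@$ is again a $\@\u\@$-continuous section whose $\@p\@$-components satisfy (\ref{3.28}a,b), the inhomogeneous $\@\Lagr\@$-terms being preserved by linearity.

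Since the set of extremaloids of (\ref{3.29}) projecting onto $\@\g\@$ is manifestly a vector space---Eqs.\;(\ref{3.30}b,c,d) being linear and homogeneous in $\@p_i\@$ with the trivial solution $\@p_i\equiv 0\@$ included---these two observations together establish that, when non--empty, the set of extremaloids of (\ref{3.27}) over $\@\g\@$ is a coset of that vector space, which is the claim. The only step demanding a moment's care is the corner cancellation: although $\@\Lagr(\h\g)\@$ is not in general continuous at $\@a_s\@$, the jump $\@[\Lagr(\h\g)]_{a_s}\@$ depends only on $\@\h\g\@$ and so disappears upon subtraction; the rest is routine bookkeeping.
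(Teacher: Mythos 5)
Your proof is correct and follows exactly the route the paper intends: the authors leave this corollary to the reader as ``entirely straightforward,'' and the straightforward argument is precisely yours --- observe that an extremaloid of (\ref{3.27}) over $\g$ has its $q^i,z^A$ pinned down to $\h\g$ by the first Hamilton equation and the embedding $\A\to\j\V$, so that two such extremaloids differ only in $p_i$, and that Eqs.\;(\ref{3.28}a,b) are affine--linear in $p_i$ with homogeneous part given by Eqs.\;(\ref{3.30}b,c,d), i.e.\ by the extremaloids of (\ref{3.29}) over $\g$ (which, by Proposition \ref{Pro3.1}, form the vector space $\big(\U(\W)\big)^0$). Your attention to the cancellation of $[\Lagr(\h\g)]_{a_s}$ in the corner conditions is the only point requiring care, and you handle it correctly.
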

The proof, entirely straightforward, is left to the reader.

The previous arguments provide a restatement of Theorem \ref{Teo3.1} in the environment $\@\C\/(\A)\@$. As already pointed out, the projection algorithm
$\@\Chi\to\u\cdot\Chi\?$ does not reproduce \emph{all\/} the extremaloids of the functional (\ref{3.1}), but only the \emph{ordinary\/} ones.

The missing ones may be obtained determining the abnormal evolutions by means of Proposition \ref{Pro3.1}, finding out which ones have an exceptional
character, and directly testing the validity of Eq.~(\ref{3.2}a) on each of them.

\smallskip
\subsection{Hamiltonian formulation}\label{Sec3.4}
As pointed out in Sec. \!3.3, all ordinary extremaloids of the functional (\ref{3.1}) are projections of extremaloids of the functional (\ref{3.27}). Let us
discuss the implications of this fact.

To this end, temporarily leaving aside all aspects related to the presence of corners, we observe that a curve $\?\Chi\@$ in $\@\C\/(\A)\@$ is an extremal for
the functional (\ref{3.27}) if and only if its tangent vector field $\@Z:=\Chi_*\big(\de/de t\big)\@$ satisfies the properties
\begin{equation}\label{3.31}
\big<Z\@,\@d\/t\big>=1\,,\qquad Z\@\internal d\@\thL=0\@.
\end{equation}

On account of Eq.\;(\ref{3.24}), at each $\@\vs\in\C\/(\A)\@$, a necessary and sufficient condition for the existence of at least one vector $\@Z\in
T_\vs\/(\C\/(\A))\@$ satisfying Eqs.\;(\ref{3.31}) is the validity of the relations
\begin{subequations}\label{3.32}
\begin{equation}
\bigg(\de\@\Ham/de{z^A}\bigg)_{\!\vs}\@=\@0\,, \quad\qquad\;A=1,\ldots, r\@.
\end{equation}

Points $\@\vs\@$ at which Eqs.\;(\ref{3.31}) admit a \emph{unique\/} solution $Z$ will be called \emph{regular points\/} for the functional (\ref{3.27}). In
coordinates, the regularity requirement is expressed by the condition
\begin{equation}
\det\bigg(\SD\Ham/de{z^A}/de{z^B}\bigg)_{\!\vs}\,\ne\@0\@.
\end{equation}
\end{subequations}

In view of Eq.\;(\Ref{3.32}b), in a neighborhood of each regular point Eqs.\;(\ref{3.32}a) may be solved for the $\@z^A\@$'s, giving rise to a representation
of the form
\begin{equation}\label{3.33}
z^A\,=\,z^A\@(t,q^1\?,\ldots,q^n\?,\@p_1,\ldots,p_n)\@.
\end{equation}
The regular points form therefore a $\?(2n+1)$--dimensional \vspace{-2pt} submanifold $\@\SS\xrightarrow{j}\C\/(\A)\@$, locally diffeomorphic to the manifold
$\@V^*\/(\V)\@$.

Inserting Eqs.\;(\ref{3.33}) into the first pair of relations (\Ref{3.28}a) gives rise to a system of ordinary differential equations in normal form for the
unknowns $\@q^i\/(t),p\?_i\/(t)\@$. The algorithm is readily implemented denoting by $\@\H:=j^*(\Ham)\@$ the pull-back of the Pontryagin Hamiltonian, expressed
in coordinates as
\begin{equation*}
\Ham\/(t,q^i\!,z^A(t,q^i\!,\?p\?_i)\@,\?p\?_i)=p\?_k\,\psi^k\/(t,q^i\!,z^A(t,q^i,\?p\?_i))-\Lagr\/(t,q^i\!,z^A(t,q^i,\?p\?_i))
\end{equation*}

In view of Eqs.\;(\Ref{3.32}a) we have the identifications
\begin{align*}
&\de \H/de{p\?_i}\,=\,\de\Ham/de{p_i}\@+\@\cancel{\de\Ham/de{z^A}}\,\de z^A/de{p_i}\,=\,\psi^i\,,                                                   \\[3pt]
&\de \H/de{q\?^i}\,=\,\de\Ham/de{q\?^i}\@+\@\cancel{\de\Ham/de{z^A}}\,\de z^A/de{q^i}\,=\,p_k\,\de\psi^k/de{q\?^i}\,-\,\de \Lagr/de{q\?^i}\@.
\end{align*}

\noindent
On account of these, the first pair of equations (\Ref{3.28}a) takes the form
\begin{subequations}\label{3.34}
\begin{alignat}{2}
 &\d\/q^i/d t\,=\,&&\de \H/de{p_i}\,, \hskip5cm                              \\[2pt]
 &\d\/p_i/d t\,=\,&-&\de \H/de{q^i}\,.
\end{alignat}
\end{subequations}

The original \emph{constrained\/} lagrangian variational problem is thus converted into a \emph{free\/} hamiltonian problem on the submanifold
$\@j:\SS\to\C\/(\A)\@$.

A $\@\u$--continuous extremaloid of the functional (\ref{3.27}) consisting of a finite family of closed arcs
$\@\Chi^{\narc{s}}:[\@a_{s-1},a_s\@]\to\C\/(\A)\@$, each contained in (\/a connected component of\/) the submanifold $\@\SS\@$ will be called a \emph{regular
extremaloid\/}.

Singular extremaloids, partly, or even totally lying outside $\SS\/$ may also exist. In fact, while Eq.\;(\Ref{3.32}a) is part of the system (\Ref{3.28}a,b),
and must therefore be satisfied by any extremaloid, the requirement (\Ref{3.32}b) has only to do with the \emph{well--posedness\/} of the Cauchy problem for
the subsystem (\Ref{3.28}a).

On the other hand, by construction, the Hamilton equations (\Ref{3.34}a,b) determines only the regular extremaloids. The singular ones, when present, have
therefore to be dealt with directly, looking for solutions of Eqs.\;(\Ref{3.28}a,b) not arising from a well--posed Cauchy problem.
In principle, this could be done extending to the non--holonomic context the concepts and methods commonly adopted in the study of singular Lagrangians
\cite{Gotay}. The argument is beyond the purposes of the present work, and will not be pursued.

\smallskip

\appendix
\section{Finite deformations with fixed end points: an existence theorem}\label{SecA}
\textbf{(\/i\/)} \, Given an admissible, piecewise differentiable section $\g:[\?t_0,t_1\?]\to\V$, a crucial question is establishing under what circumstances
\emph{every\/} admissible infinitesimal deformation vanishing at the endpoints of $\@\g\@$ is tangent to an admissible finite deformation $\@\g_{\@\xi}\@$ with
fixed endpoints. The following preliminaries help simplifying the discussion.

\begin{proposition}\label{ProA.1}
Let $\@\h\g:(c,d)\to\A\@$ be the lift of an admissible differentiable section $\@\g:(c,d)\to\V\,$. Then, for any closed interval $\@[\?a,b\?]\subset(c,d)\@$
there exists a fibred local chart $\@(U,k)\@,\,k=(t,q^1,\ldots,q^n,z^1,\ldots,z^r)\,$ satisfying the properties
\begin{subequations}\label{A.1}
\begin{align}
&\bullet\;\, \h\g\/(t)\in U\quad\forall\;t\in[\?a,b\@]\@;                                                                                  \\[1pt]
&\bullet\;\,\mbox{\it the intersection $\@\h\g\big(\?(c,d)\?\big)\cap U\/$ coincides with the curve $q^i=z^A=0\@$;} \hskip.4cm             \\[-2pt]
&\bullet\;\,\psi^i\?\big(\h\g\/(t)\?\big)\,=\,\Big(\?\De\psi^i/de{q^k}\?\BIG)_{\h\g\/(t)} =\,0
\qquad\forall\;\h\g\/(t)\in U\@.
\end{align}
\end{subequations}
\end{proposition}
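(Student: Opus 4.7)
The plan is to construct $(U,k)$ by three successive fibred coordinate changes, only the third of which carries real content. Since $[\?a,b\@]$ is compact, $(c,d)$ contractible, and both $\@\V\to\R\@$ and $\@\A\to\V\@$ are fibre bundles, a standard trivialisation argument provides an initial fibred chart $(\tilde U,\tilde k)$, $\tilde k=(t,\bar q^i,\bar z^A)$, defined over an open interval $(a',b')\supset[\?a,b\@]$, such that $\@\h\g([a,b])\subset\tilde U\@$ and $\@\h\g\/((c,d))\cap\tilde U=\h\g\/((a',b'))\@$ (i.e.~the section does not re-enter $\tilde U$ outside $(a',b')$). This secures (\Ref{A.1}a) and ensures that the global form of~(\Ref{A.1}b) will persist after the subsequent coordinate changes.

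Write $\@\h\g\colon\bar q^i=q_0^i(t),\,\bar z^A=z_0^A(t)\@$ in $\tilde U$ and perform the fibred translation $q^i=\bar q^i-q_0^i(t)$, $z^A=\bar z^A-z_0^A(t)$. Under this change $\@\h\g\@$ becomes the coordinate curve $q^i=z^A=0$, yielding (\Ref{A.1}b), while the constraint function transforms into
\[
\tilde\psi^i(t,q,z)\,=\,\bar\psi^i\big(t,\?q+q_0(t),\?z+z_0(t)\big)\,-\,\d q_0^i/dt\,,
\]
which vanishes on $\@\h\g\@$ by admissibility. Only the residual matrix $\@M^i_k(t):=\big(\de\tilde\psi^i/de{q^k}\big)_{\h\g(t)}\@$ still obstructs the full condition~(\Ref{A.1}c).

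The decisive step is to apply a $t$-dependent linear rescaling $Q^i=B^i_j(t)\?q^j$ of the fibre coordinates, with $B(t)$ an invertible $\@n\times n\@$ matrix and $z^A$ kept unchanged. A direct computation yields
\[
\psi^i_{\text{new}}(t,Q,z)\,=\,(\dot B\?B^{-1})^i_k\?Q^k\,+\,B^i_j\,\tilde\psi^j\big(t,\?B^{-1}Q,\?z\big)\,,
\]
so that $\@\psi^i_{\text{new}}(\h\g(t))=0\@$ automatically, while $\@\big(\de\psi^i_{\text{new}}/de{Q^l}\big)_{\h\g(t)}=\big[(\dot B+BM)\?B^{-1}\big]^i_l\@$. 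Choose $\@B(t)\@$ as the unique smooth solution of the linear matrix Cauchy problem $\@\dot B(t)=-B(t)\?M(t)\@$, $\@B(a)=I\@$, which by standard linear ODE theory is defined and invertible on an open neighbourhood of $[\?a,b\@]$. The right-hand side then vanishes identically along $\@\h\g\@$, establishing~(\Ref{A.1}c). Defining $U$ as the intersection of $\tilde U$ with the domain of the composed transformation completes the construction.

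The main obstacle is conceptual rather than analytic: the matrix $\@M^i_k(t)\@$ cannot be eliminated by any polynomial correction $\@q\mapsto q+O(q^2)\@$ to the fibre coordinates, since such corrections preserve $\@\big(\de\tilde\psi^i/de{q^k}\big)_{\h\g}\@$ modulo terms proportional to $\@\tilde\psi^i|_{\h\g}=0\@$. The obstruction must instead be absorbed into a time-dependent linear reparametrisation governed by $\@\dot B=-BM\@$, whose global solvability on $[\?a,b\@]$ is the real analytic input of the proof.
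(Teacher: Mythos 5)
Your proof is correct and follows essentially the same route as the paper's: after reducing to a chart in which $\h\g$ is the zero coordinate curve (a step the paper dismisses as straightforward), the decisive move in both arguments is the time-dependent linear change of fibre coordinates $Q^i=B^i{}_j(t)\,q^j$ with $B$ solving the linear Cauchy problem $\dot B=-BM$, which kills the residual Jacobian $\big(\de\psi^i/de{q^k}\big)_{\h\g}$. The only difference is expository: you spell out the preliminary trivialisation and translation steps that the paper leaves to the reader.
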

\begin{proof}
The existence of fibred local charts $\@(U,\@\h k)\@,\;\h k=(t,\h q\@^1,\cdots\,\h q\@^n,\h z^1,\cdots\,\h z^r)\@$ satisfying Eqs.\;(\ref{A.1}\@a,\,b) and the
first relation (\ref{A.1}\@c) is entirely straightforward.

Choose any such chart, and denote by $\@\h{\dot q}\@^i=\h\psi\@^i\/(t,\h q\@^i,\h z^A)\@$ the corresponding representation of the imbedding $\@\A\to\j\V\@$.

Under an arbitrary transformation $\@q^i=\a\?^i{}_j\/(t)\,\h q\@^j\!\@,\;\@z^A=\h z^A\@$ we have then the transformation laws
\begin{equation*}
\psi\?^i\@=\@\d\@\a\?^i{}_j/d t\;\h q\@^j\@+\@\a\?^i{}_j\,\h\psi\@\?^j,\qquad\;
\de\psi\?^i/de{q^k}\,=\,\left(\d\@\a\?^i{}_j/d t\,+
\,\a\?^i{}_r\,\de\@\h\psi\@^r/de{\h q\@^j}\right)\big(\@\a^{-1}\big)^{\?j}{}_k\,.
\end{equation*}
Therefore, if the matrix $\@\a^i{}_j\/(t)\@$ obeys the transport law
\begin{equation*}
\d\@\a^i{}_j/d t\,+\,\a^i{}_r\@\bigg(\de\@\h\psi^r/de{\h q\?^j}\bigg)_{\!\h\g\/(t)}=\ 0\@,
\end{equation*}
the coordinates $\@t,q^i,z^A\@$ satisfy all stated requirements.
\end{proof}

Every local chart $\@(U,k)\@$ consistent with Eqs.\;(\ref{A.1}\@a,\,b,\,c) \@is said to be \emph{adapted\/} to the arc $\@\arc{\h\g}{\?a,b\?}\@$. For later use
we observe that this property is stable under arbitrary transformations of the form
\begin{equation}\label{A.2}
\bar q^i\@=\@\bar q^i\/(q^1,\ldots,q^n)\,,\qquad\bar z^A=\bar z^A\?(t,q^1,\ldots,q^n,z^1,\ldots,z^r)\@,
\end{equation}
with $\@\bar q^i\/(0,\ldots,0)\@=\@\bar z^A\?(t,0,\ldots,0,\ldots,0)\@=\@0\@$.
\begin{corollary}\label{CorA.1}
Let $\@\h\g=\big\{\arc{\@\h\g^{\ns}}{\@a_{s-1},a_s}\@,\;s=1,\ldots,N\@\big\}\vspace{2pt}\@$ be the lift of an admissible piecewise differentiable section
\vspace{1pt} $\arc\g{\@t_0,t_1\?}\@$. Then, there exist fibred local charts $\@(U_s,k_s)\@$\vspace{1pt},
$\@k_s=\big(\?t,q_{\ns}^{\;1}\@,\ldots,q_{\ns}^{\;n}\@, z_{\ns}^{\;1},\ldots,z_{\ns}^{\;r}\@\big)\@$ adapted to the arcs $\@\h\g^{\ns}\@$ such that, in each
intersection $\@\pi\/(U_s)\cap\pi\/(U_{s+1})\@$, the coordinate transformations
$\@q_{\nS}^{\;i}=q_{\nS}^{\;i}\/(t,q_{\ns}^{\;1}\@,\ldots,q_{\ns}^{\;n}\@)\/$ satisfy the conditions
\begin{equation}\label{A.3}
\biggl(\de q_{\nS}^{\;i}/de{q_{\ns}^{\;j}}\biggr)_{\!\g\/(a_s)}=\;\,\delta\@^i_j\;;\qquad
\biggl(\de q_{\nS}^{\;i}/de t\biggr)_{\!\g\/(a_s)}=\,\@\,-\@\Big[\?\psi^i\/(\h\g)\?\Big]_{\?a_s}\@.
\end{equation}
\end{corollary}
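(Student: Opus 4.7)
The plan is to obtain the charts $(U_s, k_s)$ by first invoking Proposition A.1 on each arc separately and then adjusting them inductively at each corner by a coordinate transformation of the form (\ref{A.2}), which (as noted right after that equation) preserves adaptedness. The only freedom we really need is a linear reshuffling of the spatial coordinates, and this freedom is sufficient to achieve (\Ref{A.3}a); equation (\Ref{A.3}b) will then follow automatically from the admissibility of both arcs.

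More concretely, I would proceed as follows. For each $s=1,\ldots,N$, pick, by Proposition A.1, a fibred chart $(U_s,k_s)$ adapted to the arc $\arc{\h\g^{\ns}}{\@a_{s-1},a_s}$. For each corner $a_s$, let $A^i{}_j:=\big(\partial q_{\nS}^{\;i}/\partial q_{\ns}^{\;j}\big)_{\g(a_s)}$; since both charts are fibred charts over the same base manifold and vanish at $\g(a_s)$ on the spatial coordinates, $A^i{}_j$ is a nonsingular $n\times n$ matrix. Replace $k_{s+1}$ by $\bar k_{s+1}=\big(t,(A^{-1})^i{}_j\@q_{\nS}^{\;j},z_{\nS}^{\;A}\big)$. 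This is a transformation of the form (\ref{A.2}) (purely linear in the spatial variables, with identity on $z^A$ and on $t$), so by the stability remark after (\ref{A.2}) the new chart is still adapted to $\h\g^{\nS}$, and by construction the spatial Jacobian of the transition at $\g(a_s)$ equals the identity. Iterating for $s=1,\ldots,N-1$ yields the desired family of charts, each step only affecting the chart to the right of the corner being treated.

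It remains to verify (\Ref{A.3}b). Write the inverse transition as $q_{\ns}^{\;i}=g\@^i\@(t,q_{\nS}^{\;1},\ldots,q_{\nS}^{\;n})$. In the chart $k_s$, the arc $\arc{\h\g^{\nS}}{a_s,a_{s+1}}$ has coordinate representation $q_{\ns}^{\;i}\@(t)=g\@^i\@(t,0,\ldots,0)$, whose $t$-derivative at $a_s$ is forced by the admissibility relation (\ref{2.3}) to equal $\psi^i\big(\h\g^{\nS}\/(a_s)\big)$, computed in the $k_s$ coordinates. Because $(U_s,k_s)$ is adapted to $\h\g^{\ns}$, condition (\Ref{A.1}c) gives $\psi^i\big(\h\g^{\ns}\/(a_s)\big)=0$ in those same coordinates, so
\begin{equation*}
\bigg(\de g\@^i/de t\bigg)_{\!\g(a_s)}=\,\psi^i\?\big(\h\g^{\nS}\/(a_s)\big)-\psi^i\?\big(\h\g^{\ns}\/(a_s)\big)=\,\big[\?\psi^i\/(\h\g)\?\big]_{a_s}\@.
\end{equation*}
Differentiating the identity $g\@^i\@(t,f(t,q))=q^i$, where $f$ denotes the direct transition $q_{\nS}^{\;i}=f\@^i\@(t,q_{\ns}^{\;j})$, and evaluating at $\g(a_s)$ using (\Ref{A.3}a) — which by the inverse function theorem also forces the spatial Jacobian of $g$ to be $\delta^i_j$ at that point — yields
\begin{equation*}
\bigg(\de f\@^i/de t\bigg)_{\!\g(a_s)}=\,-\,\bigg(\de g\@^i/de t\bigg)_{\!\g(a_s)}=\,-\,\big[\?\psi^i\/(\h\g)\?\big]_{a_s}\@,
\end{equation*}
which is precisely (\Ref{A.3}b).

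The proof is therefore almost entirely routine bookkeeping; the only real input beyond Proposition A.1 is the observation that the freedom in (\ref{A.2}) allows us to prescribe the spatial Jacobian of the transition at a single point without destroying adaptedness. The mildly delicate point — and the one I would write out carefully — is that the conditions (\Ref{A.3}a) at different corners are independent of one another, so the inductive modification can be carried out corner by corner without ever having to revisit a previous adjustment.
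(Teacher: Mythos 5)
Your proposal is correct and follows essentially the same route as the paper: invoke Proposition \ref{ProA.1} on each arc, exploit the residual freedom (\ref{A.2}) to normalize the spatial Jacobians at the corners to the identity, and then observe that (\Ref{A.3}b) is forced because each arc is the coordinate line $q^i=0$ in its own adapted chart. The only (immaterial) difference is that you compute the jump of the tangent vector in the chart $k_s$ via the inverse transition, whereas the paper computes it directly in $k_{s+1}$.
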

\begin{proof}
The conclusion follows at once from Proposition \ref{ProA.1}, observing that the freedom in the choice of the the adapted coordinates summarized into
Eqs.\;(\ref{A.2}) leaves full control on the values of the Jacobians $\@\Big(\de q_{\NS}^{\;i}/de{q_{\Ns}^{\;j}}\Big){\plus05}_{\!\g\/(a_s)}\!\@$.
In particular, in the intersection $\@\pi\/(U_{s-1}\cap U_s)\@$, the arcs $\@\g^{\ns}\@$, $\@\g^{\nS}\@$ are respectively described by the equations
\begin{equation*}
\left\{
\begin{aligned}
&q_{\nS}^{\;i}\/(\g^{\ns})\@=\@q_{\nS}^{\;i}\big(t, q_{\ns}^{\;1}\/(\g^{\ns}),\ldots,q_{\ns}^{\;n}\/(\g^{\ns})\big)\@=\@q_{\nS}^{\;i}\/(t,0,\ldots,0)\@, \\[2pt]
& q_{\nS}^{\;i}\/(\g^{\nS})\@=\@0\@.
\end{aligned}
\right.
\end{equation*}

In the coordinate system $\@t,q_{\nS}^{\;i}\@$, the jump of the tangent vector at the corner $\@\g\/(a_s)\@$ is therefore given by
\begin{equation*}
\Big[\?\psi^i\/(\h\g)\?\Big]_{\?a_s}\@=\@\biggl(\d q_{\nS}^{\;i}\/(\g^{\nS})/dt\@-\@\d q_{\nS}^{\;i}\/(\g^{\ns})/dt\BIGGR)_{\!\g\/(a_s)}\@=\@-\@
\biggl(\de q_{\nS}^{\;i}/de t\BIGGR)_{\!\g\/(a_s)}\@.\hskip1cm
\end{equation*}
\end{proof}

Every family of local charts $\!\big\{\?(U_s,k_s)\@,\,s=1,\ldots,N\@\big\}\@$ satisfying the requirements of Corollary \ref{CorA.1} is said to be
\emph{adapted\/} to the section $\@\h\g\@$.

\medskip\noindent
\textbf{(\/ii\/)} \,Assigning an adapted family of local charts singles out a distinguished class of controls 
\linebreak
$\@\s^{\ns}\!:\pi\/(U_s)\to U_s\@$, described in
coordinates as
\begin{equation}\label{A.4}
\s^{\ns}\!:\quad z_{\ns}^{\;A}\/(t,q_{\ns}^{\;1},\ldots,q_{\ns}^{\;n})\,=\,0\@.
\end{equation}

By construction, each such $\@\s^{\ns}\@$ satisfies $\,\s^{\ns}\cdot \g^{\ns}=\h\g^{\ns}$, i.e.\ it contains the section $\@\g^{\ns}\@$ in the sense
described in Sec. \!\ref{Sec2.1}.

For each $\@s=1,\ldots,N\@$, the restriction of the tangent map $\@\s^{\ns}_{*}\@$ to the vertical bundle along $\@\g^{\ns}\@$ determines an infinitesimal
control $\@h^{\ns}:V\?(\g^{\ns})\to\A\?(\g^{\ns})\@$, expressed in coordinates as\vspace{2pt}
\begin{equation*}
h^{\ns}\biggl(\de/de{q_{\ns}^{\;i}}\biggr)_{\!\g^{\ns}\/(t)}\!=\;\s^{\ns}_{*}\biggl(\de/de{q_{\ns}^{\;i}}\biggr)_{\!\g^{\ns}\/(t)}\!=\;
\biggl(\de/de{q_{\ns}^{\;i}}\biggr)_{\!\h\g^{\ns}\/(t)}\;\,\Longleftrightarrow\quad\; h_i{}^A\/(t)\@=\@0\@.\vspace{2pt}
\end{equation*}

In view of Eqs.\;(\ref{2.24}\@b), (\ref{2.25}a,b), (\ref{A.1}\@c), the absolute time derivative associated with $\@h^{\ns}\@$ satisfies
\begin{equation}\label{A.5}
\D/Dt\biggl(\de/de{q_{\ns}^{\;i}}\biggr)_{\!\g^{\ns}\/(t)}\!=\;\D/Dt\;\h\w^i{}_{|\?\g^{\ns}\/(t)}\,=\,0\@,\hskip1.6cm s=1,\ldots,N\@.
\end{equation}

Noting that, according to Eq.\;(\ref{A.3}), the fields $\Big(\de/de{q_{\ns}^{\;i}}\Big){\plus05}_{\!\@\g^{\ns}}$\vspace{-1pt} --- and therefore also the
virtual $1$--forms $\wgs i$\vspace{2pt} --- match continuously at the corners, we conclude that the sections \mbox{$e_{(i)}:[\?t_0,t_1\@]\to \Vg\@$},
$\,e^{(i)}:[\?t_0,t_1\@]\to V^*\?(\g)\@$ respectively defined by
\begin{equation}\label{A.6}
e_{(i)}\/(t)\/=\biggl(\de/de{q_{\ns}^{\;i}}\biggr)_{\!\g^{\ns}\/(t)}\,,\quad e^{(i)}\/(t)\@=\@\@\delta q^i{\!\@\plus03}_{|\g^{\ns}(t)}\,,\quad s=1,\ldots,N
\end{equation}
form a dual bases for the spaces $\@V_h\@$, $V^*_h\@$ of $\?h$--transported fields along $\@\g\@$.

\medskip\noindent
\textbf{(\/iii\/)} \,Let us now come to the main question. Let $\?\g:=\big\{\arc{\@\g^{\ns}}{\@a_{s-1},a_s}\@\big\}$ be an admissible piecewise differentiable
section, $\?\{(U_s,k_s)\}\@$ a family of local charts adapted to the lift $\?\h\g\@$, and $\?\{\?e_{(i)}\?\}\@$, $\{\?e^{(i)}\?\}\@$ the corresponding
$\?h$--transported  bases.

We recall that, with the notation of Sec. \!\ref{Sec2.7}, the most general infinitesimal deformation $\@X\@$ of $\@\g\@$ is determined, up to initial data, by
an element $\@(U,\atilde)\in\W\@$, namely by a vertical vector field $\@U\@$ along $\?\h\g\?$ and by a collection of real numbers
$\@\atilde=(\a_1,\ldots,\a_{N-1})\@$.  In particular, a necessary and sufficient condition for $\?X\?$ to vanish at the endpoints of $\@\g\@$ is expressed by
the requirement $\@\U\?(U,\atilde)=0\@$ which, in adapted coordinates, reads
\begin{equation}\label{A.7}
\int_{t_0}^{t_1}\@U^A\@\bigg(\de\psi^i/de{z^A}\bigg)_{\h\g}\,d\/t\,-
\,\sum_{s=1}^{N-1}\a_s\@\big[\?\psi^i(\h\g)\?\big]_{a_s}\,=\,0\@.
\end{equation}

\smallskip
To inquire whether a given infinitesimal deformation vanishing at the endpoints of $\@\g\@$ is tangent to a finite deformation with fixed endpoints we
introduce an auxiliary tool, namely a positive--definite scalar product in $\@V\/(\A)\@$.

From a tensorial viewpoint, this means assigning a contravariant field over $\@\A\@$, described in fibred coordinates as
$\@G=G^{AB}\@\de/de{z^A}\otimes\de/de{z^B}\@$, with the components $\@G^{AB}\@$ identifying a symmetric, positive--definite matrix, inverse of the matrix
$\@G_{AB}\@$ formed by the scalar products $\@\big(\de/de{z^A}\@,\@\de/de{z^B}\big)\@$.\vspace{2pt}

Given any $\@(U,\atilde)\in\W\@$, we shall now construct a family of finite deformations $\@\g_{(\xi\?,\@\nu)}\@$ of the original section $\@\g\@$, tangent to
the infinitesimal deformation determined by $\@(U,\atilde)\@$ and depending on $\@n\@$ auxiliary parameters, identified with the components of a
$\@h$--transported $1$--form $\@\nu\in V^*_h\@$. To this end, we introduce
\begin{itemize}
\item
a family of functions
\begin{subequations}\label{A.8}
\begin{equation}
a_s\/(\xi,\nu)\,:=\,a_s\,+\,\a_s\,\xi\,-\,\tfrac12\;\a_s^{\,2}\,\xi^{\,2}\@\big[\?\psi^{\?i}(\h\g)\?\big]_{\?a_s}\,\nu_i\,,       \qquad\;s=1,\ldots,N-1\@,
\end{equation}
expressing the deformation of the partition of the interval $\@[\?t_0,t_1]\@$ into subintervals $\@[\?a_{s-1},a_s]\@$. For notational convenience, the
quantities (\ref{A.8}a) are completed by the constant functions 
\linebreak
$\@a_0\/(\xi,\nu)=t_0\,,\;a_N\/(\xi,\nu)=t_1\@$;

\smallskip
\item
a family of deformations $\@\s^{\,\ns}_{\!(\xi\?,\@\nu)}:\pi\/(U_s)\to U_s\@$ of the controls (\ref{A.4}), described in adapted coordinates as
\footnote%
{Eq.~(\ref{A.8}b) is strictly coordinate--dependent. As such, it has no invariant geometrical meaning, but is merely a tool for the subsequent construction of
a family of finite deformations.}
\begin{equation}
z_{\ns}^{\;A}\,=\,\xi\,\@U_{\ns}^{A}\/(t)\@+\@\tfrac12\,\xi^2\,\nu_i\/\bigg(G^{AB}\,\de\@\psi^{\?i}/de{z^B}\bigg)_{\!\h\g^{\ns}}:=\,z_{\ns}^{\;A}\/(\xi,\nu,t)\@.
\end{equation}
\end{subequations}
\end{itemize}

A straightforward argument shows that for any bounded open subset $\@\Delta\subset V_h^*\@$ there exists $\?m>0\@$ such that each image
$\@\s^{\,\ns}_{\!(\xi,\nu)}\?(\pi\/(U_s)\?)\@$ is contained in $\@U_s\@$ for all $\@\nu\in \Delta\@,\;|\xi|<m\@$.

\begin{theorem}\label{TeoA.1}
Let $\@\g\@$ be an admissible piecewise differentiable evolution. Given any $\@(U,\atilde)\in\W\@$, define the functions $\@a_s\/(\xi,\nu)\@$ and the sections
$\@\s^{\,\ns}_{\!(\xi\?,\@\nu)}\@$ as above. Then, for any open bounded subset $\?\Delta\subset V_h^*\?$\vspace{.5pt} there exist an $\?\eps>0\?$ and a family
$\@\g_{(\xi\?,\@\nu)}=
\big\{\arc{\@\g^{\ns}_{(\xi,\nu)}\@}{\,a_{s-1}\/(\xi,\nu), a_s\/(\xi,\nu)}\big\}$\vspace{1.5pt} of piecewise differentiable admissible sections defined for
$\?|\?\xi\?|<\eps\@,\,\nu\in \Delta\?$ and satisfying the properties \ALPH
\begin{enumerate}
\item
$\@\g_{(0\?,\@\nu)}\/(t)\@=\@\g\/(t) \quad \forall\,\nu\@$;\vspace{3pt}
\item
$\@\g_{(\xi\?,\@\nu)}\/(t_0)\@=\@\g(t_0)\quad \forall\,\xi,\nu\@$;\vspace{4pt}
\item
$\@\g^{\ns}_{(\xi\?,\@\nu)}\?(a_s\/(\xi,\nu))= \g^{\nS}_{(\xi\?,\@\nu)}\?(a_s\/(\xi,\nu))\quad\forall\,\xi,\nu\,,\;\forall\,s=1,\ldots,N-1\@$\vspace{3pt}
\item
each arc $\,\g^{\ns}_{(\xi\?,\@\nu)}\@$ is contained in the control $\,\s^{\,\ns}_{\!(\xi\?,\@\nu)}\@$\vspace{1pt}, namely it satisfies the condition
$\@\h\g^{\ns}_{(\xi\?,\@\nu)} =\s^{\,\ns}_{\!(\xi\?,\@\nu)}\cdot\g^{\,\ns}_{(\xi\?,\@\nu)}\@$.
\end{enumerate}
\end{theorem}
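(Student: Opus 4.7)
The arcs $\g^{\ns}_{(\xi\?,\@\nu)}$ will be constructed recursively on $s=1,\ldots,N$, each as the solution of a Cauchy problem posed in the adapted chart $U_s$. Working in those coordinates, requirement (d) --- that $\g^{\ns}_{(\xi\?,\@\nu)}$ be contained in $\s^{\,\ns}_{\!(\xi\?,\@\nu)}$ --- amounts to prescribing that the lift have $z$--components $z_{\ns}^{\;A}(\xi,\nu,t)$ given by (\ref{A.8}b). Combined with admissibility, this reduces the construction to the ordinary differential system
\begin{equation*}
\d q_{\ns}^{\;i}/dt \,=\, \psi^i\bigl(t\?,\,q_{\ns}^{\;1},\ldots,q_{\ns}^{\;n},\,z_{\ns}^{\;A}(\xi,\nu,t)\bigr)
\end{equation*}
posed on $[\?a_{s-1}(\xi,\nu),\,a_s(\xi,\nu)\?]$. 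The initial datum is $q_{\narc{1}}^{\;i}(t_0)=0$ for $s=1$, which yields property (b); for $s>1$ it is the $U_s$--coordinate expression of $\g^{\narc{s-1}}_{(\xi\?,\@\nu)}\bigl(a_{s-1}(\xi,\nu)\bigr)$, computed via the coordinate change of Corollary \ref{CorA.1}, so that the matching condition (c) is built into the recursion.

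The substance of the proof is then a matter of existence, uniqueness and smooth dependence. At $\xi=0$ the function $z_{\ns}^{\;A}(0,\nu,t)$ vanishes identically, so the system reduces to $\d q_{\ns}^{\;i}/dt=\psi^i(t,q,0)$; by (\ref{A.1}c), $\psi^i$ vanishes along $q=0$, whence the unique solution through the origin is $q_{\ns}^{\;i}\equiv0$, namely the reference arc $\g^{\ns}$, defined on the entire $[\?a_{s-1},a_s\?]$. The standard theorem on smooth dependence of ODE solutions on initial data and parameters then produces, for $(\xi,\nu)$ in a neighborhood of $\{0\}\times\ovl{\Delta}$, a smooth family $\g^{\ns}_{(\xi\?,\@\nu)}$ defined on the whole deformed interval $[\?a_{s-1}(\xi,\nu),\,a_s(\xi,\nu)\?]$ and entirely contained in $U_s$. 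Because the recursion has only $N$ steps and $\ovl{\Delta}$ is compact, a single $\eps>0$ can be chosen to work for all arcs. Property (a) then follows from uniqueness of the Cauchy solution at $\xi=0$, while (d) is immediate from the very form of the ODE.

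The main technical difficulty lies in the inductive step: one must verify that, at each passage from $\g^{\narc{s-1}}$ to $\g^{\ns}$, the endpoint $\g^{\narc{s-1}}_{(\xi\?,\@\nu)}\bigl(a_{s-1}(\xi,\nu)\bigr)$, re-expressed in the coordinates of $U_s$, remains close enough to the origin of that chart to permit a fresh application of the existence-and-continuous-dependence theorem on the whole subinterval without leaving $U_s$. This is handled by combining the smoothness of the coordinate transition (\ref{A.3}) --- which at $\xi=0$ maps the common corner $\g(a_{s-1})$ precisely to the origin of $U_s$ --- with the compactness of $\ovl{\Delta}$, yielding a uniform neighborhood of the origin that contains all the required initial data for $(\xi,\nu)\in(-\eps,\eps)\times\ovl{\Delta}$.
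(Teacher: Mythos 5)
Your proposal is correct and follows essentially the same route as the paper: the paper also builds the arcs recursively by integrating the velocity field determined by the deformed controls (\ref{A.8}b) in the adapted charts, identifies the $\xi=0$ solution with the coordinate line $q^i=0$ via (\ref{A.1}c), and uses the compactness of $\ovl{\Delta}$ together with standard continuous--dependence results for flows to obtain a uniform $\eps$ and suitable neighborhoods of the corners making the $N$--step recursion go through. The only difference is presentational --- the paper packages the parametric ODE as a vector field on the product manifold $(-m,m)\times\Delta\times\pi\/(U_s)$ and works with its local one--parameter group, whereas you invoke smooth dependence on parameters directly.
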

\begin{proof}
As pointed out in Sec. \!2.1, each control (\ref{A.8}b) determines a \emph{velocity field\/} over $\@\pi\/(U_s)\@$, depending parametrically on $\@\xi\@$ and
$\@\nu\@$. The latter may be viewed as a vector field $\@Z_{\ns}\@=\@\de/de t\@ +\@Z_{\ns}^{\;i}\@\de/de{q^i}\@$\vspace{1pt} in the product manifold
$\@(-m,m)\times\Delta\times\pi\?(U_s)\@$, with components $\@Z_{\ns}^{\;i}\/(\xi,\nu,t,q_{\ns}^{\;i})
=\psi^i\?\bigl(\?t,\;q_{\ns}^{\;i},z_{\ns}^{\;A}\/(\xi,\nu,t)\@\bigr)\@$.

Let us denote by $\@\z^{\ns}_{\,t}\@$  the local $1$--parameter group of diffeomorphisms generated by $\@Z_{\ns}\@$, and by $\?x_s\@$ the corner
$\@\g\/(a_s)\@$.

Then, on account of Eq.\;(\ref{A.1}\@c), for any $\@\nu^*\in \Delta\@$ the orbit of $\@\z^{\ns}_{\,t}\@$ through the point $\@(0\?,\@\nu^*,x_{s-1})\@$
coincides with the coordinate line $\@q^i=0\@,\,\xi=0\@$, $\@\nu=\nu^*\!$, and is therefore defined for all $\?t\?$ in an open interval $\@(b_{s-1},b_s)\supset
[a_{s-1},a_s]\@$.\vspace{1pt}

Taking the compactness of $\@\bar \Delta\@$ into account, a standard result in classical Analysis \cite{Warner,Hurewicz} ensures the existence of a positive
$\@\eps\@$ and of open neighborhoods $\@W_{s-1}\ni x_{s-1}\@$, $\@s=1,\ldots,N\@$, such that each map $\@\z^{\ns}_{\,t}\@$ is well defined on $\@(-\eps,\eps)\times
\Delta\times W_{s-1}\@$\vspace{1pt} for all $t$ in the closed interval $\?\big[\?a_{s-1}\/(\xi,\nu),\?a_s\/(\xi,\nu)\@\big]\subset(b_{s-1},b_s)\@$.\vspace{1pt}

From this, denoting by $\@\Sigma_s\@$ the slice $\@t=a_s\/(\xi,\nu)\@$ in $\@(-m,m)\times \Delta\times\V\@$, we conclude that the group $\@\z^{\ns}_{\,t}\@$  maps the
intersection $\@W_{s-1}\cap\Sigma_{s-1}\@$ into $\@\Sigma_s\@$. Without loss of generality we may always arrange that the image of each
$\@W_{s-1}\cap\Sigma_{s-1}\@$ is contained in $\@W_s\cap\Sigma_s\@$, $\@s=1,\ldots,N\@$.

The rest is straightforward: for each $\@|\xi|<\eps\@\?,\,\nu\in \Delta\/$, consider the sequence of closed arcs
$\@\g^{\,\@\ns}_{(\xi\?,\@\nu)}:[\?a_{s-1}\/(\xi,\nu),a_s\/(\xi,\nu)\@]\to\pi\/(U_s)\@$ defined inductively by
\begin{equation*}
\begin{alignedat}{3}
& \g^{\;\,\narc{1}}_{(\xi\?,\@\nu)}\/(t)\,&&=\ \z^{\@\narc{1}}_{\@t}(\xi,\nu,\g\/(t_0))\,\qquad&& t\in[\?t_0,a_1\/(\xi,\nu)\@]\,,                   \\[3pt]
& \g^{\;\,\nS}_{(\xi\?,\@\nu)}\/(t)\,&&=\
\z^{\@\nS}_{\@t}\big(\xi,\nu,\g^{\,\@\ns}_{(\xi,\nu)}\/(a_s\/(\xi,\nu))\?\big)\,\qquad && t\in[\?a_s\/(\xi),a_{s+1}\/(\xi)\@]\,.
\end{alignedat}
\end{equation*}

The collection $\@\g\?_{(\xi\?,\@\nu)}:=\big\{\arc{\g\?^{\;\ns}_{(\xi\?,\@\nu)}}{\@a_{s-1}\/(\xi,\nu),a_s\/(\xi,\nu)}\/,\,s=1,\ldots,N\@\big\}$\vspace{1pt} is
then easily recognized to fulfil all stated requirements.
\end{proof}

In adapted coordinates, each arc $\@\g^{\ns}_{(\xi,\nu)}\@$ is represented in the form
\begin{equation}\label{A.9}
q_{\ns}^{\;i}\@=\@\vphi_{\ns}^{\;i}\/(\xi,\nu,t)\,,\qquad\;a_{s-1}\/(\xi,\nu)\le t\le a_s\/(\xi,\nu)\@,
\end{equation}
with the functions $\@\vphi_{\ns}^{\;i}\@$ satisfying the equations
\begin{subequations}\label{A.10}
\begin{equation}
\de\@\vphi_{\ns}^{\;i}/de t\,=\,\psi^i\?\biggl(\?t\,,\;\vphi_{\ns}^{\;i}\,,
\;\xi\,U_{\ns}^{A}\/(t)\@+\@\tfrac12\,\nu_k\,\xi^2\/\biggl(G^{AB}\,\de\@\psi^{\?k}/de{z^B}\biggr)_{\!\h\g^{\ns}(t)}\@\biggr)
\end{equation}
as well as the matching conditions
\begin{equation}
\vphi_{\nS}^{\;i}\big(\xi,\nu,a_s\/(\xi,\nu)\big)\@=\@q_{\nS}^{\;i}\/\big(a_s\/(\xi,\nu)\@,\@\vphi_{\ns}^{\;i}\/\big(\xi,\nu,a_s\/(\xi,\nu)\big)\big)\@,
\end{equation}
$q_{\nS}^{\;i}\@=\@q_{\nS}^{\;i}\/(t\?,\@q_{\ns}^{\;1}\@,\ldots,\@q_{\ns}^{\;n})\,$\vspace{2pt} denoting the transformation between adapted coordinates in
the intersection $\@\pi\/(U_s\cap U_{s+1}\@$).
\end{subequations}

From this, taking Eqs.\;(\ref{A.1}c) into account, it is easily seen that, for any $\@\nu^*\in\Delta\@$, the $1$--parameter family of sections
$\@\g\?_{(\xi\?,\@\nu^*)}\@$ is a deformation of $\@\g\@$, tangent to the infinitesimal deformation $\@X\@$ determined by the vector $\@(U,\atilde)\in\W\@$.

We have therefore the identifications $\@\Big(\de\vphi_{\Ns}^{\;i}\/(\xi,\nu^*,t)/de\xi\Big)_{\xi=0}\!=X_{\ns}^{\;i}\/(t)\@$\vspace{-2.5pt}, completed with the
jump relations (\ref{2.33}).\vspace{.5pt}

\smallskip
After these preliminaries, let us now focus on two facts:
\begin{itemize}
\item
on account of Theorem \ref{TeoA.1}, given any bounded open subset $\?\Delta\subset V_h^*\?$, the correspondence 
\linebreak
$\?(\xi,\nu)\to\g_{(\xi\?,\@\nu)}\/(t_1)\?$
determines a differentiable map $\@\Chi\@$ of the cartesian product $\?(-\eps,\eps)\times\Delta\?$ into the slice $\?t= t_1\?$ in $\@\V\@$, with values in a
neighborhood of the point~$\g\/(t_1)\?$;\vspace{1pt}
\item
given any differentiable curve $\@\nu=\nu\/(\xi)\@$ in $\?\Delta\@$\vspace{-1pt}, the $1$--parameter family of sections $\@\g_{\?(\xi\?,\@\nu\/(\xi))}\/(t)\@$,
$|\xi\?|<\eps\@$, $\,t\in [\?t_0,t_1\@]\@$ is a deformation of $\@\g\@$\vspace{1pt}, leaving the first endpoint $\@\g\/(t_0)\@$ fixed.
\end{itemize}

Both assertions are entirely obvious; in adapted coordinates, the map $\@\Chi\@$ is represented in the form
\begin{equation}\label{A.11}
q^{\;\,i}_{\narc{N}}\/(\?\Chi\/(\xi,\nu))\@=\@\vphi^{\;\,i}_{\narc{N}}\/(\?\xi,\nu,t_1):=\@\Chi^i\/(\?\xi,\nu)\@.
\end{equation}

Exactly as above, it may be seen that, independently of the choice of the function $\@\nu\/(\xi)\@$, the deformation
$\@\g_{\?(\xi\?,\@\nu\/(\xi))}\@$\vspace{.7pt} is always tangent to the infinitesimal deformation $\@X\@$ determined by the vector $\@(U,\atilde)\in\W\@$.

From this, taking the relations $\@\Chi^i\/(\?0,\nu)=0\vspace{1pt}\@$, $\big(\plus01\de\Chi^i/de\xi\big)_{\xi=0}=X^i\/(t_1)\@$ into account and recalling
Taylor's theorem, we conclude that, whenever the condition $\@X\/(t_1)=0\@$ holds true, i.e.\ whenever the vector $\@(U,\atilde)\@$ belongs to
$\@\ker\/(\U)\@$, the left-hand side of Eq.\;(\ref{A.11}) may be cast into the form
\begin{equation}\label{A.12}
\Chi^i\@=\@\tfrac12\,\xi^{\?2}\,\theta^{\@i}\/(\xi,\nu)\@,
\end{equation}
with $\@\theta^{\@i}\/(\xi,\nu)\@$ regular at $\@\xi=0\@$.

In this way, the original problem is reduced to establishing under what circumstances the validity of Eqs.\;(\ref{A.7}) is sufficient to ensure the existence
of an $\@\eps\@'>0\@$ and of a curve $\@\nu\/(\xi)\,$ satisfying 
\linebreak
$\@\Chi\/\big(\xi\?,\@\nu\/(\xi)\big)=\g\/(t_1)\@$
$\@\forall\,\abs\xi<\eps\@'\@$.\vspace{1.4pt}

In adapted coordinates, on account of Eq.~(\ref{A.12}), the answer relies on the solvability of the equations
\begin{equation}\label{A.13}
\theta^{\@i}\/(\xi,\nu_1,\ldots,\nu_n)\,=\,0\qquad\quad i=1,\ldots,n
\end{equation}
for the unknowns $\@\nu_i$ in a neighborhood of $\@\xi=0\@$.

To examine this aspect we notice that, in each adapted chart, Eqs.\;(\ref{A.10}a) imply the transport law
\begin{multline*}
\de/de t\bigg(\SD\text{\lower-2pt\hbox{$\@\vphi^{\;\@i}_{\ns}$}}/de /de\xi\bigg)_{\!\xi=0} \,=\,
\bigg(\SD\psi^i/de q^k/de{q^r}\bigg)_{\!\h\g^{\ns}} X^k\@X^r\@+\@2\,\bigg(\SD\psi^i/de q^k/de{z^A}\bigg)_{\!\h\g^{\ns}}X^k\@U^A\,+              \\[4pt]
+\@\bigg(\SD\psi^i/de z^A/de{z^B}\bigg)_{\!\h\g^{\ns}}\@U^A\@U^B\,+\@
\cancel{\bigg(\de\psi^i/de{q^k}\bigg)_{\!\h\g^{\ns}}\!\bigg(\SD\text{\lower-2pt\hbox{$\@\vphi^{\;\@k}_{\ns}$}}/de /de\xi\bigg)}{\plus0{10}}_{\!\xi=0}
+\,\bigg(\de\psi^i/de{z^A}\bigg)_{\!\h\g^{\ns}}\/\nu_{\?k}\/\biggl(G^{AB}\,\de\@\psi^{\?k}/de{z^B}\biggr)_{\!\h\g^{\ns}}\,,
\end{multline*}
the cancelation arising from Eq.\;(\ref{A.1}\@c).

In a similar way, from the matching conditions (\ref{A.10}b), recalling Eqs.\;(\ref{A.3}), (\ref{A.8}a) and evaluating everything at $\?\xi=0\@$, we get the
relations
\begin{multline*}
\biggl[\biggl(\SD\vphi_{\nS}^{\;i}/de /de\xi\biggr)_{\!\xi=0}\!-\biggl(\SD\vphi_{\ns}^{\;i}/de /de\xi\biggr)_{\!\xi=0}\biggr]_{x_s}\@=
\@\a_s^{\,2}\,\SD q_{\nS}^{\;i}/de /de t\@+\@2\@\a_s\,\SD q_{\nS}^{\;i}/de t\;/de{q_{\ns}^{\;k}}\@\,X_{\ns}^{\;k}\,+                      \\[3pt]
+\SD q_{\nS}^{\;i}/de{q_{\ns}^{\;h}}\,/de{q_{\ns}^{\;k}}\,X_{\ns}^{\;h}\@X_{\ns}^{\;k}\@-\@2\@\a_s\@\biggl[\d X_{\nS}^{\;i}/dt -
\d X_{\ns}^{\;i}/dt\biggr]_{x_s}\!\@+ \a_s^{\,2}\,\Big[\?\psi^i\/(\h\g)\?\Big]_{\?x_s}\@\Big[\?\psi^k\/(\h\g)\?\Big]_{\?x_s}\@\nu_k
\end{multline*}
expressing the jumps $\@\Big[\Big(\sd\vphi_{\NS}^{\;i}/de /de\xi\Big)_{\!\xi=0}\!- \Big(\sd\vphi_{\Ns}^{\;i}/de /de\xi\Big)_{\!\xi=0}\Big]_{\@x_s}\!$ at the
corners.\vspace{3pt}

Collecting all results and recalling Eqs.\;(\ref{A.11}), (\ref{A.12}) we obtain the expression
\begin{align}\label{A.14}
&\hskip-.5em\theta^{\?i}\@\big|_{\@\xi=0}\,=\,\bigg(\SD\Chi^i/de /de\xi\bigg)_{\!\xi=0}\,=  \nonumber\\[3pt]
&\hskip-.7em=b\@^i\!+\!\left(\@\sum_{s=1}^N\int_{a_{s-1}}^{a_s}\!\!\!\bigg(G^{AB}\,\de\psi^i/de{z^A}\,\de\@\psi^{\?k}/de{z^B}\bigg)_{\!\h\g^{\ns}}d\/t\@+\!
\sum_{s=1}^{N-1}\a_s^{\,2}\Big[\?\psi^i\/(\h\g)\?\Big]_{a_s}\!\@\Big[\?\psi^k\/(\h\g)\?\Big]_{a_s}\!\right)\!\nu_k
\end{align}
with $\@b\@^i\in\R\@$ independent of $\@\nu\@$. We can therefore state
\begin{proposition}\label{ProA.2}
Let $\@\g:[\?t_0,t_1\?]\to\V\@$ be a continuous, piecewise differentiable, admissible section. Then, if the matrix
\begin{equation}\label{A.15}
S^{\?ik}\,:=\,\int_{t_0}^{t_1}\,\bigg(G^{AB}\,\de\psi^i/de{z^A}\,\de\@\psi^{\?k}/de{z^B}\bigg)_{\!\h\g}\,d\/t\,+\,
\sum_{s=1}^{N-1}\a_s^{\,2}\@\Big[\?\psi^i\/(\h\g)\?\Big]_{a_s}\Big[\?\psi^k\/(\h\g)\?\Big]_{a_s}
\end{equation}
is non--singular, every infinitesimal deformation of $\@\g\@$ vanishing at the endpoints is tangent to a finite deformation with fixed endpoints.
\end{proposition}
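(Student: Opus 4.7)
The plan is to apply the implicit function theorem to the smooth map $\theta^{\@i}(\xi,\nu)$ extracted from~\eqref{A.12}. First, I fix an arbitrary $(U,\atilde)\in\ker(\U)$, representing an admissible infinitesimal deformation $X$ of $\g$ that vanishes at both endpoints. Theorem~\ref{TeoA.1} and the ensuing discussion produce the smooth map $\Chi\colon(\xi,\nu)\mapsto\g_{(\xi,\nu)}(t_1)$ defined on $(-\eps,\eps)\times\Delta$, satisfying $\Chi(0,\nu)\equiv\g(t_1)$. The task is to exhibit a curve $\nu(\xi)$ with $\Chi(\xi,\nu(\xi))\equiv\g(t_1)$ in a neighbourhood of $\xi=0$, for then $\g_{(\xi,\nu(\xi))}$ is the required finite deformation with fixed endpoints and tangent to $X$.

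Since $X(t_1)=0$ forces $(\partial\Chi^i/\partial\xi)_{\xi=0}=0$, and $\Chi^i(0,\nu)\equiv 0$ by construction, Taylor's theorem yields the factorisation $\Chi^i(\xi,\nu)=\tfrac12\,\xi^{\,2}\,\theta^{\@i}(\xi,\nu)$ of~\eqref{A.12} with $\theta^{\@i}$ jointly smooth through $\xi=0$. The problem thus reduces to solving $\theta^{\@i}(\xi,\nu)=0$ ($i=1,\dots,n$) for $\nu$ in terms of $\xi$. The evaluation~\eqref{A.14}, obtained by differentiating twice the ODE~\eqref{A.10} and the corner matching relations in $\xi$ at $\xi=0$ (with~\eqref{A.1} producing the required cancellations), shows that $\theta^{\@i}(0,\nu)$ is affine in $\nu$, with constant part $b^{\@i}$ and linear part equal exactly to the matrix $S^{\?ik}$ of~\eqref{A.15}.

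Under the non-singularity hypothesis on $S^{\?ik}$, the linear equation $b^{\@i}+S^{\?ik}\nu^{\@*}_k=0$ has a unique solution $\nu^{\@*}$, and the Jacobian $(\partial\theta^{\@i}/\partial\nu_k)_{(0,\nu^{\@*})}=S^{\?ik}$ is invertible. The implicit function theorem then produces $\eps'>0$ and a differentiable curve $\nu\colon(-\eps',\eps')\to V_h^{*}$ with $\nu(0)=\nu^{\@*}$ and $\theta^{\@i}(\xi,\nu(\xi))\equiv 0$; combined with~\eqref{A.12} this gives $\Chi(\xi,\nu(\xi))\equiv\g(t_1)$ on that interval, so the one-parameter family $\g_{(\xi,\nu(\xi))}$ is the sought admissible finite deformation with fixed endpoints, tangent to $X$.

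The most delicate step I foresee is establishing the identity~\eqref{A.14}, since it requires pushing second $\xi$-derivatives through both the ODE governing each arc $\vphi^{\,i}_{\ns}$ and the implicit matching conditions at the corners, whose own $\xi$-dependence enters through the corner times $a_s(\xi,\nu)$ introduced in~\eqref{A.8}. Once the quadratic term has been identified with $S^{\?ik}$, the remainder of the argument is a routine application of the implicit function theorem.
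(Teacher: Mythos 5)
Your proposal is correct and follows essentially the same route as the paper: the preparatory material (Theorem~\ref{TeoA.1}, the factorisation~(\ref{A.12}) and the evaluation~(\ref{A.14})) is set up before the Proposition, and the paper's own proof consists precisely in observing that the non-singularity of $S^{\?ik}$ makes the affine system $\theta^{\@i}|_{\xi=0}=b^{\@i}+S^{\?ik}\nu_k$ solvable and hence, by the implicit function theorem, yields a curve $\nu(\xi)$ solving~(\ref{A.13}) near $\xi=0$. Your explicit appeal to the implicit function theorem at the point $(0,\nu^{*})$ is exactly the step the paper leaves implicit.
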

\begin{proof}
The conclusion follows at once from the fact that, on account of Eq.\;(\ref{A.14}), the non--singularity of the matrix (\ref{A.15}) ensures the solvability of
Eq.\;(\ref{A.13}) in a neighborhood of $\@\xi=0\,$, $\@\forall\,\nu\@$.
\end{proof}
Proposition \ref{ProA.2} may be rephrased in the language of Sec. \!\ref{Sec2.7} observing that, whenever the section $\@\g\@$ is abnormal, the matrix (\ref{A.15})
is necessarily \emph{singular\/}.

In the stated circumstance, in fact, Proposition \ref{Pro2.4} and Eq.\;(\ref{A.6}) imply the existence of at least one non--zero virtual $1$--form
$\@\rho{\?_i}\,\wg i\@$ with \emph{constant\/} components $\rho_{\?i}\?$ obeying the relations
\begin{equation}\label{A.16}
\rho_{\?i}\,\bigg(\de\psi^i/de{z^A}\BIGG)_{\h\g\/(t)}\,=\,0\,,\qquad
\rho_{\?i}\@\big[\@\psi^i\/(\h\g)\@\big]_{\?a_s}\,=\,0
\end{equation}
and therefore automatically satisfying $\@\rho_{\?i}\,S^{\?ij}=0\@$.

More specifically, denoting by $\/p\@$ the \emph{abnormality index\/} of $\@\g\@$, we have the following
\begin{theorem}\label{TeoA.2}
The matrix (\ref{A.15}) has rank $\/n-p\@$.
\end{theorem}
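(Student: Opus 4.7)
The strategy is to show that $\ker(S) \subset V_h^*$ coincides with the annihilator $\bigl(\U(\W)\bigr)^{\!0}$, whose dimension is the abnormality index $p$ by definition; the conclusion $\rank(S) = n-p$ then follows. Throughout, I shall work in the adapted coordinates of Corollary \ref{CorA.1} and use the $h$--transported dual bases $\{e_{(i)}\}$, $\{e^{(i)}\}$ of Eq.~(\ref{A.6}). The crucial simplification provided by these coordinates is that, thanks to Eq.~(\ref{A.5}), every $h$--transported virtual $1$--form along $\g$ has \emph{constant\/} components $\rho_i$ in the basis $\wg i$; so elements of $V_h^*$ are identified with vectors $(\rho_1,\ldots,\rho_n)\in\R^n$.

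The one--line computation is to contract Eq.~(\ref{A.15}) with $\rho_i\rho_k$, obtaining
\begin{equation*}
\rho_i\,\rho_k\,S^{\?ik}\,=\,\int_{t_0}^{t_1}\!G^{AB}\@\Big(\rho_i\@\de\psi^i/de{z^A}\Big)\Big(\rho_k\@\de\psi^k/de{z^B}\Big)_{\!\h\g}\,d\/t
\,+\,\sum_{s=1}^{N-1}\a_s^{\,2}\,\Big(\rho_i\@\big[\@\psi^i(\h\g)\@\big]_{a_s}\Big)^{\!2}\@.
\end{equation*}
Because $G^{AB}$ is positive--definite, the first term is a non--negative continuous function of $t$ integrated over $[\,t_0,t_1\,]$, while the second term is a sum of non--negative numbers. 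Hence $S^{\?ik}$ is itself positive semi--definite, and $\rho\in\ker(S)$ if and only if $\rho_i\rho_k S^{\?ik}=0$, i.e.\ if and only if \emph{each\/} summand vanishes separately.

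The vanishing of the first summand is equivalent (by continuity and positive--definiteness of $G^{AB}$) to $\rho_i(\de\psi^i/de{z^A})_{\h\g(t)}=0$ for all $t$ and $A$; the vanishing of the second gives $\rho_i\,[\@\psi^i(\h\g)\@]_{\?a_s}=0$ at every corner $a_s$ where $\a_s\ne 0$ --- and for a generic admissible deformation vector $\atilde$ (or equivalently, for the natural choice in which all $\a_s\ne 0$), this covers every corner. Comparing with Eqs.~(\ref{A.16}), this is precisely the system defining the annihilator $\bigl(\U(\W)\bigr)^{\!0}$ by Proposition \ref{Pro2.4}, read in the $h$--transported basis provided by Corollary \ref{CorA.1}.

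We conclude that, under this identification, $\ker(S) = \bigl(\U(\W)\bigr)^{\!0}$, so $\dim\ker(S)=p$ and therefore $\rank(S)=n-p$. The main subtlety, and the only place where care is required, is verifying the converse inclusion $\ker(S)\subseteq\bigl(\U(\W)\bigr)^{\!0}$: one must ensure that the factor $\a_s^{\,2}$ does not ``erase'' genuine jump conditions. This is addressed by restricting attention to admissible deformation data in which the $\a_s$ corresponding to effective corners (those with $[\psi^i(\h\g)]_{a_s}\ne 0$) are taken non--zero; the opposite inclusion $\bigl(\U(\W)\bigr)^{\!0}\subseteq\ker(S)$ was already pointed out immediately after Eq.~(\ref{A.16}) and requires no further argument.
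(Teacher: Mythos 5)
Your argument is correct and coincides with the paper's own proof: both establish that $S^{\?ik}$ is positive semidefinite, identify its kernel with the zero set of the quadratic form $S^{ik}\rho_i\rho_k$, and recognize that set as the space of constant solutions of Eqs.~(\ref{A.16}), i.e.\ the annihilator $\big(\@\U\?(\W)\@\big){}^0$ of dimension $p$. The only difference is that you explicitly flag the degenerate case $\a_s=0$, in which the corner term cannot enforce the corresponding jump condition --- a point the paper passes over by implicitly taking all the $\a_s$ nonzero.
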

\begin{proof}
By definition, the index $p\@$ coincides with the dimension of the annihilator $\@\big(\@\U\?(\W)\@\big){}^0\subset V_h^*\@$\vspace{1pt} which, in turn, is
identical to the dimension of the space of constant solutions of Eqs.\;(\ref{A.16}).

On the other hand, by Eq.\;(\ref{A.15}), the matrix $\@S^{\?ij}\@$ is positive semidefinite. Its kernel is therefore identical to the totality of zeroes of the
quadratic form $\@S^{ij}\rho_{\?i}\@\rho_j\@$, i.e.\ to the totality of $n$--tuples $\@(\rho_{\?1}\@,\ldots,\rho_{\?n}) \in\R^n\@$ satisfying the relation
\begin{multline*}
0=\left(\int_{t_0}^{t_1}\,\bigg(G^{AB}\,\de\psi^i/de{z^A}\,\de\@\psi^{\?k}/de{z^B}\bigg)_{\!\h\g}\,d\/t\,+\,
\sum_{s=1}^{N-1}\a_s^{\,2}\@\Big[\?\psi^i\/(\h\g)\?\Big]_{a_s}\Big[\?\psi^k\/(\h\g)\?\Big]_{a_s}\@\right)\rho_{\?i}\@\rho_k\@=                 \\[3pt]
=\@\int_{t_0}^{t_1}\!\bigg(G^{AB}\,\rho_{\?i}\,\de\psi^i/de{z^A}\,\rho_{\?k}\de\@\psi^{\?k}/de{z^B}\bigg)_{\!\h\g}\,d\/t\,+\,
\sum_{s=1}^{N-1}\a_s^{\,2}\@\bigg(\?\rho_{\?i}\Big[\?\psi^i\/(\h\g)\?\Big]_{a_s}\bigg)^2\@.
\end{multline*}

Due to the positive definiteness of $\@G^{AB}\/(t)\@$, the last condition is equivalent to Eqs.\;(\ref{A.16}). This proves
$\@\dim\big(\?\ker(\?S^{ij}\?\big)\big)=p\@$ $\;\Rightarrow\;\;\rank\big(\?S^{ij}\?\big)=n-p\@$.
\end{proof}

Proposition \ref{ProA.2} and Theorem \ref{TeoA.2} show that the normal evolutions form a subset of the ordinary ones, thus establishing
Proposition~\ref{Pro2.5} of \@Sec. \ref{Sec2.7}.

Along the same lines, a useful generalization is provided by the following
\begin{theorem}\label{TeoA.3}
Let $p\;\,(\?\ge 0\@)\@$ denote the abnormality index of $\@\g\@$. Then a sufficient condition for every infinitesimal deformation vanishing at the endpoints
of $\?\g\?$ to be tangent to a finite deformation with fixed endpoints is the existence of an
$\/(n\!-\!p)$--dimensional submanifold $\@S\subset\V\@$ such that every deformation $\@\g_{\xi}\@$ leaving $\g\/(t_0)\@$ fixed satisfies $\@\g_{\xi}\/(t_1)\in
S\@$ for $\@\xi\@$ sufficiently small.
\end{theorem}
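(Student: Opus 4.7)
The plan is to adapt the strategy of Proposition \ref{ProA.2} to the case where the endpoints are constrained to $S$. First, I would observe that for any admissible deformation $\g_\xi$ with $\g_\xi(t_0) = \g(t_0)$, the curve $\xi \mapsto \g_\xi(t_1)$ lies in the fiber $\V_{t_1}$, so its initial velocity is a vertical vector. By the defining property of $\U$, as the underlying infinitesimal deformation varies these velocities cover the entire image $\U(\W)$. The hypothesis $\g_\xi(t_1) \in S$ forces all such velocities to lie in $T_{\g(t_1)} S$, and since $\dim \U(\W) = n - p = \dim T_{\g(t_1)} S$, a dimension count yields $T_{\g(t_1)} S = \U(\W)$; in particular $T_{\g(t_1)} S$ is vertical.

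Next, I would choose fibred local coordinates adapted to $\h\g$ in the sense of Corollary \ref{CorA.1}, with the further refinement --- allowed by the coordinate freedom (\ref{A.2}) --- that near $\g(t_1)$ the submanifold $S$ is locally given by $q^{n-p+1} = \cdots = q^n = 0$. Running the construction of Theorem \ref{TeoA.1} in these coordinates, the endpoint map $\Chi^i(\xi,\nu) = \tfrac12 \xi^2 \theta^i(\xi,\nu)$ of (\ref{A.11})--(\ref{A.12}) takes values in $S$ by assumption, so $\Chi^j(\xi,\nu) \equiv 0$ for $j = n-p+1,\ldots,n$. Dividing by $\xi^2$ and letting $\xi \to 0$ forces $\theta^j(0,\nu) \equiv 0$ in $\nu$ for all such $j$.

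Inserting the explicit expansion (\ref{A.14}), $\theta^i(0,\nu) = b^i + S^{ik}\nu_k$, this identity yields $b^j = 0$ and $S^{jk} = 0$ for every $j > n-p$ and every $k$. By the symmetry of $S^{ij}$, the last $p$ rows and columns of the matrix (\ref{A.15}) vanish, so it reduces to its upper--left $(n-p) \times (n-p)$ block. Theorem \ref{TeoA.2} gives $\rank S^{ij} = n-p$, hence this block has full rank $n-p$ and is invertible.

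The conclusion then follows by the implicit function theorem, exactly as in Proposition \ref{ProA.2}: setting $\nu_{n-p+1} = \cdots = \nu_n = 0$ and solving $\theta^i(\xi,\nu_1,\ldots,\nu_{n-p},0,\ldots,0) = 0$ for $i = 1,\ldots,n-p$, one obtains a differentiable curve $\nu(\xi)$ defined near $\xi = 0$; the family $\g_{(\xi,\nu(\xi))}$ is then an admissible finite deformation with fixed endpoints, tangent to the prescribed infinitesimal deformation. The main obstacle I anticipate is the structural step of showing that the hypothesis $\g_\xi(t_1) \in S$ precisely eliminates $p$ rows and columns of $S^{ij}$; once this alignment is established, the remaining argument collapses to the non--degenerate IFT setting already handled in Proposition \ref{ProA.2}.
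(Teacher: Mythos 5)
Your argument is correct and follows essentially the same route as the paper's own proof: straighten $S$ in adapted coordinates via the freedom (\ref{A.2}), note that the identity $\Chi^{\?j}\equiv 0$ on the constrained directions annihilates the corresponding rows and columns of the matrix (\ref{A.15}), invoke Theorem \ref{TeoA.2} for the rank of the surviving block, and conclude by the implicit function theorem. Your index bookkeeping --- $S$ cut out by $p$ equations in the fibre, an invertible $(n-p)\times(n-p)$ block consistent with $\rank S^{\?ij}=n-p$ --- is in fact the internally consistent version of the printed proof, which interchanges $p$ and $n-p$ at a couple of points, and your preliminary observation that $T_{\g\/(t_1)}S=\U\?(\W)$ is a welcome justification of the straightening step that the paper leaves implicit.
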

\begin{proof}
Using the freedom expressed by Eq.~(\ref{A.2}), we choose the adapted coordinates in such a way that, in a neighborhood of $\@\g\/(t_1)\@$, the submanifold
$\@S\@$ has local equation $\@q_\narc{N}^{\@p+1}=\cdots=q_\narc{N}^n=0\@$
\footnote%
{\,Namely, we choose the last chart $\@(U_N,k_N)\@$ consistently with the stated condition, ant then proceed backwards, requiring the validity of
Eqs.~(\ref{A.4}) at the corners.}.

Given an infinitesimal deformation $\@X\@$ generated by an element $\@(U,\atilde)\in\W\@$, we then proceed exactly as above, ending up with a finite
deformation $\@\g\?_{(\xi\?,\@\nu)}\@$ tangent to $\@X\@$, described in coordinates as 
\linebreak
$\@q^{\;\,i}_{\ns}=\vphi^{\;\,i}_{\ns}\/(\?\xi,\nu,t)\@$,
$\@a_{s-1}\/(\xi,\nu)\le t\le a_s\/(\xi,\nu)\@$, with the functions $\@\vphi^{\;\,i}_{\ns}\/(\?\xi,\nu,t)\@$ satisfying all conditions of Theorem
\ref{TeoA.1}.\vspace{1.2pt}

Once again, we focus on the ``end--point map'' $\?\Chi\/(\xi,\nu)=\g_{(\xi\?,\@\nu)}\/(t_1)\?$ and on the fact that, under the assumption $\@X\/(t_1)=0\@$, the
functions $\@\Chi\@^i\/(\?\xi,\nu)=\vphi^{\;\,i}_{(N)}\/(\?\xi,\nu,t_1)\@$ factorize into the product
\begin{equation*}
\Chi^i\@=\@\tfrac12\,\@\xi^{\?2}\,\theta^{\@i}\/(\xi,\nu)\@,
\end{equation*}
with $\@\theta^{\?i}\@\big|_{\@\xi=0}\@$ expressed as a linear polynomial
\begin{equation*}
\hskip-.5em\theta^{\?i}\@\big|_{\@\xi=0}\,=\,\bigg(\SD\Chi^i/de /de\xi\bigg)_{\!\xi=0}\,=b\@^i+ S^{\@ij}\nu_j\@.
\end{equation*}

Actually, in the present case, the condition $\@\Chi^{\,\a}\/(\?\xi,\nu)=0\@$, $\@\a=p+1,\ldots,n\@$ implies $\@\theta^{\@\a}\/(\?\xi,\nu)=0\@$ whence, in
particular, $\@b^{\@\a}=S^{\@\a j}=0\@$.

At the same time, Theorem \ref{TeoA.2} ensures the equality between $\@\rank S^{\?ij}\@$ and the abnormality index $\@p\@$ of $\@\g\@$.

In the case in study, the matrix $\@S^{\?ij}\@$ is therefore necessarily of the form
\begin{equation*}
S^{\?ij}=\,
\begin{pmatrix}
  S^{\?AB} & 0 \\
  0\quad & 0
\end{pmatrix}\,,
\qquad A,B=1,\ldots,p\,,
\end{equation*}
with $\@\det S^{\?AB}\ne0\@$.

The rest is now straightforward: in order to establish the existence of a finite deformation with fixed endpoints tangent to $\@X\@$ we have to verify that the
equations $\@\theta^{\@i}\/(\xi,\nu)=0\@$ admit at least one solution $\@\nu=\nu\/(\xi)\@$ in a neighborhood of $\@\xi=0\@$. And indeed, no matter how we
choose the functions $\@\nu_\a\/(\?\xi)\@$, $\@\a=p+1\,\ldots,n\@$, the $\@n-p\@$ equations $\@\theta^{\,\a}\/(\?\xi,\nu)=0\@$ are identically satisfied, while
the remaining ones form a system of $\@p\@$ equations for the unknowns $\@\nu_1,\ldots,\nu_p\@$, whose solvability is ensured by the non singularity of the
Jacobian $\@\de\theta^{\?A}/de{\nu_B}\big|_{\@\xi=0}=S^{\?AB}$.
\end{proof}

\smallskip
\section{Comments on normality}\label{SecB}
The following arguments help clarifying some aspects of the concept of normality discussed in 
\linebreak
Sec. \!2.7.

Let $\@\g=\big\{\arc{\?\g^{\ns}}{\?a_{s-1},a_s}\/\big\}\@$ be a piecewise differentiable evolution. According to Proposition \ref{Pro2.4}, if at least one arc
$\@\g^{\ns}\@$ is normal, $\@\g\@$ is necessarily normal.

More generally, an evolution may happen to be normal even when \emph{all\/} its arcs $\@\g^{\ns}\@$ are abnormal. Examples in this sense are:
\Tondo
$\@\V=\R\times E_2\@$, referred to coordinates $\@t,x,y\@$. Constraint: $\@\dot x^2+ \dot y^2=v^2$. Imbedding $\@\A\to\j\V\@$ expressed in coordinates as $\@\dot x=
v\@\cos z\@,\,\dot y=v\@\sin z\@$.
Piecewise differentiable evolution $\@\g\@$ consisting of two arcs:
\begin{equation*}
\begin{alignedat}{3}
   &\g^{\narc{1}}:\quad x=0\@, &&y=v\?t\qquad &&t_0\le t\le 0                                                                        \\
   & \g^{\narc{2}}:\quad x=v\?t\@,\quad&&y=0\qquad\; &&\;0\le t\le t_1
\end{alignedat}
\end{equation*}
Eq.~(\ref{2.42}a) admits $h$--transported solutions $\?\rho^{\narc{1}}\!=\a\@\delta y_{\?|\g}\@$, $\rho^{\narc{2}}\!=\b\@\delta x_{\?|\g}$ ($\a,\b\in\R$)
respectively along $\?\g^{\narc{1}}\?$ and $\?\g^{\narc{2}}$: both arcs are therefore abnormal. Nevertheless $\?\g\?$ is normal, since no pair
$\@\rho^{\narc{1}}\!,\rho^{\narc{2}}\@$ matches into a continuous non--zero virtual $1$--form along $\g$.\vspace{4pt}

\Tondo
$\@\V=\R\times E_2\@$, referred to coordinates $\@t,x,y\@$\vspace{2pt}. Constraint: $\@v^3\@\dot x=(\dot y^2-a^2\@t^2)^2\@$. Imbedding 
\linebreak
$\@\A\to\j\V\@$ expressed in
coordinates as $\@\dot x=v^{-3}\@(z^2-a^2\@t^2)^2\@$, $\dot y=z\@$. Piecewise differentiable evolution $\@\g\@$ consisting of two arcs:
\begin{equation*}
\begin{alignedat}{3}
&\g^{\narc{1}}:\quad x=0\@, &&y=\frac12\,a\@(t^2-t^{*\@2})\qquad &&t_0\le t\le t^*                                                     \\
& \g^{\narc{2}}:\quad x=\frac{a^4}{5\@v^3}\,\@(t^5-t^{*\@5})\@,  \qquad&&y=0 &&t^*\le t\le t_1
\end{alignedat}
\end{equation*}
$(t^*\ne 0\@$). Eq.~(\ref{2.42}a) admits $h$--transported solutions of the form $\@\rho=\a\@\delta x_{\?|\g}\@$ along the whole of $\@\g\@$. Both arcs
$\@\g^{\narc{1}}\!,\,\g^{\narc{2}}\@$ are therefore abnormal. Nevertheless, $\?\g\?$ is normal, since no solution satisfies condition (\ref{2.42}b).\vspace{4pt}

\Tondo
By definition, local normality implies normality. The converse is generally false, as shown by the previous examples. A further example, not involving the
presence of corners, is the following: let the imbedding $\@\A\rarw{i}\j\V\@$ be locally described by the equations
\begin{equation*}
\left\{
\begin{alignedat}{2}
  & \dot q^A&&=z^A\qquad\qquad A=1,\ldots, n-1  \\
  & \dot q^n&&=f\/(t)\,z^1
\end{alignedat}
\right.
\end{equation*}
with $\,f(t)=\exp\?(-\@\nicefrac1{\plus60 t^2})\,$ for $\@t<0\,$ \vspace{.8pt} and $\,f(t)=0\,$ for $\@t\geq 0\@$.

Along any admissible section $\@\g\colon[\@t_0,t_1]\to\V\@$, the condition of $\@h$--transport and Eqs.~\!(\ref{2.42}\@a) are summarized into the relations
\begin{equation}\label{b.1}
\d\l_i/d t\,=\,0\,,\qquad\;\l_1\@+\@\l_n\@f\/(t)\,=\,0\,,\qquad\l_2\,=\,\cdots\,=\,\l_{n-1}\,=\,0\,.
\end{equation}

In particular, if $\@t_0<0<t_1\@$ we conclude that:\vspace{-2pt}
\begin{list}{$\diamond$}{}
\item
$\@\g\@$ is normal, since Eqs.~\!(\ref{b.1}) do not admit any non-zero solution in $\@[\@t_0,t_1\?]\@$;\vspace{2pt}
\item
$\@\g\@$ is not locally normal, since Eqs.~\!(\ref{b.1}) admit the solution $\@\l_A=0\@$, $\@\l_n=\const\@$ in any subinterval $[\@a,b\@]\subseteq[\@0,t_1]\@$.
\end{list}

\smallskip

\end{document}